\title{Non-uniform complexity via non-wellfounded proofs} %TODO Please add
\titlerunning{Non-uniformity via non-wellfoundedness} %TODO optional, please use if title is longer than one line
\author{Gianluca Curzi}{University of Birmingham, UK \and  \url{http://gianlucacurzi.com} }{g.curzi@bham.ac.uk}{}{}%TODO mandatory, please use full name; only 1 author per \author macro; first two parameters are mandatory, other parameters can be empty. Please provide at least the name of the affiliation and the country. The full address is optional
\author{Anupam Das}{University of Birmingham, UK \and  \url{https://anupamdas.com}. }{a.das@bham.ac.uk}{}{}
\authorrunning{G. Curzi and A. Das} %TODO mandatory. First: Use abbreviated first/middle names. Second (only in severe cases): Use first author plus 'et al.'
\keywords{Cyclic proofs, non-wellfounded proof-theory, non-uniform complexity, polynomial time, safe recursion, implicit complexity} %TODO mandatory; please add comma-separated list of keywords
\newcommand{\dfn}{:=}
\renewcommand{\emptyset}{\varnothing}
\renewcommand{\setminus}{-}
\renewcommand{\epsilon}{\varepsilon}
\newcommand{\red}[1]{{\color{red}#1}}
\newcommand{\blue}[1]{{\color{blue}#1}}
\newcommand{\purple}[1]{{\color{purple}#1}}
\definecolor{mygreen}{rgb}{0, 0.5, 0}
\newcommand{\green}[1]{{\color{mygreen}#1}}
\newcommand{\orange}[1]{{\color{orange}#1}}
\newcommand{\black}[1]{{\color{black}#1}}
\newcommand{\anupam}[1]{\todo[color=red]{A: #1}}
\newcommand{\Achange}[1]{\black{#1}}
\newtheorem{thm}{Theorem}
\newtheorem{lem}[thm]{Lemma}
\newtheorem{prop}[thm]{Proposition}
\newtheorem{cor}[thm]{Corollary}
\theoremstyle{definition}
\newtheorem{defn}[thm]{Definition}
\newtheorem{exmp}[thm]{Example}
\newtheorem{conv}[thm]{Convention}
\theoremstyle{remark}
\newtheorem{rem}[thm]{Remark}
\newcommand{\Nat}{\mathbb{N}}
\newcommand{\RR}{\mathbb R}
\newcommand{\nor}[1]{#1_{1;0}}
\newcommand{\safnor}[1]{#1_{1;1}}
\newcommand{\RRnor}{\nor\RR}
\newcommand{\RRsafnor}{\safnor\RR}
\newcommand{\permpref}{\subset}
\newcommand{\permprefeq}{\subseteq}
\newcommand{\n}{N}
\newcommand{\sq}{\Box}
\newcommand{\sqn}{{\sq\n}}
\newcommand{\sn}{\sqn}
\newcommand{\lists}[3]{\overset{#1}{\overbrace{{#2},{\ldots},{#3}}} }
\newcommand{\ptime}{\mathbf{P}}
\newcommand{\elementary}{\mathbf{ELEMENTARY}}
\newcommand{\f}{\mathbf{F}}
\newcommand{\fptime}{\f\ptime}
\newcommand{\felementary}{\f\elementary}
\newcommand{\FPpoly}{\mathbf{FP} / \mathit{poly}}
\newcommand{\Ppoly}{\mathbf{P} / \mathit{poly}}
\newcommand{\Lpoly}{\mathbf{L} / \mathit{poly}}
\newcommand{\FEpoly}{\felementary / \mathit{poly}}
\renewcommand{\succ}[1]{\mathsf{s}_{#1}}
\newcommand{\suc}[1]{\mathsf{s}_{#1}}
\newcommand{\pred}{\mathsf{p}}
\newcommand{\cnd}{\mathsf{cond}}
\newcommand{\s}[1]{\vert {#1}\vert}
\newcommand{\model}[1]{\denot{#1}}
\newcommand{\denot}[1]{f_{#1}}
\newcommand{\sumlen}[1]{|\hspace{-.15em}|#1|\hspace{-.15em}|}
\newcommand{\sexp}{\mathcal{E}}
\newcommand{\cconc}{\mathcal{C}}
\newcommand{\incrementation}{\mathcal{I}}
\newcommand{\completeness}{\mathcal{F}}
\newcommand{\primrec}{\mathcal{R}}
\newcommand{\gfunction}{\mathcal{G}}
\newcommand{\hfunction}{\mathcal{H}}
\newcommand{\numeral}[1]{\underline{#1}}
\newcommand{\permi}[1]{\vec{#1}}
\newcommand{\saferec}{\mathsf{srec}}
\newcommand{\srec}{\saferec}
\newcommand{\saferecwfpo}[1]{\saferec_{#1}}
\newcommand{\srecwfpo}[1]{\saferecwfpo{#1}}
\newcommand{\srecpp}{\srecwfpo{\permpref}}
\newcommand{\ssrecwfpo}[1]{\mathsf{ssrec}_{#1}}
\newcommand{\ssrecpp}{\ssrecwfpo{\permpref}}
\newcommand{\bc}{\mathsf{B}}
\newcommand{\B}{\bc}
\newcommand{\bcwfpo}[1]{\bc^{#1}}
\newcommand{\bcnorec}{\bc^{-}}
\newcommand{\bcpp}{\bcwfpo{\permpref}}
\newcommand{\seqar}{\Rightarrow}
\newcommand{\der}{\mathcal{D}}
\newcommand{\pder}[1]{\der_{#1}}
\newcommand{\rd}{{R}_\der}
\newcommand{\derrd}{\langle \der, \rd \rangle}
\newcommand{\bud}{\mathit{Bud}}
\newcommand{\close}[1]{{C}_{#1}}
\newcommand{\open}[1]{{O}_{#1}}
 \newcommand{\hnu}{\nu_0}
\newcommand{\id}{\mathsf{id}}
\newcommand{\wk}{\mathsf{w}}
\newcommand{\exch}{\mathsf{e}}
\newcommand{\cut}{\mathsf{cut}}
\newcommand{\boxlef}{\sq_l}
\newcommand{\boxrig}{\sq_r}
\newcommand{\sql}{\boxlef}
\newcommand{\sqr}{\boxrig}
\newcommand{\zero}{0}
\newcommand{\unit}{1}
\newcommand{\com}{\mathsf{dis}}
\newcommand{\rules}{\mathsf{r}}
\newcommand{\relaxed}{\{\cnd_\sq,\cnd_\n, \suc 0, \suc 1, \id\}}
\newcommand{\cyclic}{\mathsf{C}}
\newcommand{\cbc}{\cyclic\bc}
\newcommand{\ncbc}{\cyclic\mathsf{NB}}
\newcommand{\nucbc}{\mathsf{nu}\bc}
\newcommand{\cndl}{|\cnd|}
\begin{document}

\maketitle

\begin{abstract}
Cyclic and non-wellfounded proofs are now increasingly employed to establish metalogical results in a variety of settings, in particular for type systems with forms of (co)induction. Under the Curry-Howard correspondence, a cyclic proof can be seen as a typing derivation `with loops', closer to low-level machine models, and so comprise a highly expressive computational model that nonetheless enjoys excellent metalogical properties.

In recent work, we showed how the cyclic proof setting can be further employed to model computational complexity, yielding characterisations of the polynomial time and elementary computable functions. These characterisations are `implicit', inspired by Bellantoni and Cook's famous algebra of safe recursion, but exhibit greater expressivity thanks to the looping capacity of cyclic proofs.

In this work we investigate the capacity for \emph{non-wellfounded   } proofs, where finite presentability is relaxed, to model non-uniformity in complexity theory. In particular, we present a characterisation of the class $\FPpoly$ of functions computed by polynomial-size circuits. While relating non-wellfoundedness to non-uniformity is a natural idea, the precise amount of irregularity, informally speaking, required to capture $\FPpoly$ is given by proof-level conditions novel to cyclic proof theory. Along the way, we formalise some (presumably) folklore techniques for characterising non-uniform classes in relativised function algebras with appropriate oracles.
\end{abstract}

	\section{Introduction}

\emph{Non-wellfounded proof theory} is the study of possibly infinite (but finitely branching) proofs, where    appropriate global correctness criteria guarantee logical consistency. 
This area originates (in its modern guise) in the context of the modal $\mu$-calculus~\cite{niwinski1996games,dax2006proof}, serving as an alternative framework to manipulate least and greatest fixed points, and hence to model inductive and coinductive reasoning. 
Since then, non-wellfounded proofs have been widely investigated in many respects, such as predicate logic~\cite{brotherston2011sequent, BerardiT19},  algebras~\cite{das2017cut,DP18},   arithmetic~\cite{Simpson17, BerardiT17, das2018logical},  proofs-as-programs  interpretations~\cite{baelde2016infinitary,DeS19,Das2021-preprint,Kuperberg-Pous21, Das2021}, and continuous cut-elimination~\cite{mints1978finite, fortier2013cuts}.  Special attention in these works is drawn to \emph{cyclic} (or \emph{regular}) proofs, i.e.~non-wellfounded proofs with only finitely many distinct subproofs, 
comprising a natural notion of finite presentability in terms of (possibly cyclic) directed graphs. 

The  \emph{Curry-Howard} reading of non-wellfounded proofs has revealed a deep connection between proof-theoretic properties and computational behaviours~\cite{Das2021-preprint,Kuperberg-Pous21, Das2021}. On the one hand, the typical correctness conditions ensuring consistency, called \emph{\Achange{progressing}} (or \emph{validity}) criteria,  correspond to totality:  functions computed by progressing proofs are always well-defined on all arguments. 
On the other hand, regularity has a natural counterpart in the notion of  \emph{uniformity}: circular  proofs can be properly regarded as programs,  i.e.~as finite sets of machine instructions, thus having a `computable' behaviour.

In a recent work~\cite{CurziDas}, the authors extended these connections between non-wellfounded proof theory and computation to the realm of \emph{computational complexity}.
We introduced the proof systems $\cbc$ and $\ncbc$ capturing, respectively,  the class of functions computable in polynomial time ($\fptime$) and the elementary functions ($\felementary$). These proof systems are defined by identifying global conditions on circular progressing proofs motivated by ideas from \emph{Implicit Computational Complexity} (ICC). 
% the study of machine-free characterisations of complexity classes (broadly construed).

ICC, broadly construed, is the study of machine-free (and often bound-free) characterisations of complexity classes. 
One of the seminal works in the area is Bellantoni and Cook's function algebra $\bc$ for $\fptime$ based on \emph{safe recursion} \cite{BellantoniCook}.
 The prevailing idea behind safe recursion (and its predecessor, \emph{ramified recursion} \cite{Leivant91}) is to partition function arguments  into `safe' and `normal', namely writing $f(x_1, \dots, x_{m}; y_1, \dots, y_{n})$ when $f$ takes $m$ normal inputs $\vec x$ and $n$ safe inputs $\vec y$. In functions of $\bc$, the recursive parameters are always normal arguments, while recursive calls can only appear in safe position; 
 hence, no recursive call can be used as recursive parameters of other previously defined functions.
Our system $\cbc$ morally represents a cyclic proof theoretic formulation of $\bc$. 
% Bellantoni and Cook's algebra for safe recursion  $\bc$~\cite{BellantoniCook}.

To establish the characterisation result for $\cbc$ we developed a novel function algebra for $\fptime$, called $\bcpp$. Roughly, the latter  extends $\bc$ with a more expressive recursion mechanism on a special well-founded preorder,  `$\permpref$',  based on permutation of prefixes of normal arguments, and whose definition requires relativisation of the algebra to admit \emph{oracles}. 
The characterisation theorem is then obtained by a `sandwich' technique, where the function algebras $\bc$ and $\bcpp$ serve, respectively, as lower and upper bounds for $\cbc$.

% \noindent
% \textbf{Contribution. } 
In this paper we investigate the  computational interpretation of more general  \emph{non-wellfounded} proofs, where finite presentability is relaxed in order to model \emph{non-uniform complexity}. 
In particular we consider the class $\FPpoly$ of functions computable in polynomial time by Turing machines with access to \emph{polynomial advice}.
Equivalently, $\FPpoly$ is the class of functions computed by families of polynomial-size circuits.
Note, in particular, that \Achange{$\FPpoly$} includes \emph{undecidable problems}, and so cannot be characterised by purely cyclic proof systems or usual function algebras, which typically have only computable functions.

We define the system $\nucbc$ (`non-uniform $\bc$'), allowing a form of non-wellfoundedness somewhere between arbitrary non-wellfounded proofs and full regularity, and show that $\nucbc$ duly characterises $\FPpoly$.
The characterisation theorem for $\nucbc$ relies on an  adaption of the aforementioned sandwich technique for  $\cbc$ to the current setting. 
This requires a  relativisation of both $\bc$ and $\bcpp$ to a set of oracles, which we call $\RR$, deciding properties of string length. 
As a byproduct of our proof method we also obtain new relativised function algebras for $\FPpoly$ based on safe recursion, $\bc(\RR)$ and $\bcpp(\RR)$; these are folklore-style results that, as far as we know, have not yet appeared in the literature. 

The overall structure of our result relies on a `grand tour' of inclusions, summarised as:
% \[
% \FPpoly \subseteq \bc (\RRnor) \subseteq \cbc (\RRnor) \subseteq \nucbc \subseteq \cbc(\RRsafnor) \subseteq \bcpp(\RRsafnor)\subseteq \fptime(\RR)\subseteq \FPpoly
% \]
\[
\FPpoly 
 \overset{\text{\tiny P.\ref{prop:fppoly-in-B(R0)}}}{\subseteq}
\bc (\RRnor) 
 \overset{\text{\tiny P.\ref{prop:b(f)-in-cbc(f)}}}{\subseteq}
\cbc (\RRnor) 
 \overset{\text{\tiny P.\ref{prop:cbc(R0)-to-nuB}}}{\subseteq}
\nucbc 
 \overset{\text{\tiny T.\ref{lem:non-wellfounded-oracles}}}{\subseteq}
\cbc(\RRsafnor) 
 \overset{\text{\tiny L.\ref{lem:rel-tran-lem}}}{\subseteq}
\bcpp(\RRsafnor)
 \overset{\text{\tiny P.\ref{prop:relativised-characterisation}}}{\subseteq}
\fptime(\RR)
 \overset{\text{\tiny P.\ref{prop:fppoly=fptime(RR)}}}{\subseteq}
\FPpoly
\]
% given, respectively, by Propositions~\ref{prop:fppoly-in-B(R0)},~\ref{prop:b(f)-in-cbc(f)}, and~\ref{prop:cbc(R0)-to-nuB}, \cref{lem:non-wellfounded-oracles},~\Cref{lem:rel-tran-lem}, Propositions~\ref{prop:relativised-characterisation} and~\ref{prop:fppoly=fptime(RR)}. 
While this may seem like a long route to take, the structure of our argument is designed so that each of the above inclusions are relatively simple to establish and, as we said, yields several intermediate characterisions of $\FPpoly$ of self-contained interest.

\noindent
\textbf{Related work.} Characterisations of non-uniform complexity classes in the style of ICC have been considered in the context of the $\lambda$-calculus~\cite{Mazza14}  and  variants of linear logic~\cite{MazzaT15-non-uniform}. The former captures the class $\Ppoly$, i.e., the languages decided  by families of polynomial circuits, while the latter also captures $\Lpoly$, i.e., the languages decided by families of polynomial size branching
programs (i.e.\ decision trees with sharing). 
% However, to the best of our knowledge no implicit characterisation for the functional  non-uniform complexity class  $\FPpoly$ is known. 
However this is the first work (as far as we know) that attempts to relate non-wellfoundedness in proof theory to non-uniformity in complexity theory.

The relativised proof systems and function algebras presented in this paper only query `bits of real numbers'. 
Proof systems based on linear logic and  implementing polytime computation over actual binary streams have been considered, e.g., in~\cite{HainryMP20}, which provide an ICC-like  characterisation of Ko's class of polynomial time computable functions over real numbers~\cite{Ko}.

\noindent
\textbf{Outline of the paper.} This paper is structured as follows. In~\Cref{sec2} we recall some preliminaries on non-uniform and implicit complexity, in particular a proof theoretic formulation of the algebra $\bc$.
In \cref{sec:cbc-nucbc} we recall the circular system $\cbc$ from \cite{CurziDas}, and introduce our new system $\nucbc$. 
% and their semantics  (\Cref{defn:non-wellfounded-proof-systems}). 
In~\Cref{sec3} we take an interlude to present some \emph{relativised} characterisations of $\FPpoly$, both in the machine setting and the implicit setting, that will later serve use in our grand tour of inclusions. 
In~\Cref{sec4} we employ those characterisations to establish the lower bound for $\nucbc$, and in
\cref{sec:nuB-in-cb(R)} we recast $\nucbc$ as a sort of relativised circular system.
Finally in \cref{sec:cb(R)-in-fppoly} we adapt results from \cite{CurziDas} translating circular proofs to an appropriate function algebra to the relativised setting, thereby achieving the upper bound for $\nucbc$.
% (\Cref{thm:main-result}) via an adaptation of the results of~\cite{CurziDas} to the new setting (\red{???}), introducing along the way relativisations to real numbers of both the function algebra $\bcpp$ (\red{???}) and the proof system $\cbc$ (\Cref{defn:oracles}).

% Due to space constraints, many proofs of results stated here are found in appendices.\anupam{delete this sentence?}

	\section{Preliminaries on computational complexity and safe recursion} \label{sec:prelims}\label{sec2}

Throughout this work we only consider (partial) functions on \emph{natural numbers}.
We write $|x|$ for the length of the binary representation of a number $x$, and 
for lists of arguments $\vec x=x_1, \dots, x_n$ we write $|\vec x|$ for the list $|x_1|, \dots, |x_n|$.

\subsection{Non-uniform complexity classes}
$\fptime$ is the class of (total) functions computable in polynomial time on a Turing machine.
The `non-uniform' class $\FPpoly$ is an extension of $\fptime$ that intuitively has access to a polynomial amount of `{advice}', determined only by the \emph{length} of the input.
Formally:

\begin{defn}
[Non-uniform polynomial time]
$\FPpoly$ is the class of functions $f(\vec x)$ for which there are strings $\alpha_{\vec n} \in \{0,1\}^*$, \Achange{of size polynomial in $\vec n$}, and some $f'(x,\vec x)\in \fptime$ with:
\begin{itemize}
    \item $|\alpha_{\vec n}| $ is polynomial in $\vec n$.
    \item $f(\vec x) = f'(\alpha_{|\vec x|}, \vec x)$.
\end{itemize}
\end{defn}
\Achange{The strings $\{\alpha_{\vec n}\}_{\vec n}$ represent the \emph{polynomial advice} given to a polynomial-time computation, here $f'(x,\vec x)$. 
Note that $f(\vec x)$ only `receives advice' depending on the lengths of its inputs, $\vec x$.}

Note, in particular, that $\FPpoly$ admits undecidable problems.
E.g.\ the function $f(x)=1$ just if $|x|$ is the code of a halting Turing machine (and $0$ otherwise) is in $\FPpoly$.
Indeed, the point of the class $\FPpoly$ is to rather characterise a more non-uniform notion of computation.
In particular, the following is well-known \Achange{(see, e.g., \cite[Theorem~6.11]{Arora-Barak})}:

\begin{prop}
\label{prop:fppoly-circuits}
$f(\vec x) \in \FPpoly$ iff there are polynomial-size circuits computing $f(\vec x)$.
\end{prop}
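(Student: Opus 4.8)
The plan is to establish the two directions of the equivalence separately, both of which are classical and folklore (this is why the statement is attributed to Arora--Barak). The key observation is that the two models of non-uniform computation --- a polynomial-time machine receiving polynomial advice strings $\alpha_{\vec n}$, and a family of polynomial-size Boolean circuits --- can simulate each other with only polynomial blowup.

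For the left-to-right direction, suppose $f(\vec x) = f'(\alpha_{|\vec x|}, \vec x)$ with $f' \in \fptime$ and $|\alpha_{\vec n}|$ polynomial in $\vec n$. By the standard Cook--Levin style simulation of polynomial-time Turing machines by polynomial-size circuits, for each input length $\vec n$ there is a circuit $C'_{\vec n}$ of size polynomial in the length of $(\alpha_{\vec n}, \vec x)$ --- hence polynomial in $\vec n$, since $|\alpha_{\vec n}|$ is itself polynomial in $\vec n$ --- computing $f'$ on inputs of that length. We then hard-wire the advice: define $C_{\vec n}$ to be $C'_{\vec n}$ with its first group of input gates fixed to the bits of $\alpha_{\vec n}$. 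The resulting family $\{C_{\vec n}\}$ has polynomial size and computes $f(\vec x)$, giving one implication.

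For the right-to-left direction, suppose $\{C_{\vec n}\}$ is a family of polynomial-size circuits computing $f(\vec x)$. Take the advice string $\alpha_{\vec n}$ to be a (suitably padded) encoding of the circuit $C_{\vec n}$; since $|C_{\vec n}|$ is polynomial in $\vec n$, so is $|\alpha_{\vec n}|$. Then let $f'(x, \vec x)$ be the function that parses $x$ as (an encoding of) a circuit and evaluates it on input $\vec x$; circuit evaluation is polynomial-time in the size of the circuit plus the input, so $f' \in \fptime$, and by construction $f(\vec x) = f'(\alpha_{|\vec x|}, \vec x)$. This is precisely the form required by the definition of $\FPpoly$.

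The main (minor) obstacle is bookkeeping around the fact that the advice and the circuit family are indexed by a \emph{tuple} of lengths $\vec n$ rather than a single length, and that inputs are natural numbers rather than fixed-width strings: one must be careful that ``polynomial in $\vec n$'' is interpreted uniformly (e.g.\ polynomial in $\sum_i n_i$ or $\max_i n_i$), and that the circuit family genuinely depends only on the lengths $|\vec x|$ and not on $\vec x$ itself --- which is exactly what both formulations encode. Beyond this the argument is entirely standard, so I would simply cite \cite[Theorem~6.11]{Arora-Barak} and sketch the advice$\leftrightarrow$circuit-encoding translation as above.
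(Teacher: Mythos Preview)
Your proposal is correct and is exactly the standard argument; note, however, that the paper does not actually prove this proposition at all --- it simply states it as well-known and cites \cite[Theorem~6.11]{Arora-Barak}, which is the same reference you invoke. So there is nothing to compare: you have written out the classical advice$\leftrightarrow$circuit-encoding translation that the paper (and Arora--Barak) take for granted.
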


\subsection{The Bellantoni-Cook algebra}
A \emph{two-sorted} function is a function $f(\vec x;\vec y)$ whose arguments have been delimited into `normal' ones ($\vec x$, left of `;'), and `safe' ones ($\vec y$, right of `;').

The two-sorted algebra $\bc$ was introduced in \cite{BellantoniCook} and is defined as follows: 

\begin{defn}
[Bellantoni-Cook]
$\bc$ is the smallest class of two-sorted functions containing,
\begin{itemize}
    \item $0(;) := 0$
    \item $\succ 0(; x) := 2x$
    \item $\succ 1 (;x) := 2x+1$
    \item $\pi^{m;n}_{j;}(x_0, \dots, x_{m-1}; y_0, \dots, y_{n-1}) := x_j$, whenever $j<m$.
    \item $\pi^{m;n}_{;j}(x_0, \dots, x_{m-1}; y_0, \dots, y_{n-1}) := y_j$, whenever $j<n$.
    \item $\pred(;x) = \lfloor \frac x 2 \rfloor$ 
    \item $\cnd (;w,x,y,z) := \begin{cases}
    x & w=0 \\
    y & w= 0 \mod 2, w \neq 0 \\
    z & w = 1 \mod 2
    \end{cases}$
\end{itemize}
and closed under:
\begin{itemize}
    \item (Safe composition)
    \begin{itemize}
        \item if $g(\vec x;)\in \bc$ and $h(\vec x,x;\vec y)\in \bc$ then also $f(\vec x;\vec y) \in \bc$ where $f(\vec x;\vec y):= h(\vec x,g(\vec x;);\vec y)$.
        \item if $g(\vec x;\vec y) \in \bc$ and $h(\vec x; \vec y,y)\in \bc$ then also $f(\vec x;\vec y)\in \bc$ where $f(\vec x;\vec y):= h(\vec x;\vec y, g(\vec x;\vec y))$ 
    \end{itemize}
    \item (Safe recursion on notation)
    if $g(\vec x;\vec y)\in \bc$ and $h_0(x,\vec x;\vec y,y),h_1(x,\vec x;\vec y,y)\in \bc$ then also $f(x,\vec x;\vec y) \Achange{\in} \bc$ where:
    \[
    \begin{array}{r@{\ := \ }ll}
        f(0,\vec x;\vec y) & g(\vec x;\vec y) \\
        f(\succ 0 x, \vec x;\vec y) & h_0(x,\vec x;\vec y, f(x, \vec x;\vec y)) & x\neq 0 \\
        f(\succ 1 x, \vec x; \vec y) & h_1(x, \vec x; \vec y, f(x, \vec x;\vec y))
    \end{array}
    \]
\end{itemize}
\end{defn}

Safe composition ensures that safe arguments may never appear in a normal position.
Note that, in the recursion scheme, the recursion parameter is always a normal argument, whereas recursive calls must appear in safe position.
Along with the constraints on safe composition, this ensures that the position of a recursive call is never the recursion parameter of another recursion.
This seemingly modest constraint duly restricts computation to polynomial time, yielding Bellantoni and Cook's main result:
\begin{thm}
[\cite{BellantoniCook}]
$f(\vec x) \in \fptime$ if and only if $f(\vec x;) \in \bc$.
\end{thm}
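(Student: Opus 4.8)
The statement to prove is the Bellantoni--Cook theorem: $f(\vec x)\in\fptime$ iff $f(\vec x;)\in\bc$. This is a classical two-part result and I would prove each inclusion separately.

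\textbf{Soundness ($\bc \subseteq \fptime$).} The plan is to prove, by induction on the construction of $f(\vec x;\vec y)\in\bc$, a strengthened invariant: there is a polynomial $p_f$ such that the output length satisfies $|f(\vec x;\vec y)| \le p_f(|\vec x|) + \max_j |y_j|$, i.e.\ the dependence on safe arguments is additive (growth only by a constant number of bits per safe input), while normal arguments may contribute polynomially. Simultaneously one shows $f$ is computable in time polynomial in $|\vec x|, |\vec y|$. The base cases are immediate. For safe composition the two length bounds compose correctly precisely because substituting into a safe position only adds the (polynomially bounded, in fact linearly bounded) length of $g$ to the additive term, and substituting into a normal position keeps things polynomial. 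The crucial case is safe recursion on notation: the recursion unfolds $|x|$ times; at each step $h_0$ or $h_1$ is applied, and since the recursive call sits in a \emph{safe} position, its length only grows additively at each of the $|x|$ stages, so the total output length is bounded by $|g| + |x|\cdot c$ for a constant $c$ plus a polynomial in the other normal arguments --- still polynomial. The running time is then the number of stages ($|x|$) times the cost of each $h_i$ call on polynomially-sized arguments, hence polynomial. The main obstacle here is bookkeeping the invariant through safe composition cleanly; one must be careful that $g$ feeding a normal slot of $h$ has a length bound depending only on \emph{normal} inputs, which is exactly what the first composition scheme enforces.

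\textbf{Completeness ($\fptime \subseteq \bc$).} The plan is to simulate polynomial-time Turing machines inside $\bc$. First I would establish a toolkit of auxiliary $\bc$-functions: basic arithmetic and string operations (concatenation, extracting the $i$-th bit, bounded length operations), and a general \emph{bounded recursion on notation} derived from safe recursion --- here one uses the standard trick that a function growing polynomially can be obtained by first using normal recursion to build a number of the right length and then `tiering up' safe values to normal position only via explicit polynomial bounds, never via recursion. The key technical lemma is that for any $\fptime$ function $f(\vec x)$ with time bound $q(|\vec x|)$, one can encode an entire halting computation: define, by safe recursion on a normal argument that counts computation steps (using a polynomial-length `clock' built from the inputs by a preliminary normal recursion), a function returning the configuration of the machine after that many steps; the step function is $\bc$-definable since it only inspects and locally updates a bounded window of the tape, which is pure case analysis using $\cnd$. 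Then $f(\vec x)$ is read off from the final configuration. The central difficulty --- and the heart of Bellantoni--Cook --- is that the recursion parameter (step counter) must be a \emph{normal} argument of polynomial length, yet we are only given $\vec x$; the resolution is that polynomially-long values \emph{can} be produced in normal position from $\vec x$ because polynomials are built by composition and a fixed finite number of `doubling' recursions, each of which is legitimate safe recursion. I would isolate this as a preliminary lemma: for every polynomial $p$ there is a $\bc$-function $\hat p(\vec x;)$ with $|\hat p(\vec x;)| \ge p(|\vec x|)$ (a `polynomial clock'), proved by induction on the structure of $p$.

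\textbf{Overall.} I expect the soundness direction to be routine once the additive-in-safe / polynomial-in-normal invariant is stated correctly, so the real work is completeness, and within completeness the single hardest point is arranging that the step-counter for the computation-simulating recursion legitimately lives in normal position --- everything else (the step function, decoding) is a finite, if tedious, case analysis built from $\cnd$, $\succ 0$, $\succ 1$, $\pred$ and projections. I would therefore organise the completeness proof as: (i) polynomial clocks in $\bc$; (ii) string-manipulation primitives in $\bc$; (iii) the step function of a fixed TM is $\bc$-definable; (iv) iterate it via safe recursion with the clock from (i) as recursion parameter; (v) decode the answer. Since this is a known theorem from \cite{BellantoniCook}, I would in practice cite it, but the above is the structure I would follow to reprove it.
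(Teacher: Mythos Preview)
The paper does not prove this theorem at all: it is stated with attribution to \cite{BellantoniCook} and used as a black box, with no proof or sketch given in the body or appendix. Your proposal is a faithful outline of the original Bellantoni--Cook argument (polymax bounding for soundness; TM simulation via a polynomial-length clock in normal position for completeness), and as such is entirely appropriate --- indeed it goes beyond what the paper itself provides. In context you would be right simply to cite the result, as the authors do.
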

\subsection{A proof-theoretic presentation of Bellantoni-Cook}
% Bellantoni and Cook in \cite{BellantoniCook} presented a two-sorted function algebra $\bc$ for polynomial-time computation. 

We shall work with a formulation of $\bc$ as a $S4$-style type system in sequent-calculus style, where modalities are used to distinguish the two sorts (similarly to \cite{Hofmann97}).

We consider \emph{types} (or \emph{formulas}) $\n$ (`safe') and $\sq \n$ (`normal') which intuitively vary over the natural numbers. 
We write $A,B,$ etc.\ to vary over types. A \emph{sequent} is an expression $\Gamma \seqar A$, where $\Gamma$ is a list of types (called the \emph{context} or \emph{antecedent}) and $A$ is a type (called the \emph{succedent}). For a list of types $\Gamma = \lists{k}{\n}{\n}$, we write $\sq\Gamma$ for $\lists{k}{\sn}{\sn}$. 

\begin{defn}
\label{defn:bc-derivations}
A \emph{$\bc$-derivation} is a (finite) derivation built from the rules in~\cref{fig:bc-type-system}. 
\end{defn}

\begin{figure}
$$
\small 
   \def\arraystretch{3}
   \begin{array}{c}
\vlinf{\id}{}{\n \seqar \n}{}
\quad 
\vliinf{\cut_\n}{}{\orange \Gamma \seqar B}{\orange \Gamma\seqar \n}{\orange \Gamma, \blue \n \seqar B}
\quad
\vliinf{\cut_\sq}{}{\orange \Gamma \seqar B}{\orange \Gamma\seqar \sn}{\blue{\sn},\orange \Gamma \seqar B}
\\
\vlinf{\wk_\n}{}{\orange\Gamma, \blue \n \seqar B}{\orange \Gamma \seqar B}
\quad
\vlinf{\wk_{\sq}}{}{\blue{\sn},\orange\Gamma \seqar B}{\orange \Gamma \seqar B}
\quad
\vlinf\exch{}{\orange \Gamma, \blue B,\red A, \purple{\Gamma'} \seqar C}{\orange \Gamma, \red A, \blue B, \purple{\Gamma'} \seqar C}
\quad 
\vlinf{\sql}{}{\blue \sn, \orange \Gamma \seqar A}{\orange \Gamma, \blue \n \seqar A}
\quad
\vlinf{\sqr}{}{\orange {\sq \Gamma} \seqar \sn}{ \orange {\sq \Gamma} \seqar \n}
\\
\vlinf{\zero}{}{\seqar \n}{}
\quad 
\vlinf{\unit}{}{\seqar \n}{}
\quad
\vlinf{\succ 0 }{}{\orange \Gamma \seqar A}{\orange \Gamma \seqar A}
\quad
\vlinf{\succ 1 }{}{\orange \Gamma \seqar A}{\orange \Gamma \seqar A}
\quad 
\vliiinf{\saferec}{}{\blue\sn, \orange \Gamma \seqar \n}{\orange \Gamma \seqar \n}{\blue\sn, \orange \Gamma  , \red \n \seqar \n}{\blue\sn, \orange \Gamma, \red \n \seqar \n}
\\ 
\vliiinf{\cnd_\n}{}{\orange \Gamma, \blue \n \seqar \n}{\orange \Gamma \seqar \n}{\orange \Gamma, \blue \n \seqar \n}{\orange \Gamma , \blue \n \seqar \n}
\qquad
\vliiinf{\cnd_\sq}{}{\blue{\sn},\orange \Gamma\seqar \n}{\orange \Gamma \seqar \n}{\blue{\sn},\orange \Gamma \seqar \n}{\blue{\sn}, \orange \Gamma  \seqar \n}
\\
\vliinf{\cndl_\n}{}{\orange \Gamma, \blue \n \seqar \n}{\orange \Gamma \seqar \n}{\orange \Gamma, \blue \n \seqar \n}
\qquad
\vliinf{\cndl_\sq}{}{\blue{\sn},\orange \Gamma\seqar \n}{\orange \Gamma \seqar \n}{\blue{\sn},\orange \Gamma \seqar \n}
\end{array}
$$
\caption{$\bc$ as a sequent-style type system.}
\label{fig:bc-type-system}
\end{figure}

The colouring of type occurrences in \cref{fig:bc-type-system} may be ignored for now, they will become relevant in the next section. 
% Derivations in this system will be called \emph{$\bc$-derivations}, and will be denoted $\der, \sexp, \ldots$. We write $\der: \Gamma \seqar A$ if the derivation $\der$ has end-sequent  $ \Gamma \seqar A$. 
We may write $\der= \rules (\der_1, \ldots, \der_n)$  
(for $ n\leq 3$)
if $\rules$ is the bottom-most inference step of a derivation $\der$ whose immediate subderivations are, respectively,  $\der_1, \ldots, \der_n$. As done in~\cite{CurziDas}, we shall assume w.l.o.g.~that sequents have shape ${{\sn}, \ldots, {\sn}, {\n}, \ldots, {\n} \seqar A}$, i.e.\ in the left-hand side all $\sn$ occurrences are placed before all $\n$ occurrences.

We construe the system of $\bc$-derivations as a class of two-sorted functions by identifying each rule instance as an operation on two-sorted functions as follows:

\begin{defn}
[Semantics of $\bc$]
\label{defn:semantics-bc}
Given a $\bc$-derivation $\der$ \Achange{of} $\lists{m}{\sn}{\sn}, \lists{n}{\n}{\n} \seqar A$ 
we define a two-sorted function $\denot \der (x_1, \dots, x_m;y_1, \dots, y_n)$ in~\Cref{fig:semantics-of-bc} by induction on the structure of $\der$ (all rules as typeset in \Cref{fig:bc-type-system}).
\end{defn}

\begin{figure}
\centering{
$$
    \small
\begin{array}{cc}
     \arraycolsep=1pt
        \def\arraystretch{1.2}
\begin{array}{rcl}
    \denot{\id} (;y)  & \dfn & y\\
  \denot {\cut_\n(\der_0, \der_1)} (\vec x;\vec y) &\dfn& \denot{\der_1}(\vec x; \vec y, \denot{\der_0} (\vec x; \vec y))\\
  \denot { \cut_\sq(\der_0, \der_1)} (\vec x;\vec y) & \dfn & \denot{\der_1}(\denot{\der_0} (\vec x; \vec y), \vec x; \vec y) \\
  \denot {\wk_\n(\der_0)} (\vec x; \vec y, y)   & \dfn  & \denot{\der_0} (\vec x;\vec y)\\
  \denot {\wk_\sq(\der_0)} (x, \vec x; \vec y) &\dfn& \denot{\der_0} (\vec x;\vec y)\\
  \denot {\exch_\n(\der_0)} (\vec x; \vec y, y,y', \vec y') &\dfn& \denot{\der_0} (\vec x; \vec y, y',y, \vec y')\\
  \denot {\exch_\sq(\der_0)} (\vec x, x, x', \vec x'; \vec y) & \dfn & \denot{\der_0} (\vec x, x',x , \vec x'; \vec y)\\
  \denot { \sql(\der_0)} (x, \vec x; \vec y) &\dfn& \denot{\der_0}(\vec x; \vec y, x)\\
  \denot {\sqr(\der_0)} (\vec x;)&  \dfn& \denot{\der_0} (\vec x;)\\
  \denot {i}(;) & \dfn & i   \\
   \denot {\succ i(\der_0)} (\vec x;\vec y) &\dfn& \succ i (; \denot{\der_0}(\vec x;\vec y)) 
\end{array}
     &  
     \arraycolsep=1pt
     \def\arraystretch{1.2}
     \begin{array}{rcl}
     \denot {\srec(\der_0, \der_1, \der_2)} (0, \vec x;\vec y) & \dfn &  \denot{\der_0} (\vec x;\vec y) \\
         \denot {\srec(\der_0, \der_1, \der_2)} (\succ i x, \vec x; \vec y) & \dfn & \denot{\der_{i+1}} (x, \vec x; \vec y, \\
         &&  \denot{\srec(\der_0, \der_1, \der_2)}(x, \vec x; \vec y))   \\ 
       \denot {\cnd_\n(\der_0, \der_1, \der_2)} (\vec x;\vec y,0) & \dfn &  \denot{\der_0} (\vec x;\vec y) \\
         \denot {\cnd_\n(\der_0, \der_1, \der_2)} (\vec x; \vec y, \succ i y) & \dfn & \denot{\der_{i+1}} (\vec x; \vec y, y)   \\ 
           \denot{\cnd_\sq(\der_0, \der_1, \der_2)} (0, \vec x;\vec y) & \dfn &  \denot{\der_0} (\vec x;\vec y) \\
         \denot{\cnd_\sq(\der_0, \der_1, \der_2)} (\succ i x, \vec x; \vec y) & \dfn & \denot{\der_{i+1}} (x, \vec x; \vec y)  \\ 
         \denot {\cndl_\n(\der_0, \der_1)} (\vec x;\vec y,0) & \dfn &  \denot{\der_0} (\vec x;\vec y) \\
         \denot {\cndl_\n(\der_0, \der_1)} (\vec x; \vec y, \succ i y) & \dfn & \denot{\der_1} (\vec x; \vec y, y) \\ 
           \denot{\cndl_\sq(\der_0, \der_1)} (0, \vec x;\vec y) & \dfn &  \denot{\der_0} (\vec x;\vec y) \\
         \denot{\cndl_\sq(\der_0, \der_1)} (\succ i x, \vec x; \vec y) & \dfn & \denot{\der_1} (x, \vec x; \vec y) 
     \end{array}
\end{array}
$$
}
\caption{Semantics of system $\bc$, where $i \in \{0,1\}$ and \Achange{$\succ i x \neq 0 $ and $\succ i y \neq 0$}. 
% $x\neq0$  (resp.~$y\neq 0$) in  $\succ 0 x$ (resp.~$\succ 0 y$).
}   
\label{fig:semantics-of-bc}
\end{figure}

This formal semantics exposes how $\bc$-derivations and functions in the algebra $\bc$ relate. The rule $\saferec$ in~\Cref{fig:bc-type-system} corresponds to safe recursion, and safe composition along safe parameters is expressed by  $\cut_\n$. 
Note, however, that the interpretation of $\cut_\sq$ in~\Cref{fig:semantics-of-bc} apparently does not satifsfy the required constraint on safe composition of a function $g$ along a normal parameter of a function $h$, which forbids the presence of safe parameters in $g$. 
However, this admission turns out to
be harmless, and we are able to obtain the following result that justifies the overloading of the notation `$\bc$':
\begin{prop}[\cite{CurziDas}]
\label{prop:bc-type-system-characterisation}
 $f(\vec x;\vec y)\in \bc$ iff there is a $\bc$-derivation $\der$ for which $\denot \der (\vec x; \vec y) = f(\vec x;\vec y)$.
\end{prop}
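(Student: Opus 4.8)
The plan is to prove both directions of the biconditional by induction, exploiting the formal semantics of \cref{fig:semantics-of-bc}.

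\medskip

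\noindent\textbf{From $\bc$-derivations to the algebra $\bc$.} First I would show that every $\denot\der$ lies in $\bc$, by induction on the structure of the $\bc$-derivation $\der$. The base cases ($\id$, $\zero$, $\unit$) and most inductive cases are routine, since each rule of \cref{fig:bc-type-system} is interpreted in \cref{fig:semantics-of-bc} by a scheme that is essentially one of the generators or closure operations of the algebra $\bc$: $\succ i$ is postcomposition with the corresponding successor generator, $\wk_\n,\wk_\sq$ introduce dummy arguments (definable via projections and composition), $\exch_\n,\exch_\sq$ permute arguments (likewise), $\sql$ just relabels a normal argument as safe, $\sqr$ is the identity on the function, $\cut_\n$ is safe composition along a safe argument, $\cnd_\n$ and $\cndl_\n$ are definable from $\cnd$ by safe composition, $\saferec$ is precisely safe recursion on notation, and $\cnd_\sq,\cndl_\sq$ are handled by a recursion whose step functions are projections. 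The one case requiring care is $\cut_\sq$: its semantics substitutes $\denot{\der_0}(\vec x;\vec y)$, which may itself contain safe arguments $\vec y$, into a \emph{normal} position of $\denot{\der_1}$, which is not literally an instance of Bellantoni--Cook safe composition along a normal parameter. I would dispatch this by appealing directly to \Cref{prop:bc-type-system-characterisation}, which is cited from \cite{CurziDas} and already asserts exactly this equivalence; alternatively, one observes that since $\denot{\der_0}$ is in $\bc$ it computes a polynomial-time function, and one can re-derive it using only normal inputs (duplicating the $\vec y$'s into normal position is harmless because $\bc = \fptime$), so the substitution is legitimate. Either way the inductive step goes through.

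\medskip

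\noindent\textbf{From the algebra $\bc$ to $\bc$-derivations.} For the converse I would argue by induction on the definition of $f \in \bc$, building a $\bc$-derivation $\der$ with $\denot\der = f$. Each generator of $\bc$ has an obvious derivation: $0,\succ 0,\succ 1$ from $\zero/\unit$ and $\succ i$ applied to $\id$ or weakenings; the projections $\proj{m;n}{j;}$ and $\proj{m;n}{;j}$ from $\id$ via suitable $\wk$'s, $\exch$'s and $\sql$'s; $\pred$ and $\cnd$ using $\cnd_\n$ (together with $\succ i$ and $\id$). For safe composition along a safe argument I compose the given derivations with $\cut_\n$; for safe composition along a normal argument, I use $\cut_\sq$ after first promoting the derivation of $g$ to conclude $\sq\Gamma \seqar \sn$ via $\sqr$ — here one uses the assumption that in $g(\vec x;)$ all inputs are normal, so its derivation has antecedent $\sq\Gamma$ and $\sqr$ applies. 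Safe recursion on notation is translated directly by the rule $\saferec$. One must throughout maintain the normal-form convention on sequents (all $\sn$'s before all $\n$'s), which is achieved by interleaving $\exch_\n,\exch_\sq$ rules; this is bookkeeping, not a genuine difficulty.

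\medskip

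\noindent\textbf{Expected main obstacle.} The genuinely delicate point is the mismatch at $\cut_\sq$: reconciling the liberal semantic interpretation of $\cut_\sq$ (which permits safe parameters in the substituted function) with the strict syntactic side-condition of Bellantoni--Cook safe composition along normal arguments. Since the statement of \Cref{prop:bc-type-system-characterisation} is attributed to \cite{CurziDas}, the cleanest route is simply to invoke that reference and present the proof above as a recollection of the argument, flagging that the soundness of the $\cut_\sq$ interpretation rests on the observation (established in \cite{CurziDas}) that allowing safe inputs into a to-be-composed-normally function does not take us outside $\fptime = \bc$, and hence the overloading of notation `$\bc$' is justified.
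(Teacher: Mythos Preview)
The paper does not give a proof of this proposition in the main text; it is cited from \cite{CurziDas}. However, the key technical ingredient appears in the appendix as \Cref{prop:cutbox}, and comparing your proposal with that lemma exposes a genuine gap in your handling of the $\cut_\sq$ case.

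Your option (a), invoking \Cref{prop:bc-type-system-characterisation} itself, is circular. Your option (b) does not work either: if you ``duplicate the $\vec y$'s into normal position'' to obtain some $g'(\vec x,\vec y;) \in \bc$ with $g'(\vec x,\vec y;) = \denot{\der_0}(\vec x;\vec y)$, then the composite $\denot{\der_1}(g'(\vec x,\vec y;),\vec x;\vec y)$ is \emph{still} not an instance of safe composition along a normal parameter, because $g'$ now takes $\vec y$ as \emph{normal} inputs while those same $\vec y$ are the \emph{safe} inputs of the composite $f(\vec x;\vec y)$. The detour through $\fptime$ only yields $f(\vec x,\vec y;)\in\bc$, which is the wrong sorting.

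The correct argument is structural, not semantic: one shows that whenever a $\bc$-derivation $\der_0$ has succedent $\sn$, its denotation $\denot{\der_0}(\vec x;\vec y)$ is actually \emph{independent} of the safe arguments $\vec y$, i.e.\ equals some $g(\vec x;)$. This is proved by induction on $\der_0$, the crux being that the only rule introducing $\sn$ on the right is $\sqr$, whose context must be entirely modal; tracing back through the remaining rules (weakenings, exchanges, cuts, etc.) one can systematically discard the safe inputs. This is exactly the content of \Cref{prop:cutbox} in the appendix (stated there for coderivations, but the finite case is simpler). Once you have $\denot{\der_0}(\vec x;\vec y) = g(\vec x;)$ with $g\in\bc$ having no safe inputs, the $\cut_\sq$ semantics becomes $\denot{\der_1}(g(\vec x;),\vec x;\vec y)$, which \emph{is} a legitimate safe composition along a normal parameter. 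Your inductive skeleton is otherwise fine; it is only this lemma that is missing.
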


\begin{rem}
[Bootstrapping]
\label{rem:extra-rules-redundant}
Note that the rules  $1$, $\cndl_\n$ and $\cndl_\sq$ are semantically redundant, being derivable from the others by: $\denot{1}= \denot{\succ 1(0)}$,  $ \denot {\cndl_\n(\der_0, \der_1)}=\denot {\cnd_\n(\der_0, \der_1, \der_1)}$, and $ \denot {\cndl_\sq(\der_0, \der_1)}=\denot {\cnd_\sq(\der_0, \der_1, \der_1)}$. 
Indeed, our original presentation of the system in \cite{CurziDas} did not include these rules, but
we have `bootstrapped' our system here in order to facilitate the definitions of our restricted `non-wellfounded' systems later for characterising $\FPpoly$, in particular in \cref{sec:cbc-nucbc}.
% The presence of these rules becomes crucial when we shall restrict availability of the rules in  $\relaxed$ (see~\Cref{sec4}) to characterise $\FPpoly$: 
% On the one hand, by limiting the use of  $\cnd_\n$,  $\cnd_\sq$ and $\id_\n$, we shall model Turing machines with access to advices that only depend on the length of the input, being  $\cnd'_\n$ and $\cnd'_\sq$ not capable to  discriminate the bits of an input string; on the other hand, by limiting the use of  $\succ 0, \succ 1$, and $\id_\n$, we can restrict definable functions to those having constant growth rate, which will allow us to  model Turing machines that can only access the advice bitwise. This limitation of the use of certain rules will be given in terms of a condition called  $\relaxed$-regularity (\Cref{defn:conditions}).
\end{rem}

% \begin{conv}
% Given \Cref{prop:bc-type-system-characterisation} above, 
% we shall henceforth freely write $f(\vec x ; \vec y) \in \bc$ if there is a derivation $\der:\sq\Gamma, \vec \n \seqar \n$ with $\denot \der (\vec x; \vec y) = f(\vec x;\vec y) $.
% \end{conv}

\section{Non-wellfounded systems based on Bellantoni-Cook}
\label{sec:cbc-nucbc}
In this section we recall a `coinductive' version of $\bc$ that was recently introduced in our earlier work \cite{CurziDas}, and go on to introduce the new system $\nucbc$ of this work. 
In particular we shall give global  criteria that control the computational strength of non-wellfounded typing derivations. 
Throughout this section we shall work with the system $\bcnorec \dfn \bc \setminus \{\saferec\}$.

% To begin with, we define the notion of `coderivation', which is the fundamental formal object of this section.

\begin{defn}
[Coderivations]
A ($\bcnorec$-)\emph{coderivation} $\der$ is a possibly infinite {rooted} tree generated by the rules of $\bcnorec$.
Formally, we identify $\der$ with a (labelled) prefix-closed subset of $ \{0,1,2\}^*$ (i.e.\ a ternary tree).
Each node is labelled by an inference step from $\bcnorec$ such that, whenever $\nu\in \der$ is labelled by a step $\vliiinf{}{}{S}{S_1}{\cdots}{S_n}$, for $n\leq 3$, $\nu$ has $n$ children in $\der$ labelled by steps with conclusions $S_1, \dots, S_n$ respectively. 
Sub-coderivations of a coderivation $\der$ rooted at position $\nu \in \{0,1,2 \}^*$ are denoted $\pder\nu$, so that $\pder\epsilon=\der$. 
\end{defn}

\begin{figure}[t]
    \centering
    $$
\footnotesize
\vlderivation{
\vliin{\cut_{\sq}}{\bullet}{\blue{\sn} \seqar \n}{
    \vlin{ \sqr}{}{\blue{\sn} \seqar \sn}{\vlin{\sql}{}{\blue{\sn} \seqar \n}{\vlin{ \succ 1}{}{\n \seqar \n}{\vlin{\id}{}{\n \seqar \n}{\vlhy{}}}}}
}{
   \vlin{\cut_{\sq}}{\bullet}{\sn \seqar \n}{\vlhy{\vdots}}
}
}
% \qquad 
% \vlderivation{
% \vliiin{\cnd_{\sq}}{\bullet}{\blue{\sn} \seqar \n}
% {
%   \vliin{\cut_{\n}}{}{\seqar \n}{
%           \vlin{\zero}{}{\seqar \n}{\vlhy{\ }}
%           }{
%           \vlin{\id, \succ 1}{}{
%           \n \seqar \n
%           }{\vlhy{\ }
%           }
%           }
% }
% {
%     \vlin{\sql}{}{\blue{\sn} \seqar \n}{\vliq{\id, \succ 1}{}{\n \seqar \n }{\vlhy{\ }}}
% }{
%     \vliin{\cut_{\n}}{}{\blue{\sn} \seqar \n}{
%       \vlin{\cnd_{\sq}}{\bullet}{\blue{\underline{\sn}} \seqar \n}{\vlhy{\vdots}}
%           }{
%           \vliq{\id, \succ 0}{}{\n \seqar \n}{
%             \vlhy{\ }
%           }
%           }
% }
% }
\qquad  
\vlderivation{
\vliin{\cnd_\sq}{\bullet\quad {\scriptstyle{i=0,1}}}{\blue{\underline{\sn}}, \sq \vec \n \seqar \n}
{
\vltr{\gfunction}{\sq \vec \n \seqar \n}{\vlhy{\ }}{\vlhy{\ }}{\vlhy{\ }}
}
{
  \vliin{\cut_\sq}{}{\blue{\sn}, \sq \vec \n \seqar \n}
  {
  \vlin{\sqr}{}{\blue{\sn}, \sq \vec \n  \seqar \red{\sn}}{
    \vlin{\cnd_{\sq}}{\bullet}{\blue{\sn}, \sq \vec \n  \seqar {\n}}{\vlhy{\vdots}}
  }
  }{
  \vltr{\hfunction_i}{\blue{\sn}, \sq \vec \n, \red{\sn}  \seqar \n}{\vlhy{\ }}{\vlhy{\ }}{\vlhy{\ }}
  }
}
{
}
} 
$$
$$
\footnotesize
\vlderivation{
\vliin{\cnd_\sq}{\bullet\quad {\scriptstyle{i=0,1}}}{\blue{\underline{\sn}}, \red{\sn} , \n \seqar \n }
 {
   \vliin{\cnd_\sq}{\circ\quad {\scriptstyle{i=0,1}}}{\red{\underline{\sn}}, \n \seqar \n }
   {
    \vlin{\id}{}{\n \seqar \n}{\vlhy{}}
   }
   {
   \vlin{\succ i}{}{\red{\sn},  \n \seqar \n}{\vlin{\cnd_\sq}{\circ}{ \red{\sn}, \n \seqar \n}{\vlhy{\vdots}}}
   }
 }
 {
  \vlin{\succ i}{}{\blue{\sn}, \red{\sn}, \n \seqar \n}{\vlin{\cnd_\sq}{\bullet}{\blue{\sn}, \red{\sn}, \n \seqar \n}{\vlhy{\vdots}}}
 }
}
$$
$$
\footnotesize
\vlderivation{
    \vliin{\cnd_{\sq}}{\bullet}{\blue{\underline{\sn}}, \orange \n \seqar \n}{
        \vlin{\succ{0}}{}{\orange \n \seqar \n}{
        \vlin{\id}{}{\orange \n \seqar \n}{\vlhy{}}
        }
    }{
        \vliin{\cut_\n}{}{\blue{\sn}, \orange \n \seqar \n}{
            \vlin{\cnd_{\sq}}{\bullet}{\blue{\sn}, \orange \n \seqar \red \n}{\vlhy{\vdots}}
        }{
            \vlin{\cnd_{\sq}}{\bullet}{\blue{\sn} , \red \n \seqar \n}{\vlhy{\vdots}}
        }
    }
}
\ \ 
\vlderivation{
 \vliin{\cndl_\sq}{}{\red{\underline{\sn}} \seqar \n}
  {
        \vltr{f(0)}{ \seqar \n}{\vlhy{\ }}{\vlhy{\  \ }}{\vlhy{\ }}
  }
   {
        \vliin{\cndl_\sq}{}{\red{\underline{\sn}} \seqar \n}
  {
        \vltr{f(1)}{ \seqar \n}{\vlhy{\ }}{\vlhy{\ \  }}{\vlhy{\ }}
  }
  { 
  \vlin{\cndl_\sq}{}{\red{\underline{\sn}} \seqar \n}{\vlhy{\vdots}}
    }
  }
}
\vspace{0.2cm}
$$
$$
\footnotesize
\vlderivation{
\vliin{\cndl_\sq}{\bullet}{\red{\underline{\sn}} \seqar \n}{\vlin{0}{}{\seqar \n}{\vlhy{}}}
   {
   \vliin{\cut_{\n}}{}{\red{\sn} \seqar \n}
      {
      \vlin{\succ 0}{}{\red{\sn} \seqar \blue{\n}}{\vlin{\cndl_\sq}{\bullet}{\red{\sn} \seqar \n}{\vlhy{\vdots}}}
      }{
      \vliin{\cut_{\n}}{}{\red{\sn}, \blue{\n} \seqar \n }
      {
      \vlin{\succ 1}{}{\red{\sn} \seqar \green{\n}}{\vlin{\cndl_\sq}{\bullet}{\red{\sn} \seqar \n}{\vlhy{\vdots}}}
      }{
      \vliin{\cut_{\n}}{}{\red{\sn}, \blue{\n} , \green{\n} \seqar \n}
      {
      \vltr{\completeness(r)}{\red{\sn} \seqar \orange{\n}}{\vlhy{\ }}{\vlhy{\ \ \ }}{\vlhy{\ }}
      }{
      \vliiin{\cnd_\n}{}{\orange{\n} , \blue{\n} , \green{\n} \seqar \n }{\vlin{\id}{}{\blue{\n} \seqar \n }{\vlhy{}}}{\vlin{\id}{}{\blue{\n} \seqar \n }{\vlhy{}}}{\vlin{\id}{}{\green{\n} \seqar \n }{\vlhy{}}}
      }
      }
      }
   }
}
$$
    
    \caption{Examples of coderivations:  $\incrementation$ (top left), $\primrec$ (top right), $ \cconc$ (second line), $\sexp$ (third line, left),   $\completeness(f)$ with  $f: \Nat \to \Nat$ (third line, right), 
    \Achange{$\mathcal{A}(r)$ with $r:\Nat \to \{0,1\}$ (bottom)}.
    }
  \label{fig:examples-regular-coderivations}
\end{figure}

Examples of coderivations can be found    in~\Cref{fig:examples-regular-coderivations} (some of them are  from~\cite{CurziDas}), {whose computational meaning is discussed in~\Cref{exmp:sem}},  and employ the following  conventions:

\begin{conv}[Representing coderivations]\label{conv:convention}
Henceforth, we may mark steps by $\bullet$ (or similar) in a coderivation to indicate roots of identical sub-coderivations. Moreover, to avoid ambiguities and  to ease parsing of (co)derivations, we shall often underline principal formulas of a  rule instance in a given coderivation and \Achange{omit instances of structural rules $\exch_\n$, $\exch_\sq$,  $\wk_\n$ and $\wk_\sq$, absorbing them into other steps (typically cuts) when it causes no confusion}. 
% as well as certain \red{applications of structural rule, e.g.\ the cut rule}.
Finally, when the sub-coderivations $\der_0$ and $\der_1$ above the second and the third premise of the conditional rule (from left) are similar, we may compress them into a single `parametrised' sub-coderivation $\der_i$ (with $i=0,1$).
\end{conv}

% \begin{exmp}[Regular coderivations]\label{exmp:regular-coderivations} The coderivations $\incrementation$, $\successor$ and $\cconc$ below are regular.
% \end{exmp}
As discussed in~\cite{Das2021,Das2021-preprint,Kuperberg-Pous21}, coderivations can be identified with Kleene-Herbrand-G\"odel style equational programs, in general computing partial recursive functionals  (see, e.g., \cite[\S 63]{Kleene71:intro-to-metamath} for further details).
We shall specialise this idea to our two-sorted setting.

\begin{defn}
[Semantics of coderivations]
\label{defn:semantics-coderivations}
To each $\bcnorec$-coderivation $\der$ we associate a two-sorted Kleene-Herbrand-G\"odel partial function $\denot \der$ obtained by construing the semantics of \Cref{defn:semantics-bc} as a (possibly infinite) equational program.
Given a two-sorted function $f(\vec x; \vec y)$, we say that $f$ is \emph{defined} by a $\bcnorec$-coderivation $\der$ if $\denot\der(\vec x; \vec y)=f(\vec x; \vec y)$. 
\end{defn}

\begin{rem}\label{rem:equational-programs}
The notion of \emph{computation} for equational programs is given by (finitary) reasoning in equational logic (see, e.g., \cite[\S 63]{Kleene71:intro-to-metamath}): for numerals $\vec m,\vec n$, we have that $\denot \der (\vec m;\vec n)$ is well-defined and returns some numeral $k$ just if the equation $\denot \der(\vec m;\vec n)=k $ can be (finitely) derived in equational logic (with basic numerical axioms) over the equational program for $\der$.
\Achange{Implicit here is the fact that the semantics of $\bcnorec$-coderivations yield \emph{coherent} equational programs: whenever $\denot \der (\vec m;\vec n)=k$ and $\denot\der (\vec m;\vec n)=k'$ are derivable then $k=k'$ \cite{Das2021,Das2021-preprint}.}    
\end{rem}

\begin{exmp}\label{exmp:sem}
By purely equational reasoning, we can simplify the Kleene-G\"{o}del-Herbrand style semantics in~\Cref{defn:semantics-coderivations} of the coderivations in~\Cref{fig:examples-regular-coderivations} to get the  equational programs in~\Cref{fig:examples-equational-programs}:  $\denot{\incrementation}$ represents a  function that is always undefined, as its equational program keeps increasing the length of the input; $\denot{\primrec}$ is an instance of a \emph{non-safe}  recursion scheme (on notation), as the recursive call appears in normal position; $\denot{\cconc}$ computes concatenation of the binary representation of three natural numbers;  $\denot{\sexp}$ has exponential growth rate (as long as $y\neq 0$), since  $\denot\sexp(x;y)=2^{2^{\s{x}}} \cdot  |y|$;  the (infinite) equational program \Achange{ for $\denot{\completeness(f)}$ computes $f(|x|)$ by simply exhausting the values of $|x|$;}
% represents the description of any function $f: \Nat \to \Nat$ in terms of its graph; 
finally, \Achange{$\denot {\mathcal{A}(r)}$ on input $x$ returns the binary string  $r(0)\cdot r(1)\cdot  \cdots \cdot   r(|x|-1)$} if $x>0$, and  $0$ otherwise.
\end{exmp}

\begin{figure}
    \centering
$$
\small
\begin{array}{cc}
\arraycolsep=1pt
 \def\arraystretch{1.2}
    \begin{array}{rcl}
\denot{\incrementation}(x;)  &=&  \denot{\incrementation}(\succ 1x;) \\ 
\denot{\primrec}(0, \vec x;) & = & \denot{\gfunction}(\vec x; )\\
  \denot{\primrec}(\succ ix,\vec x; ) & = &\denot{\hfunction_i}( x, \vec x, \denot{\primrec}(x, \vec x; ); )   \\
  \denot{\cconc}  (0, 0;z) & = & z\\
     \denot{\cconc}  ( 0, \succ i y;z) & = &   \succ i \denot{\cconc} ( 0, y;z)    \\ 
     \denot{\cconc}  (\succ i x, y;z) & = &   \succ i \denot{\cconc}  ( x, y;z) 
\end{array}
&
%   \arraycolsep=1pt
% \def\arraystretch{1.2}
% \begin{array}{rcl}
%   \denot{\cconc}  (0, 0;z) & = & z\\
%      \denot{\cconc}  ( 0, \succ i y;z) & = &   \succ i \denot{\cconc} ( x, y;z)    \\ 
%      \denot{\cconc}  (\succ i x, y;z) & = &   \succ i \denot{\cconc}  ( x, y;z) 
% \end{array}
% &
\arraycolsep=1pt
 \def\arraystretch{1.2}
\begin{array}{rcl}
\denot \sexp (0;y) & = & \succ 0 (;y) \\
    \denot \sexp (\succ i x;y) & = & \denot \sexp (x; \denot \sexp(x;y)) 
    \\  
     \{ \denot{\completeness(f)}(x;) &= & \Achange{f(|x|)\}_{ |x| \in \Nat}  } \\
     \denot {\mathcal{A}(r)} (0; )   & = & 0 \\
    \Achange{\denot {\mathcal{A}(r)} (\succ i x; )} & = & \begin{cases}\succ 0 \denot {\mathcal{A}(\Achange r)} ( x; ) &\text{if } \Achange{\denot{\completeness(r)}(x;)=0} \\
    \succ 1 \denot{\mathcal{A}(\Achange r)} ( x; ) &\text{otherwise}\end{cases}
    
    % \cnd (; r(x),  \succ 0( \apx (\succ i x; )), \succ 1( \apx (\succ i x; )))) 
\end{array}
 \end{array}
$$
\caption{Equational programs derived from the coderivations in~\Cref{fig:examples-regular-coderivations}, where $i \in \{0,1\}$.}
\label{fig:examples-equational-programs}
\end{figure}

% \begin{figure}
% $$
% \small
% \begin{array}{ccc}
%      \arraycolsep=1pt
% \def\arraystretch{1.2}
% \begin{array}{rcl}
%  \denot{\incrementation}(x;)  &=&  \denot{\incrementation}(\succ 1x;)     
%  \\ 
%   \denot{\primrec}(0, \vec x;) & = & \denot{\gfunction}(\vec x; )\\
%   \denot{\primrec}(\succ ix,\vec x; ) & = &\denot{\hfunction_i}( x, \vec x, \denot{\primrec}(x, \vec x; ); )  
% \end{array}
% & 
% \arraycolsep=1pt
% \def\arraystretch{1.2}
% \begin{array}{rcl}
%     \denot{\cconc}  (0, 0;z) & = & z\\
%      \denot{\cconc}  ( 0, \succ i y;z) & = &   \succ i \denot{\cconc} ( x, y;z)    \\ 
%      \denot{\cconc}  (\succ i x, y;z) & = &   \succ i \denot{\cconc}  ( x, y;z) 
% \end{array}
% &
%   \arraycolsep=1pt
% \def\arraystretch{1.2}
% \begin{array}{rcl}
%   \denot \sexp (0;y) & = & \succ 0 (;y) \\
%     \denot \sexp (\succ i x;y) & = & \denot \sexp (x; \denot \sexp(x;y)) 
%     \\ 
%     \{ \denot{\completeness}(x;) &= & f(|x|)\}_{ x \in \Nat} 
% \end{array}
% \end{array}
% $$
% \caption{Equational programs derived from the coderivations in~\Cref{fig:examples-regular-coderivations}, where $i \in \{0,1\}$.}
% \label{fig:examples-equational-programs}
% \end{figure}

The above examples illustrate several recursion theoretic features of $\bcnorec$-coderivations that we shall seek to control in the remainder of this section:
\begin{enumerate}[(I)]
    \item \label{enum:problem1}  \emph{non-totality} (e.g., the coderivation $\incrementation$);
    \item \label{enum:problem2}  \emph{non-computability} (e.g., the \Achange{coderivation $\completeness(f)$, with $f$ non-computable}); 
    \item \label{enum:problem3}  \emph{non-safety} (e.g., the coderivation $\primrec$), despite the presence of modalities implementing the normal/safe distinction of function arguments;
    \item \label{enum_problem4} \emph{nested recursion} (e.g., the coderivation $\sexp$).
\end{enumerate}

% In what follows, we consider global criteria on coderivations to addressing each of the above features.  

To address \eqref{enum:problem1} we shall adapt to our setting a well-known `totality criterion' from non-wellfounded proof theory (similar to those in~\cite{Das2021,Das2021-preprint,Kuperberg-Pous21}).
First we need to recall some standard structural proof theoretic notions:

\begin{defn}
[Ancestry]
\label{defn:ancestry}
Fix a coderivation $\der$. We say that a type occurrence $A$ is an \emph{immediate ancestor} of a type occurrence $B$ in $\der$ if they are types in a premiss and conclusion (respectively) of an inference step and, as typeset in \Cref{fig:bc-type-system}, have the same colour.
If $A$ and $B$ are in some $\orange \Gamma$ or $\purple{ \Gamma'}$, then furthermore they must be in the same position in the list.
\end{defn}

\Achange{For a definition of immediate ancestry avoiding colours, we point the reader to standard proof theory references, e.g.\ \cite[Sec.~1.2.3]{buss98:intro-in-handbook}.}
Being a binary relation, immediate ancestry forms a directed graph upon which our totality criterion is built:

\begin{defn}
[Progressing coderivations]
\label{defn:progressing}
Fix a coderivation $\der$. A \emph{thread} is a maximal path in the graph of immediate ancestry.
We say that a (infinite) thread is \emph{progressing} if it is eventually constant $\sn$ and infinitely often principal for a $\cnd_{\sq}$ rule or a $\cndl_\sq$ rule. A coderivation is \emph{progressing} if each of its infinite branches has a progressing thread.
\end{defn}

In~\cite{CurziDas} we showed that the progressing criterion is indeed sufficient (but obviously not necessary) to guarantee that the partial function computed by a coderivation is, in fact, total  (see also \cite{Kuperberg-Pous21,Das2021-preprint,Das2021}):

\begin{prop}
[Progressing implies totality, \cite{CurziDas}]
\label{prop:prog-imp-tot}
If $\der $ is progressing, then $\denot\der$ is total.
\end{prop}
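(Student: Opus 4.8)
The plan is to argue by contradiction, using the standard infinite-descent technique familiar from non-wellfounded proof theory, adapted to the Kleene–Herbrand–Gödel equational semantics of $\bcnorec$-coderivations. Suppose $\der$ is progressing but $\denot\der(\vec m;\vec n)$ is undefined for some tuple of numerals $\vec m,\vec n$. The first step is to extract from this failure an infinite branch of $\der$ together with a matching infinite sequence of numeral-inputs. Concretely, I would build a sequence of pairs $(\nu_k, (\vec m_k; \vec n_k))$ where $\nu_k\in\der$ is a node whose conclusion sequent has the right arity and $\denot{\pder{\nu_k}}(\vec m_k;\vec n_k)$ is undefined: at each stage, since $\denot{\pder{\nu_k}}(\vec m_k;\vec n_k)$ is undefined, inspecting the defining equation for the bottom rule of $\pder{\nu_k}$ (as given in \Cref{fig:semantics-of-bc}, read equationally) shows that for at least one premise the corresponding subfunction must be undefined on the induced arguments, and we pick that premise as $\nu_{k+1}$. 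The only rule requiring a little care here is $\cnd_\sq$/$\cndl_\sq$ (and $\cnd_\n$, $\srec$ absent since we are in $\bcnorec$): the equational program there branches on the value of a numeral argument, so exactly one premise is `active' and it is that premise which must be undefined. This yields an infinite branch $B = \nu_0\nu_1\cdots$ of $\der$.

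Next, since $\der$ is progressing, $B$ carries a progressing thread $\tau$: a thread along $B$ that is eventually constantly the type $\sn$ and is principal for a $\cnd_\sq$ or $\cndl_\sq$ step infinitely often. The key point is to track, along the tail of $B$ where $\tau$ is constantly $\sn$, the numeral sitting in the argument position corresponding to $\tau$. Because $\tau$ is a thread of immediate ancestors, following the semantics clauses of \Cref{fig:semantics-of-bc} this numeral is non-increasing as we move up the branch (the structural rules $\wk_\sq$, $\exch_\sq$, $\sql$ and the cuts merely relocate or pass it along unchanged), and — crucially — strictly decreases in length (hence in value, once we are past $0$) at every $\cnd_\sq$ or $\cndl_\sq$ step for which $\tau$ is principal, since those equations replace $\succ i x$ by $x$ (i.e.\ apply $\pred$) in the relevant position. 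As $\tau$ is principal for such a step infinitely often, we obtain an infinite strictly descending sequence of natural numbers, which is the desired contradiction.

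The main obstacle, and the step I would spend the most care on, is the bookkeeping connecting the abstract thread $\tau$ to a concrete numeral that provably decreases: one must check, rule by rule as typeset in \Cref{fig:bc-type-system} and interpreted in \Cref{fig:semantics-of-bc}, that immediate ancestry of an $\sn$-occurrence on the left of a sequent corresponds exactly to `the input in that position is passed up unchanged' — except at the conditional rules on $\sq$, where it is decremented — and that the colouring convention of \Cref{defn:ancestry} makes this matching unambiguous even in the presence of contraction-like behaviour absorbed into cuts (cf.\ \Cref{conv:convention}). Once this correspondence is nailed down, the descent argument is routine. I would also remark that this is essentially the argument of \cite{CurziDas}, and of the analogous totality results in \cite{Kuperberg-Pous21,Das2021-preprint,Das2021}, specialised to the two-sorted rule set of $\bcnorec$; so for a conference version one may legitimately give the argument in the compressed form above and defer the full rule-by-rule verification to an appendix.
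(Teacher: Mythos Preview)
Your proposal is correct and follows essentially the same infinite-descent strategy as the paper's proof: build a non-total branch by propagating undefinedness upward, pick a progressing thread along it, and derive a contradiction from the fact that the tracked $\sn$-value is non-increasing yet strictly decreases at every $\cnd_\sq$/$\cndl_\sq$ step where the thread is principal. The only cosmetic difference is that you carry concrete numeral tuples $(\vec m_k;\vec n_k)$ along the branch (exactly as the paper does in the later proof of \Cref{lem:succ-id-free-imp-relation}), whereas the paper's proof of \Cref{prop:prog-imp-tot} instead assigns to each $\sn^i$ the \emph{least} value witnessing non-totality at that position; both devices achieve the same monotone-but-non-converging sequence, so this is a matter of presentation rather than a different argument.
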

The argument for this proposition is by contradiction: assuming non-totality, construct an infinite `non-total branch', whence a contradiction to well-orderedness of $\Nat$ is implied by a progressing thread along it.
We shall use similar argument later in the proof of  \cref{lem:succ-id-free-imp-relation}.

\begin{exmp}\label{exmp:1} In~\Cref{fig:examples-regular-coderivations},  $\incrementation$ has precisely one infinite branch (that loops on $\bullet$) which contains no instances of  $\cnd_{\sq}$ or $\cndl_\sq$ at all, so  $\incrementation$ is not progressing. 
% By contrast, $\successor$ is progressing, since the  infinite branch has a progressing thread on the blue formulas $\blue{\sn}$ indicated there. Last, 
On the other hand, $\cconc$ has two simple loops, one on $\bullet$ and the other one on $\circ$. For any infinite branch $B$ we have two cases: if $B$ crosses the bottommost conditional infinitely many times, it contains a progressing \blue{blue} thread; otherwise, $B$ crosses the topmost conditional infinitely many times, so that it  contains  a progressing \red{red} thread.
% \begin{itemize}
%     \item if $B$ crosses the bottommost conditional infinitely many times, it contains a progressing \blue{blue} thread;
%     \item otherwise, $B$ crosses the topmost conditional infinitely many times, so that it  contains  a progressing \red{red} thread.
% \end{itemize}
Therefore, $\cconc$ is progressing. By applying the same reasoning, we conclude that $\sexp$, $\completeness(f)$, $\mathcal{A}(r)$, and   $\primrec$ are progressing (if $\gfunction$ and $\hfunction_i$ are).
\end{exmp}

To address \eqref{enum:problem3}-\eqref{enum_problem4} we recall the following properties of coderivations from \cite{CurziDas}:

\begin{defn}[Safety, left-leaning]\label{defn:conditions} We say that a coderivation $\der$ is  \emph{safe} if each  branch crosses only finitely many $\cut_\sq$-steps, and  \emph{left-leaning} if each  branch goes right at a $\cut_\n$-step only finitely often.
\end{defn}

\begin{exmp}\label{exmp:2}
In~\Cref{fig:examples-regular-coderivations}, the only non-safe coderivations are $\primrec$ and $\incrementation$, as the  branches looping on $\bullet$  contain infinitely many $\cut_{\sq}$.     $\sexp$ is the only non-left-leaning coderivation, as it has a branch looping at $\bullet$ that crosses  infinitely many times the rightmost premise of a  $\cut_\n$. 
\end{exmp}

Finally, concerning \eqref{enum:problem2}, recall that the aim of this work is to characterise non-uniform classes, which may contain non-computable predicates and functions. 
To this end we introduce a generalisation of the notion of `regularity', typically corresponding to computability (e.g.\ in \cite{Das2021-preprint,Das2021,Kuperberg-Pous21}), that is commonplace in cyclic proof theory:

\begin{defn}
[Generalised regularity]
\label{defn:regularity}
\Achange{Let $\mathsf R\subseteq \bcnorec$.
A $\bcnorec$-coderivation $\der$ is \emph{$\mathsf R$-regular} if it has only finitely many distinct sub-coderivations containing rules among $\mathsf R$. 
If $\mathsf R = \bcnorec$,} i.e.~it has only finitely many distinct sub-coderivations, then we say that $\der$ is \emph{regular} \Achange{(or \emph{circular})}. 
\end{defn}

Note that, while usual derivations may be naturally written as finite trees or dags, regular coderivations may be naturally written as finite directed (possibly cyclic) graphs. Also, from a regular coderivation $\der$ we obtain a \emph{finite} equational program for $\denot \der$. 
In particular, while there are continuum many (non-wellfounded) coderivations, there are only countably many regular ones.

\begin{exmp}\label{exmp:3}
In~\Cref{fig:examples-regular-coderivations},  $\completeness(f)$ and \Achange{$\mathcal{A}(r)$
are the only non-regular coderivations (as long as $\gfunction$, $\hfunction_i$ are regular).
Also, $\mathcal{A}(r)$ is \Achange{$\mathsf R$-regular for any $\mathsf R\subseteq \bcnorec\setminus\{0, 1, \cndl_\sq\}$}, since $r(i)$ is computed by just a $0$ or $1$ step when $r:\Nat \to \{0,1\}$.}
% \red{are the only possibly non-regular coderivations (assuming regularity of  $\gfunction$, $\hfunction_i$), as there are number-theoretic functions that cannot be represented by regular coderivations.}
% \red{Also, $\mathcal{A}(f)$ is $\vec \rules$-regular for any $\vec \rules\subseteq \bcnorec\setminus\{0, 1, \cndl_\sq\}$, assuming each coderivation $r(i)$ is implemented by just a $0$ or $1$ step.}
%  Note that there are only countably many regular coderivations, but continuum many (non-wellfounded) coderivations.
\end{exmp}

We are now ready to present the non-wellfounded proof systems that will be considered in this paper:

\begin{defn}[$\cbc$ and $\nucbc$]\label{defn:non-wellfounded-proof-systems}
$\cbc$ is the class of regular progressing safe and left-leaning $\bcnorec$-coderivations. $\nucbc$ is the class of $\relaxed$-regular progressing safe and left-leaning $\bcnorec$-coderivations. A two-sorted function $f(\vec x;\vec y)$ is  \emph{$\cbc$-definable} (resp.~\emph{$\nucbc$-definable}) if there is a coderivation $\der \in \cbc$  (resp.~$\der \in \nucbc$ ) such that $\denot \der (\vec x;\vec y) = f(\vec x;\vec y)$. 
\end{defn}

Recalling Examples~\ref{exmp:1},~\ref{exmp:2} and~\ref{exmp:3}, $\cconc$ is the only coderivation in $\cbc$, \Achange{while $\mathcal{A}(r)$} is an example of coderivation in $\nucbc$ for any \Achange{$r:\Nat \to \{0,1\}$}. The system $\cbc$ was already introduced
in~\cite{CurziDas}, where we showed that $\cbc= \fptime$ (among other results), whereas $\nucbc$ \Achange{(read `non-uniform $\bc$')} is new. 
The main result of this paper  
is to show that $\nucbc$ admits just the right amount of non-wellfoundedness to duly characterise the analogous non-uniform class:

\begin{thm}\label{thm:main-result}
$\nucbc = \FPpoly$
\end{thm}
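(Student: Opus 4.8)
The plan is to prove $\nucbc = \FPpoly$ by chaining together the eight inclusions displayed in the introduction, establishing a `grand tour' that passes through several relativised function algebras and circular systems. The argument factors cleanly into a \textbf{lower bound} part ($\FPpoly \subseteq \nucbc$) and an \textbf{upper bound} part ($\nucbc \subseteq \FPpoly$), each of which is assembled from simpler, self-contained steps.

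\textbf{Lower bound.} First I would establish $\FPpoly = \fptime(\RR)$, where $\RR$ is the set of oracles deciding properties of string length: the forward direction uses the advice string $\alpha_{|\vec x|}$ to code, bit by bit, the graph of a length-oracle in $\RR$, while the backward direction observes that a polynomial-time computation querying only length-oracles can have its finitely many relevant oracle bits packaged as polynomial advice. This gives Proposition~\ref{prop:fppoly=fptime(RR)} (used in reverse at the end of the tour). Dually, I would prove the folklore-style relativised Bellantoni--Cook characterisation, $\FPpoly \subseteq \bc(\RRnor)$ (Proposition~\ref{prop:fppoly-in-B(R0)}): the point is that the `grand tour' only needs length-oracles placed in \emph{normal} position, since advice depends solely on input lengths, so the standard Bellantoni--Cook simulation of polynomial-time machines relativises without disturbing the safe/normal discipline. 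Next, $\bc(\RRnor) \subseteq \cbc(\RRnor)$ (Proposition~\ref{prop:b(f)-in-cbc(f)}) is a direct adaptation of the translation from \cite{CurziDas} of $\bc$-derivations into regular progressing safe left-leaning coderivations, now carrying oracle rules along. Finally, $\cbc(\RRnor) \subseteq \nucbc$ (Proposition~\ref{prop:cbc(R0)-to-nuB}) is where the non-uniformity enters: a coderivation in $\cbc(\RRnor)$ is regular over the non-oracle rules, and the only source of irregularity comes from the oracle queries; unfolding these into the $\relaxed$-regular shape — essentially plugging in the gadget $\mathcal{A}(r)$ of \cref{fig:examples-regular-coderivations} for each length-oracle $r$ — produces a $\{\cnd_\sq,\cnd_\n,\suc 0,\suc 1,\id\}$-regular coderivation that is still progressing, safe, and left-leaning.

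\textbf{Upper bound.} The chain continues with Theorem~\ref{lem:non-wellfounded-oracles}: any $\nucbc$-coderivation $\der$ can be recast as a $\cbc$-coderivation relativised to \emph{safe-normal} length-oracles, $\cbc(\RRsafnor)$. The idea is that the $\relaxed$-irregularity of $\der$ is confined to subcoderivations built from $\{\cnd_\sq,\cnd_\n,\suc 0,\suc 1,\id\}$, which — being progressing — compute length-only predicates; abstracting each such maximal subcoderivation into an oracle call recovers regularity over the remaining rules, and the resulting oracles genuinely decide string-length properties, so lie in $\RR$. From there, $\cbc(\RRsafnor) \subseteq \bcpp(\RRsafnor)$ (Lemma~\ref{lem:rel-tran-lem}) is the relativisation of the main `circular-to-algebra' translation of \cite{CurziDas}: the safety and left-leaning conditions let us extract, from a regular progressing coderivation, a definition in the algebra $\bcpp$ augmented with the same oracles, with the recursion carried out along the well-founded prefix-permutation preorder $\permpref$. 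Then $\bcpp(\RRsafnor) \subseteq \fptime(\RR)$ (Proposition~\ref{prop:relativised-characterisation}) is the relativised version of the $\bcpp = \fptime$ theorem of \cite{CurziDas}, the key point being that recursion on $\permpref$ still yields only polynomially many recursive calls even in the presence of length-oracles. Finally $\fptime(\RR) \subseteq \FPpoly$ closes the loop via the easy direction of Proposition~\ref{prop:fppoly=fptime(RR)} already discussed.

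\textbf{Main obstacles.} The two delicate steps are the two `directions of relativisation' at the heart of the argument: Proposition~\ref{prop:cbc(R0)-to-nuB} and Theorem~\ref{lem:non-wellfounded-oracles}, which together pin down \emph{exactly} how much irregularity is needed. The substantive content is identifying $\relaxed$ as the correct set of rules: it must be rich enough that a length-oracle gadget like $\mathcal{A}(r)$ is $\relaxed$-regular (so that $\cbc(\RRnor)$ embeds into $\nucbc$), yet restrictive enough that any $\relaxed$-irregular progressing subcoderivation computes only a length-predicate extractable as an oracle in $\RR$ (so that $\nucbc$ embeds into $\cbc(\RRsafnor)$). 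Verifying that progressing, safe, and left-leaning are all preserved under these translations — in particular that abstracting a subcoderivation into an oracle rule does not destroy the existence of a progressing thread on the surviving infinite branches — is the main technical work. A secondary obstacle is checking that the prefix-permutation recursion of $\bcpp$ still bounds the number of recursive calls polynomially once length-oracles are in play (Proposition~\ref{prop:relativised-characterisation}), but this should follow the template of \cite{CurziDas} with only bookkeeping changes, since oracle answers are single bits and contribute nothing to the recursion-theoretic measure driving the polynomial bound.
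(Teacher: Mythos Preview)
Your grand tour matches the paper's exactly, but your account of the key upper-bound step $\nucbc \subseteq \cbc(\RRsafnor)$ (Theorem~\ref{lem:non-wellfounded-oracles}) is inverted in a way that would make the argument fail as stated. You say the irregularity is ``confined to subcoderivations built from $\{\cnd_\sq,\cnd_\n,\suc 0,\suc 1,\id\}$, which --- being progressing --- compute length-only predicates''. This is backwards: $\relaxed$-regularity means there are only finitely many distinct sub-coderivations \emph{containing} a rule from $\relaxed$, so the (possibly infinite) variety of sub-coderivations lives among those that are $\relaxed$-\emph{free}. And your justification could not work for sub-coderivations built from $\relaxed$: the rules $\cnd_\sq,\cnd_\n$ inspect actual bits (not just lengths), and $\suc 0,\suc 1$ grow outputs beyond Boolean, so such sub-coderivations need not define length-only relations at all.

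The correct mechanism (Proposition~\ref{prop:semi-regular-imp-RRsafnor}) is that a progressing $\relaxed$-free sub-coderivation is forced to compute a function in $\RRsafnor$: absence of $\suc 0,\suc 1,\id$ bounds the output by $1$ (Lemma~\ref{lem:succ-id-free-imp-relation}), and absence of $\cnd_\sq,\cnd_\n,\id$ leaves only the $\cndl$ rules to inspect inputs, so the result depends only on lengths. One then replaces each minimal $\relaxed$-free sub-coderivation by an oracle initial sequent; what remains below these cuts is regular precisely by $\relaxed$-regularity, yielding a $\cbc(\RRsafnor)$-coderivation. (A minor related point: for $\cbc(\RRnor)\subseteq\nucbc$ the paper does not plug in $\mathcal{A}(r)$, which computes a concatenated bitstring, but uses a direct $\{0,1,\cndl_\sq\}$-only construction, Lemma~\ref{lem:ext-compl-rel-nor} --- essentially a multi-argument version of $\completeness(r)$.)
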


\Achange{The rest of this work is devoted to the proof of this result. In particular, the two directions of the equality are given by~\Cref{cor:fppoly-in-nuB}  and~\Cref{cor:nuB-in-fppoly}.}

% Intuitively, in  the setting of $\bcnorec$-coderivations,  regularity is the condition allowing  uniformity in computation. By relaxing regularity to $\relaxed$-regularity, we are able to introduce (in a controlled way)  non-uniformity in $\cbc$, the core step to model advice. 

\begin{rem}
[On proof checking]
Let us point out that all conditions on coderivations we have considered so far are \emph{decidable} on regular coderivations.
In particular, progressiveness may be decided by reduction to universality of B\"uchi automata.
In the presence of safety, however, it turns out that proof  checking becomes easier: checking whether a regular coderivation is in $\cbc$ is actually decidable in $\mathbf{NL}$ \cite[Cor.~32]{CurziDas}.
Of course, as $\nucbc$ coderivations are not finitely presented (indeed like $\FPpoly$ programs), such decidability issues are no longer relevant.
\end{rem}

	\section{On relativised characterisations of $\FPpoly$}\label{sec3}\label{sec:rel-chars-fppoly}
In this section we consider recursion theoretic characterisations of $\FPpoly$ via relativised function algebras.
This will serve not only as a `warm up' to motivate our main characterisation, but will also provide several of the intermediate results necessary to that end.
The results of this section are based on textbook techniques and are (presumably) folklore.

\subsection{Non-uniformity via resource-bounded oracle machines}
A \emph{relation} is a function $r(\vec x)$ such that we always have $r(\vec x) \in \{0,1\}$.
\begin{definition}
[Relativised complexity classes]
Let $R$ be a set of relations.
The class $\fptime(R)$ consists of just the functions computable in polynomial time by a Turing machine with access to \Achange{an oracle for each $r \in R$}.
\end{definition}

For instance, using this notion of relativised computation, we can define the levels of the functional polynomial hierarchy $\mathbf{FPH}$ by $\Box^p_1:= \fptime$, $\Box^p_2 := \fptime(\mathbf{NP})$, $\Box^p_3:= \fptime (\Sigma^p_2)$, etc.

Let us write $\RR := \{ r: \mathbb{N}^k \to \Achange{\{0,1\}} \ \vert \   |\vec x|=|\vec y| \implies r(\vec x)=r(\vec y)  \}$.
Note that the notation $\RR$ is suggestive here, since its elements are essentially maps from lengths/positions to Booleans, and so may be identified with Boolean streams.

\begin{prop}
\label{prop:fppoly=fptime(RR)}
$\FPpoly = \fptime(\RR)$.
\end{prop}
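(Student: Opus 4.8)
The plan is to prove the two inclusions $\FPpoly \subseteq \fptime(\RR)$ and $\fptime(\RR) \subseteq \FPpoly$ separately, exploiting the observation already flagged in the text that elements of $\RR$ are essentially Boolean streams indexed by input lengths.

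For the inclusion $\FPpoly \subseteq \fptime(\RR)$: suppose $f(\vec x) \in \FPpoly$, witnessed by advice strings $\{\alpha_{\vec n}\}_{\vec n}$ of size polynomial in $\vec n$ and some $f'(x,\vec x) \in \fptime$ with $f(\vec x) = f'(\alpha_{|\vec x|}, \vec x)$. First I would encode the whole family of advice strings into a single relation $r \in \RR$. Since $|\alpha_{\vec n}|$ is bounded by some polynomial $p(\vec n)$, I can define $r(\vec n, i)$ to be the $i$-th bit of $\alpha_{\vec n}$ when $i < |\alpha_{\vec n}|$ (with some fixed padding convention, e.g.\ returning $0$ otherwise); crucially $r$ depends only on the \emph{lengths} of its arguments, hence $r \in \RR$. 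A polynomial-time oracle machine with access to $r$ can then, on input $\vec x$, first compute $|\vec x|$, then reconstruct $\alpha_{|\vec x|}$ bit by bit by querying $r(|\vec x|, 0), r(|\vec x|, 1), \dots$ up to position $p(|\vec x|)$ — this is polynomially many queries each on polynomial-size arguments — and finally simulate $f'(\alpha_{|\vec x|}, \vec x)$, which runs in polynomial time. Hence $f \in \fptime(\RR)$.

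For the converse inclusion $\fptime(\RR) \subseteq \FPpoly$: suppose $f(\vec x)$ is computed by a polynomial-time Turing machine $M$ with access to oracles $r_1, \dots, r_k \in \RR$ (w.l.o.g.\ finitely many, and one can even merge them into one). Since $M$ runs in time bounded by some polynomial $q(|\vec x|)$, on input $\vec x$ every oracle query $M$ makes is to a string of length at most $q(|\vec x|)$; because each $r_j \in \RR$ depends only on the lengths of its arguments, the only relevant information is the value of $r_j$ on arguments whose lengths are at most $q(|\vec x|)$. I would collect all such values into an advice string $\alpha_{\vec n}$: for a given tuple of lengths $\vec n$, let $\alpha_{\vec n}$ list the bits $r_j(\vec m)$ for all $j \le k$ and all length-tuples $\vec m$ with $|\vec m|$ componentwise at most $q(\vec n)$ (using any fixed representative argument of each length). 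The number of such length-tuples is polynomial in $\vec n$, so $|\alpha_{\vec n}|$ is polynomial in $\vec n$. Then $f'(\alpha, \vec x)$ simulates $M$ on $\vec x$, answering each oracle query by a table lookup into $\alpha$ using the length of the queried string; this runs in polynomial time, and $f(\vec x) = f'(\alpha_{|\vec x|}, \vec x)$. Hence $f \in \FPpoly$.

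The main obstacle, such as it is, is bookkeeping rather than conceptual: one must be careful that advice in the definition of $\FPpoly$ is indexed by the \emph{tuple} of input lengths $\vec n$ (not a single number), that the polynomial time bound $q$ genuinely caps the lengths of all oracle queries, and that the encoding of multiple relations and of all relevant length-tuples into a single polynomial-size string is done with a concrete, decodable layout. One could alternatively route one direction through \Cref{prop:fppoly-circuits}, but the direct argument above is cleaner. I would also remark that this proof relativises uniformly in the choice of oracle set, which is exactly why the same idea later supports the relativised algebras $\bc(\RR)$ and $\bcpp(\RR)$.
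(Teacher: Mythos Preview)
Your argument is correct and, for the direction $\fptime(\RR) \subseteq \FPpoly$, matches the paper's proof essentially verbatim: bound the length of all oracle queries by the running-time polynomial, observe that only finitely many oracles are used, and package the finitely many relevant bits (indexed by length-tuples) into a single polynomial-size advice string.

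For the direction $\FPpoly \subseteq \fptime(\RR)$ you take a slightly different but equally standard route. The paper goes via the circuit characterisation (\Cref{prop:fppoly-circuits}): it fixes a polynomial-size circuit family $(C_n)_n$, defines $c\in\RR$ so that $c(x,y)$ returns the $|y|$th bit of the description of $C_{|x|}$, reconstructs $C_{|x|}$ by polynomially many queries, and then runs a polytime circuit evaluator. You instead work directly from the advice-string definition, encoding the bits of $\alpha_{\vec n}$ into an $\RR$-oracle and then simulating $f'$. Structurally these are the same trick---encode a polynomial-size length-indexed object bit-by-bit into an $\RR$-oracle and reconstruct it---applied to two equivalent witnesses of membership in $\FPpoly$. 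Your version avoids the detour through \Cref{prop:fppoly-circuits}; the paper's version has the mild advantage that it only ever needs a single-input circuit family, sidestepping the tuple-of-lengths bookkeeping you flag. One notational point to tighten: when you write ``querying $r(|\vec x|, 0), r(|\vec x|, 1), \dots$'', you should really query $r$ on $\vec x$ itself together with a string of the appropriate \emph{length} $j$ (e.g.\ $r(\vec x, 1^j)$), since $r\in\RR$ reads only the lengths of its arguments; as written the queries would depend on $||\vec x||$ rather than $|\vec x|$.
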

\begin{proof}[Proof sketch]
\todo{clean this up a bit}
For the left-right inclusion, let \Achange{$p(n)$ be a polynomial and} $\mathbf C = (C_n)_{n<\omega}$ be a circuit family with each $C_n$ taking $n$ Boolean inputs and having size $<p(n)$. We need to show that the language computed by $\mathbf C$ is also computed in $\fptime(\RR)$.
Let $c \in \RR$ be the function that, on inputs $x,y$ returns the $|y|$\textsuperscript{th} bit of $C_{|x|}$.
Using this oracle we can compute $C_{|x|}$ by polynomially queries to $c$, and this may be evaluated as usual using a polynomial-time evaluator in $\fptime$.

For the right-left inclusion, notice that a polynomial-time machine can only make polynomially many calls to oracles with inputs of only polynomial size. Thus, if $f \in \fptime (\RR)$ then there is some $p_f$ with $f\in \fptime (\RR^{<p_f})$, where $\RR^{<p_f}$ is the restriction of each $r \in \RR$ to only its first $p_f(|\vec x|)$ many bits.
Now, since $f$ can only call a fixed number of oracles from $\RR$, we can collect these finitely many polynomial-length prefixes into a single advice string for computation in $\FPpoly$.
\end{proof}
% \todo{mention Cobham here, along with its relativised bounding lemma. No need to develop it since we will do this in the next section for Bellantoni-Cook}
% \gianluca{Trick: let us not talk about Cobham at all in this paper. We can discuss Cobham in the Arxiv version of the paper (in any case it is not directly relevant and it will be put in the appendix, which is never published.)}
% \begin{rem}
% [Relativised Cobham characterisation]
% \end{rem}

\subsection{A relativised Bellantoni-Cook characterisation of $\FPpoly$}
% \begin{itemize}
% 	\item use bounding lemma with oracles to mimic the same sort of argument as previous section.
% 	\item We need modal versions of all sorts, along with modal versions of all inclusions, so also $\Box B \subseteq \Box W$ and $\Box N \subseteq \Box W$.
% 	The $\Box$ is controlled as expected, behaving like an $S4$-style modality.
% 	\item 
% \end{itemize}

\newcommand{\succi}[1]{\succ{}{#1}}
% \paragraph*{An untyped approach}

% For all sets of functions $\vec a$ we define the relativised algebra $\B(\vec a)$ as follows:
% \begin{itemize}
% \item Each $a_i \in \vec a$ is in $\B(\vec a)$
%     \item $0 (;) := 0 $ is in $\B(\vec a)$
%     \item $\pi^{m;n}_{;j}(x_0, \dots, x_{m-1};y_0, \dots, y_{n-1}) := y_j$, for $j<n$, is in $\B(\vec a)$.
%     \item $\pi^{m;n}_{j;}(x_0, \dots, x_{m-1}; y_0, \dots, y_{n-1}) := x_j$, for $j<m$, is in $\B(\vec a)$.
%     \item $\succi i (;x):= 2x +i$, for $i<2$, is in $\B(\vec a)$.
%     \item $\pred (;x) := \lfloor \frac x 2 \rfloor$ is in $\B(\vec a)$.
%     \item $\Cnd (;x,y,z) := \begin{cases}
%     y & 2|x \quad \\
%     z & 
%     \end{cases}
%     $ is in $\B(\vec a)$.
    
%     \medskip
    
%     \item if $g(\vec x;\vec y), h(\vec x;\vec y) \in \B(\vec a)$ then $f(\vec x;\vec y) := h(\vec x; \vec y, g(\vec x; \vec y))$ is in $\B(\vec a)$.
%     \item if $g(\vec x;)\in \B(\emptyset)$ and $h(\vec x, x; \vec y)\in \B(\vec a)$ then $f(\vec x;\vec y):= h(\vec x, g(\vec x;);\vec y)$ is in $\B(\vec a)$.
    
%     \medskip
    
%     \item if $g(\vec x;\vec y) \in \B(\emptyset)$ and $h_i (x, \vec x;\vec y ,y) \in \B(\vec a)$ then $f(x,\vec x;\vec y) \in \B(\vec a)$ where:
%     $$
%     \begin{array}{r@{\ = \ }ll}
%         f(0,\vec x;\vec y) & g(\vec x;\vec y) \\
%         f(\succi i x,\vec x;\vec y) & h_i(x,\vec x;\vec y,f(x,\vec x;\vec y)) & x\neq 0
%     \end{array}
%     $$
% \end{itemize}
% \todo{Anupam: The algebra $\B(\vec a)$ is defined just as usual, but with safe composition along a normal parameter as in $\B^{\subset}$.}

We shall employ the following writing conventions for the remainder of this work.
For a set of (single-sorted) functions $F$, let us write:
\begin{itemize}
    \item $\nor F$ for the set of two-sorted functions $f(\vec x;)$ for each $f(\vec x) \in F$;
    \item $\safnor F$ for the set of two-sorted functions $f(\vec x;\vec y)$ for each $f(\vec x,\vec y)\in F$.
\end{itemize}

Given a set $F$ of two-sorted functions, the algebra $\bc(F)$ is defined just like $\bc$ but with additional initial (two-sorted) functions $F$. 
\Achange{Note that, since functions of $\bc(F)$ are given by finite programs, they can only depend on finitely many members of $F$.}

% By a routine adaptation of the Bounding Lemma from \cite{CurziDas} to take account of relational oracles, we have:
% \begin{lem}
% [Relational Bounding lemma]
% \label{lem:rel-bounding-bc}
% Let $R$ be a set of relations.
% If $f(\vec r)(\vec x;\vec y) \in \bc(R)$, with all oracles indicated, then there is some $m_f(\vec x,\vec y ) = p_f(|\vec x|) + \max |\vec y|$, for some polynomial $p_f$, such that:
% \begin{itemize}
%     \item (Polynomial modulus of growth)
%     $|f(\vec r)(\vec x;\vec y)| < m_f(\vec x,\vec y)$
%     \item (Polynomial modulus of continuity)
%     $f(\vec r)(\vec x;\vec y) = f(\lambda |\vec u_i|,|\vec v_i| < m_f(\vec x,\vec y) r_i(\vec u_i;\vec v_i))_i(\vec x;\vec y)$
% \end{itemize}
% \end{lem}
% \anupam{commented bounding lemma here}

% Now, let us set $\RR := \{f(\vec x) : |\vec x| = |\vec y| \implies f(\vec x) = f(\vec y) <2 \} $.

\newcommand{\Eval}{\mathrm{Eval}}
\begin{prop}\label{prop:fppoly-in-B(R0)}
    $ \FPpoly \subseteq \bc(\RRnor)$
\end{prop}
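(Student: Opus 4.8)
The plan is to show $\FPpoly \subseteq \bc(\RRnor)$ by using the characterisation of $\FPpoly$ via polynomial advice, together with the standard fact that $\bc$ already captures $\fptime$. Recall that if $f(\vec x) \in \FPpoly$ then there are advice strings $\alpha_{\vec n}$ of polynomial size and some $f'(x,\vec x) \in \fptime$ with $f(\vec x) = f'(\alpha_{|\vec x|},\vec x)$. The key observation is that the map $\vec n \mapsto \alpha_{\vec n}$, viewed at the level of individual bits, is exactly an oracle of the kind collected in $\RR$: define $a(\vec x, i) \in \RR$ to return the $|i|$\textsuperscript{th} bit of $\alpha_{|\vec x|}$ (padded with $0$ beyond the length of $\alpha_{|\vec x|}$), which is well-defined precisely because it depends only on $|\vec x|$ and $|i|$. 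So $a \in \RR$ and hence $\nor a \in \RRnor$.

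First I would note that, since $f' \in \fptime$, by Bellantoni and Cook's theorem we have $f'(x,\vec x;) \in \bc \subseteq \bc(\RRnor)$. Next I would show how to assemble the advice string $\alpha_{|\vec x|}$ as a single number, inside $\bc(\RRnor)$, from the oracle $\nor a$. Since $|\alpha_{\vec n}|$ is bounded by a fixed polynomial $p(\vec n)$, it suffices to build a function $\mathsf{Adv}(\vec x;)$ that, using $\nor a$ and safe recursion on notation (driven by a polynomially long normal parameter obtainable from $\vec x$ in $\bc$, e.g.\ via the usual smash-style constructions available in $\bc$ — or more simply by taking the normal recursion parameter to be a term of length $p(|\vec x|)$ built from $\vec x$), reads off bits $a(\vec x, 0), a(\vec x, 1), \dots, a(\vec x, p(|\vec x|)-1)$ and concatenates them. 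Concretely this is a safe recursion whose step function appends the next oracle bit, analogous to the coderivation $\mathcal{A}(r)$ discussed in the excerpt but realised here as a genuine $\bc$-recursion of polynomial length. Then $f(\vec x;) = f'(\mathsf{Adv}(\vec x;),\vec x;)$ by safe composition along the normal argument of $f'$, so $f(\vec x;) \in \bc(\RRnor)$.

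The main obstacle I anticipate is the bookkeeping needed to obtain a \emph{normal} recursion parameter of exactly the right polynomial length to drive the construction of $\mathsf{Adv}$: in $\bc$ one cannot freely move safe values into normal position, so one must produce the length-$p(|\vec x|)$ driver purely from the normal inputs $\vec x$, which requires the standard fact that polynomials of $|\vec x|$ are available as lengths of $\bc$-definable (indeed $\fptime$-definable) functions of $\vec x$, combined with closure of $\bc$ under safe composition. A secondary point is checking that $\mathsf{Adv}$ genuinely lies in $\bc(\RRnor)$: the oracle $\nor a$ must be queried only at arguments in normal position, which is fine since its inputs are $\vec x$ (normal) and the recursion parameter (normal). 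Once these are in place, the rest is a routine application of $\bc = \fptime$ and safe composition.

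\begin{proof}
Let $f(\vec x) \in \FPpoly$, with polynomial advice $\{\alpha_{\vec n}\}_{\vec n}$, $|\alpha_{\vec n}| \le p(\vec n)$ for some polynomial $p$, and $f'(x,\vec x) \in \fptime$ with $f(\vec x) = f'(\alpha_{|\vec x|},\vec x)$. Define $a(\vec x, i)$ to be the $|i|$\textsuperscript{th} bit of $\alpha_{|\vec x|}$ if $|i| < |\alpha_{|\vec x|}|$, and $0$ otherwise; since this value depends only on $|\vec x|$ and $|i|$, we have $a \in \RR$, hence $\nor a \in \RRnor$.

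Since $f' \in \fptime$, by Bellantoni and Cook's theorem $f'(x,\vec x;) \in \bc \subseteq \bc(\RRnor)$. Using standard $\bc$ constructions, fix a function $t(\vec x;) \in \bc$ with $|t(\vec x;)| = p(|\vec x|)$. By safe recursion on notation, driven by the normal argument $t(\vec x;)$ and using $\nor a$ in the step functions to append the next advice bit, one defines $\mathsf{Adv}(\vec x;) \in \bc(\RRnor)$ with $\mathsf{Adv}(\vec x;) = \alpha_{|\vec x|}$. Finally, by safe composition along the normal argument of $f'$,
\[
f(\vec x;) = f'(\mathsf{Adv}(\vec x;),\vec x;) \in \bc(\RRnor).
\]
As $f$ was arbitrary, $\FPpoly \subseteq \bc(\RRnor)$.
\end{proof}
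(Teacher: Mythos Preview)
Your proof is correct and follows essentially the same approach as the paper: define an oracle in $\RRnor$ that exposes the bits of the non-uniform data (advice for you, circuit description in the paper), reconstruct that data by a single safe recursion of polynomial length, and then feed it to a $\bc$-translation of the relevant $\fptime$ function. The only cosmetic difference is that the paper routes through polynomial-size circuits and an evaluator $\Eval(y;x)$ taking the reconstructed object in \emph{safe} position (via the clocked form of Bellantoni--Cook's Lemma~3.2), whereas you work directly from the advice definition and place $\mathsf{Adv}(\vec x;)$ in the \emph{normal} argument of $f'$; both placements are legitimate since the reconstructed object has only normal inputs.
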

One natural way to prove this result would be to go via $\fptime(\RR)$, in light of \cref{prop:fppoly=fptime(RR)}. 
Indeed Bellantoni established foundational results relating $\fptime(R)$ and versions of $\bc(R)$, for $R$ a set of relations, in \cite{Bellantoni-fph}, but unfortunately the sorting of the corresponding arguments is subtle and does not immediately give the result we are after. 
For this reason we give a direct proof, that nonetheless inlines some ideas from \cite{Bellantoni-fph}.
\begin{proof}
[Proof of \cref{prop:fppoly-in-B(R0)}]
    % For the right-left inclusion (`completeness'), 
    Let $\mathbf C = (C_n)_{n<\omega}$ be a circuit family with each $C_n$ taking $n$ inputs and having size $<p(n)$, for some (monotone) polynomial $p$.
    % We identify $C_n$ with its description as a binary word, that we shall read left-right (unlike natural numbers).
    We need to show that the language computed by $\mathbf C$ is also computed in $\B(\RRnor)$.
    
    First, let $\Eval(x,y)$ evaluate the circuit described by $x$ on the input $y$.
    Since $\Eval \in \fptime$, we have as standard (e.g.\ by \cite[Lemma 3.2]{BellantoniCook}) a function $\Eval(m;x,y)\in \B$ and a monotone polynomial $q$ such that $|m|\geq q(|x|,|y|) \implies \Eval(m;x,y) = \Eval(x,y)$.
    Now, in particular, if $x$ is the description of some $C_n$ and $n=|y|$, then also $|x|\leq p(|y|)$, and so $|m|\geq q(p(|y|),|y|) \implies \Eval(m;x,y) = \Eval(x,y)$.
    Finally, {denoting  $\overset{n}{\overbrace{\succ 1\ldots \succ 1 \phantom{|}}}0$ \Achange{by} $1^n$}, this means that we have $\Eval(y;x) := \Eval(1^{q(p(|y|),|y|)};x,y) \in \B$, that in particular evaluates, when $x$ describes $C_{|y|}$, the circuit $ C_{|y|}$ on input $y$.
    
    Now, let $c\in \RRnor$ with $c(y,z;)=$  $|z|$\textsuperscript{th} bit of $C_{|y|}$.\anupam{read the description left-right, with infinitely many trailing zeroes. We assume that circuits are coded in a prefix-free way that the evaluator takes into consideration.}
    We show that the function $C(y,z;) = c(y,0;)\cdot c(y,1;)\cdot \cdots \cdot c(y,1^{|z|-1};)$ is in $\B(c)$ by the following instance of safe recursion:
    $$
    \begin{array}{r@{\ = \ }l}
         C(y,0;) & 0 \\
         C(y,\succ i z;) & \cnd (; c(y,z;), \succ 0 (;C(y,z;)), \succ 1 (; C(y,z;)))
    \end{array}
    $$
    So we have that $C(y;) := C(y,1^{p(|y|)};)$ computes the description of $C_{|y|}$.\anupam{...up to some trailing 0s that the evaluator ignores}
    Now we can decide whether $y$ is accepted by $C_{|y|}$ simply by calling the function $\Eval(y;C(y;)) \in \B(c)$.
\end{proof}

% For a set of relations $R$, let us write $R_0$ for the set of normal-only functions $r(\vec x;)$ for each $r(\vec x) \in R $.\todo{could make clearer}
% \begin{proposition}\label{prop:B(R)=fppoly}
% Let $R$ be a set of relations.
% $  \fptime(R)\subseteq \bc(R_0)$
% \end{proposition}
% \begin{proof}
% % The first equation is given by \cref{prop:fppoly=fptime(RR)}. 
% This inclusion is from Bellantoni's thesis \todo{check this}.
% \end{proof}

% \red{
% The following proposition shows that we can always see a function  $f\in \bc(\vec{a})$ as a composition of oracles along safe parameters of a oracle-free function in $\bc$. This proposition is useful to translate $\bc(\vec{a})$ into our non-wellfounded proof systemm, and its proof is straightforward.
% \begin{proposition}\label{prop:equivalence-function-algebras} For any $f(\vec a)(\vec x; \vec y)\in \bc(\vec a)$ there is a function $f'(\emptyset)(\vec x; \vec y, \vec w)\in \bc(\emptyset)$ such that $f(\vec x; \vec y)= f'(\vec x; \vec y, \vec{a}(\vec x; \vec y))$. 
% \end{proposition}
% In particular, it follows that the new definition of $\bc(\vec a)$ and the one on the LICS paper are equivalent.

% \begin{rem}\label{rem:1}
% Notice that the above result does not hold for $\bcpp(\vec a)$, as some oracles are replaced by recursive calls, so they can't commute with closure operations.
% \end{rem}
% }
% \anupam{I commented the (incorrect) proposition 3 here. Instead we have the bounding lemma next that is used for the inclusion $\bc(\RR)\subseteq \dots$.}

It turns out that we also have the converse inclusion too.
This will be subsumed by our later results but we include it here for the sake of completeness.
The key is to establish a general form of Bellantoni and Cook's polymax bounding lemma to account for modulus of continuity as well as growth:

\begin{lem}
[Relational bounding lemma for $\bc$]
Let $R$ be a set of two-sorted relations, and suppose $f(R)(\vec x;\vec y) \in \bc(R)$.
Then there is a polynomial $p_f$ such that, setting $m_f(\vec m,\vec n) := p_f(\vec m) +\max \vec n$, we have:
\begin{itemize}
    \item (Polynomial modulus of growth) $|f(R)(\vec x;\vec y)| < m_f(|\vec x|,|\vec y|)$
    \item (Polynomial modulus of continuity)
    $f(R)(\vec x;\vec y) = f(\lambda |\vec u|,|\vec v|<m_f(|\vec x|,|\vec y|) . r(\vec u;\vec v))_{r\in R}(\vec x;\vec y)$
\end{itemize}
\end{lem}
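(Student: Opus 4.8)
The statement combines the usual Bellantoni--Cook polymax bounding lemma with a "modulus of continuity" clause tracking which bits of the oracles $r \in R$ a function can depend on. The plan is to prove both clauses simultaneously by induction on the $\bc(R)$-derivation of $f$, i.e.\ on the structure of the function definition. The key observation making the continuity clause work is that $f(R)(\vec x;\vec y)$ can only \emph{query} an oracle $r \in R$ on arguments $(\vec u;\vec v)$ whose lengths are bounded by the same quantity $m_f(|\vec x|,|\vec y|)$ that bounds the output --- so truncating each $r$ to inputs of length $< m_f(|\vec x|,|\vec y|)$ changes nothing. Concretely, first I would fix notation: for a polynomial bound $b$ and relation $r$, write $r{\upharpoonright} b$ for the relation agreeing with $r$ on arguments of length $< b$ and returning (say) $0$ elsewhere; then the continuity clause asserts $f(R)(\vec x;\vec y) = f(R{\upharpoonright} m_f(|\vec x|,|\vec y|))(\vec x;\vec y)$.

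\textbf{The induction.} For the base cases: each Bellantoni--Cook initial function ($0$, $\succ 0$, $\succ 1$, projections, $\pred$, $\cnd$) has an obvious linear modulus of growth and makes \emph{no} oracle queries at all, so continuity is trivial; for an oracle $r \in R$ itself, used as an initial function $r(\vec x;\vec y)$, the output is in $\{0,1\}$ so growth is bounded by a constant, and the query it makes is exactly to $r$ on $(\vec x;\vec y)$, whose length is $\leq \max(|\vec x|,|\vec y|) < m_r(|\vec x|,|\vec y|)$ for any $p_r$ with constant term $\geq 1$, so truncation at that threshold preserves the value. For safe composition, say $f(\vec x;\vec y) = h(\vec x, g(\vec x;); \vec y)$ with IH giving $p_g, p_h$: set $p_f(\vec m) := p_h(\vec m, p_g(\vec m)) + p_g(\vec m)$ and verify $|f| < p_h(|\vec x|, |g(\vec x;)|) + \max|\vec y| \leq p_h(|\vec x|, p_g(|\vec x|)) + \max|\vec y| < m_f(|\vec x|,|\vec y|)$, using monotonicity of $p_h$ in its last argument; for continuity, the queries made by $f$ are those made by $g$ (at lengths $< m_g(|\vec x|)$) together with those made by $h$ on its arguments, and since $|g(\vec x;)| < m_g(|\vec x|) \leq m_f(|\vec x|,|\vec y|)$ and $m_h(|\vec x|, |g(\vec x;)|, |\vec y|) \leq m_f(|\vec x|,|\vec y|)$ by the choice of $p_f$, all queries lie below $m_f(|\vec x|,|\vec y|)$. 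The safe-composition-along-a-safe-argument case is analogous and simpler. For safe recursion on notation with step functions $h_0, h_1$ and base $g$: following Bellantoni--Cook, one shows by an inner induction on $|x|$ that $|f(x,\vec x;\vec y)| \leq \max_i p_{h_i}(|x|, |\vec x|) \cdot |x| + p_g(|\vec x|) + \max|\vec y|$, which is bounded by a polynomial $p_f(|x|, |\vec x|)$ plus $\max |\vec y|$; the continuity clause then follows because across the (at most $|x|$ many) unfoldings of the recursion, every intermediate value fed to $h_0, h_1$ has length bounded by $m_f(|x|,\vec x;\vec y)$, so every oracle query occurring in the whole computation tree stays below that threshold --- here one uses monotonicity of all the polynomials and the fact that the recursion parameter only decreases.

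\textbf{Main obstacle.} The genuinely delicate point is the safe recursion case: one must be careful that the \emph{entire} computation of $f(x,\vec x;\vec y)$, which unfolds the recursion $|x|$-many times and at each stage invokes $h_i$ (which itself makes oracle queries and recursively calls $f$ in safe position), never queries an oracle outside the length window $[0, m_f(|x|,\vec x;\vec y))$. This is exactly where the safe/normal discipline pays off: because recursive calls sit only in safe positions and the polymax growth bound controls the \emph{length} of every safe value simultaneously, a single polynomial $m_f$ bounds both the magnitude of all intermediate values and hence the magnitude of all oracle queries. I would make this precise by first establishing the growth bound (the classical argument), and then observing that the set of oracle-query-arguments appearing anywhere in the computation of $f(R)(\vec x;\vec y)$ is contained in the set of oracle-query-arguments appearing in computations of $g$ and $h_i$ on arguments whose lengths are themselves $< m_f(|\vec x|,|\vec y|)$, so an outer induction (on $|x|$, with the IH on $h_i$ from the \emph{outer} structural induction) closes the argument. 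Everything else is routine polynomial bookkeeping.
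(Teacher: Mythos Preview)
Your proposal is correct and follows the natural approach. The paper does not actually give a self-contained proof of this particular lemma; it is stated as a folklore-style generalisation of Bellantoni and Cook's polymax bounding lemma, and the paper immediately moves on. What the paper \emph{does} prove in detail (in the appendix) is the analogous bounding lemma for the richer algebra $\bcpp(R)$, and the structure of that proof matches yours: induction on the construction of $f$, with an inner induction on the recursion parameter in the recursion case, and the continuity clause piggybacking on the growth clause because every intermediate value (hence every oracle query) has length bounded by $m_f$.

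The only methodological difference worth noting is that the paper's $\bcpp$ proof is stated with an extra layer of generality: it carries additional oracle symbols $\vec a$ together with constants $\vec c$ bounding their output lengths, so that the recursion case can be handled by instantiating one such $a$ with the (guarded) recursive call and setting its constant $c$ to the sub-inductive bound $p_f(\sumlen{\vec x}-1)$. This device is genuinely needed for the $\srecpp$ scheme (where recursive calls are modelled as oracle calls of unspecified arity), but for plain $\bc$ with ordinary safe recursion on notation your direct inner induction on $|x|$ is entirely adequate and arguably cleaner. Your handling of the ``main obstacle'' --- that the safe/normal discipline ensures all intermediate safe values, and hence all oracle-query arguments, stay under a single polymax bound --- is exactly the point.
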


\noindent
Using this we may establish:
\begin{prop}
[E.g.\ see \cite{Bellantoni-fph}]
Let $R$ be a set of relations.
$\bc(\safnor R)\subseteq \fptime(R)$.
\end{prop}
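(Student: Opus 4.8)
The plan is to prove $\bc(\safnor R)\subseteq \fptime(R)$ by the standard simulation argument, lifted through the relational bounding lemma that immediately precedes the statement. So suppose $f(\vec x;\vec y) = g(R)(\vec x;\vec y)$ for some $g(R) \in \bc(R)$, with the convention that $\vec y$ is empty when we speak of $f(\vec x) = g(R)(\vec x;)$ as a single-sorted function. First I would note that, since a function of $\bc(R)$ depends on only finitely many members of $R$, we may as well assume $R$ is finite; a polynomial-time oracle machine with access to oracles for each $r \in R$ then has the resources to query all of them. The proof is by induction on the build of $g(R)$ as a term in the algebra $\bc(R)$.

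The induction cases for the initial functions ($0$, $\succ 0$, $\succ 1$, $\pred$, $\cnd$, projections) and for the two clauses of safe composition are exactly as in the original Bellantoni--Cook argument, and the new case — an initial oracle function $r \in R$ — is immediate: the oracle machine simply queries its $r$-oracle, which takes one step plus the time to write down the (polynomially bounded, by the growth part of the bounding lemma) arguments. The one case that needs the bounding lemma in earnest is safe recursion on notation. Suppose
\begin{align*}
f(0,\vec x;\vec y) &= g(\vec x;\vec y)\\
f(\succ i x,\vec x;\vec y) &= h_i(x,\vec x;\vec y,f(x,\vec x;\vec y))
\end{align*}
with $g,h_0,h_1$ handled by the induction hypothesis via polynomial-time oracle machines. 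To compute $f(x_0,\vec x;\vec y)$ one iterates: starting from $g(\vec x;\vec y)$, one peels the bits of $x_0$ off from the least significant end, applying the appropriate $h_i$ at each stage. There are $|x_0|$ iterations; by the polynomial modulus of growth, every intermediate value $f(x',\vec x;\vec y)$ (for $x'$ a suffix of $x_0$) has length bounded by $m_f(|\vec x|,|\vec y|)$, a fixed polynomial in the lengths of the inputs, so each call to the $h_i$-machine takes only polynomially many steps on polynomially sized arguments. Hence the total run time is polynomial, and the oracle queries made are all on polynomially sized inputs.

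The only genuine subtlety — and the step I would flag as the main point to get right — is the bookkeeping around sorts: the recursion parameter must be a \emph{normal} argument, and it is precisely this that bounds the number of iterations polynomially in the \emph{length} of that argument rather than its value, while the recursive call sits in \emph{safe} position so it never itself drives another recursion. This is the content of Bellantoni and Cook's original polytime soundness proof, and the relational bounding lemma stated above is exactly the invariant that carries it through; the modulus-of-continuity clause of that lemma plays no role in this direction (it is needed for the reverse inclusion, $\fptime(R)\subseteq\bc(\safnor R)$), so here I would only invoke the polynomial modulus of growth. Finally, since $f(\vec x) = g(R)(\vec x;)$ has no safe arguments, the constructed machine witnesses $f \in \fptime(R)$, completing the induction and the proof.
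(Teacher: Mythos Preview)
Your proof is correct and follows the standard Bellantoni--Cook simulation argument, appropriately relativised. However, the paper does not actually give a proof of this proposition at all: it is stated with a citation to Bellantoni's thesis and followed immediately by the remark ``We shall not actually need this result directly in this work, rather recovering (a version of) it from a more refined grand tour of inclusions.'' So there is nothing in the paper to compare against; your direct argument is precisely the kind of proof the paper is gesturing at when it cites \cite{Bellantoni-fph}.

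One small quibble on your parenthetical: you say the modulus-of-continuity clause is needed for the reverse inclusion $\fptime(R)\subseteq\bc(\safnor R)$, but that is not quite where it is used in this paper. The continuity bound matters when one wants to pack oracle responses into polynomial advice (as in the right-to-left direction of \Cref{prop:fppoly=fptime(RR)}), not for the completeness of $\bc$ with respect to $\fptime$. This does not affect your argument, which correctly invokes only the growth clause.
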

We shall not actually need this result directly in this work, rather recovering (a version of) it from a more refined grand tour of inclusions.
However this does lead to the first `implicit' characterisation of $\FPpoly$ of this work:
\begin{cor}
$\bc(\RRnor) = \FPpoly$
\end{cor}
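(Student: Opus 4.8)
The plan is to derive this as an immediate consequence of the two preceding results together with \cref{prop:fppoly=fptime(RR)}. First I would recall that \cref{prop:fppoly-in-B(R0)} gives the inclusion $\FPpoly \subseteq \bc(\RRnor)$ outright, so only the converse $\bc(\RRnor) \subseteq \FPpoly$ remains. For this I would invoke the preceding proposition (\cite{Bellantoni-fph}-style), instantiated at $R = \RR$: it yields $\bc(\safnor{\RR}) \subseteq \fptime(\RR)$. Since $\RRnor \subseteq \safnor{\RR}$ — a function $r(\vec x;)$ with $r\in \RR$ is in particular a function $r(\vec x;\vec y)$ ignoring its safe arguments, and it still lies in $\RR$ — we get $\bc(\RRnor) \subseteq \bc(\safnor{\RR}) \subseteq \fptime(\RR)$. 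Finally, \cref{prop:fppoly=fptime(RR)} identifies $\fptime(\RR)$ with $\FPpoly$, closing the loop.

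The only genuinely substantive ingredient is the relational bounding lemma for $\bc$ underlying the quoted proposition, and the subtlety there — as flagged in the remarks around \cref{prop:fppoly-in-B(R0)} — is the interaction between the sorting of arguments in $\bc(R)$ and the modulus-of-continuity bound. Concretely, one proves by induction on the $\bc(R)$-derivation that every definable function has a polynomial modulus both of growth (the usual Bellantoni–Cook polymax bound, with safe arguments contributing only additively) and of continuity (how far into each oracle $r\in R$ the value can depend), where the continuity modulus is controlled by the same polynomial in the lengths of normal inputs plus the max of the lengths of safe inputs. The cases for the initial functions and safe composition are routine; the case of safe recursion on notation is where one must check that iterating a function $h$ with a polynomial modulus still yields a polynomial modulus, which works precisely because the recursion parameter is normal and recursive calls sit in safe position — exactly the feature that keeps $\bc$ inside polynomial time in the first place. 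Given this lemma, the containment $\bc(\safnor R)\subseteq \fptime(R)$ follows by the standard simulation: a polynomial-time oracle machine evaluates the $\bc(R)$-term, making only polynomially many oracle queries of polynomially bounded length, as licensed by the two moduli.

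I expect the main obstacle — insofar as there is one in what is essentially a corollary — to be purely bookkeeping: making sure the inclusion $\RRnor \subseteq \safnor{\RR}$ is stated at the right level of generality (functions in $\RR$ depend only on the lengths of their arguments, so adding dummy safe arguments keeps one inside $\RR$) and that the proposition's hypothesis "$R$ a set of relations" is met by $\RR$, whose members are by definition $\{0,1\}$-valued. Since the excerpt explicitly says this corollary "will be subsumed by our later results" and is included "for the sake of completeness," a short proof citing the three results above is entirely appropriate, and I would not reprove the bounding lemma here.

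\begin{proof}
The inclusion $\FPpoly \subseteq \bc(\RRnor)$ is \cref{prop:fppoly-in-B(R0)}. Conversely, every $r\in\RR$ depends only on the lengths of its arguments, so adjoining dummy safe arguments to such an $r$ yields again a member of $\RR$; hence $\RRnor \subseteq \safnor{\RR}$ and therefore $\bc(\RRnor)\subseteq \bc(\safnor{\RR})$. Since $\RR$ is a set of relations, the previous proposition gives $\bc(\safnor{\RR})\subseteq \fptime(\RR)$. Combining, $\bc(\RRnor)\subseteq \fptime(\RR) = \FPpoly$ by \cref{prop:fppoly=fptime(RR)}.
\end{proof}
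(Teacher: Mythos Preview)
Your proposal is correct and matches the paper's intended (implicit) argument: the corollary is stated without proof precisely because it follows at once from \cref{prop:fppoly-in-B(R0)}, the preceding proposition $\bc(\safnor R)\subseteq \fptime(R)$ instantiated at $R=\RR$, and \cref{prop:fppoly=fptime(RR)}. One small simplification: the inclusion $\RRnor \subseteq \safnor{\RR}$ holds for trivial reasons (take the safe argument list empty in the definition of $\safnor F$), so no appeal to the length-dependence property of $\RR$ is needed at that step.
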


	\section{$\FPpoly\subseteq \nucbc$ via relativised circular systems}\label{sec4}\label{sec:fppoly-in-nuB}

In this section we establish one direction of \Cref{thm:main-result}.
In particular, by the end of this section, we will have established the following inclusions,
\[
\FPpoly \subseteq \bc(\RRnor)\subseteq \cbc(\RRnor)\subseteq \nucbc
\]
where \Achange{$\cbc(F)$} is an extension of $\cbc $ by new initial sequents for two-sorted functions \Achange{in $ F$}.

\subsection{Relativised simulation of $\bc$ in $\cbc$}
We shall consider `relativised' versions of $\cbc$, that may include new initial sequents.
Formally:

\begin{defn} \label{defn:oracles}
Let \Achange{$F$ be a set} of two-sorted functions. A $\bcnorec(F)$-coderivation is just a usual $\bcnorec$-coderivation that may use initial sequents of the form
$\vlinf{f}{}{\sn^{n_i},\n^{m_i}\seqar \n }{}$, when $f\in F $ takes $n_i$ normal and $m_i$ safe inputs. 
% We write:
% $$
%     \vlderivation{
%     \small
% \vltr{\mathcal{D}(\vec a)}{\Gamma\seqar A}{\vlhy{}}{\scriptsize \vlin{a_i}{i}{\sn^{n_i},\n^{m_i}\seqar \n }{\vlhy{\ }}}{\vlhy{\ }}
% }
% $$
% for a coderivation $\der$ whose initial sequents are among the initial sequents  $\vlinf{a_i}{}{\sn^{n_i},\n^{m_i}\seqar \n }{}$, with $i=1,\dots, n$.  
We write $ \cbc(F)$ for the set of $\cbc$-coderivations allowing initial sequents from $F$. 
% We may sometimes omit indicating (some of) $\vec f$ if it is clear from context.
% 
The semantics of such coderivations and the notion of $\cbc(F)$-definability are as expected, 
with 
% coderivations representing functions over the oracles $\vec f$, 
% and 
$\denot{\der(F)}$ denoting the induced interpretation of $\der(F) \in \cbc(F)$. 
\end{defn}

\Achange{Note, again, that since $\cbc(F)$ coderivations are regular, they only depend on finitely many members of $F$.}
By a modular extension of the result that $\bc \subseteq \cbc$ from \cite{CurziDas}, we obtain:
\begin{prop}
\label{prop:b(f)-in-cbc(f)}
Let $F$ be a set of two-sorted functions. $\bc(F)\subseteq \cbc(F)$.
\end{prop}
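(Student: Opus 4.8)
The plan is to adapt the proof from \cite{CurziDas} that $\bc \subseteq \cbc$ in a modular fashion, checking that the only new ingredient --- initial sequents for functions $f \in F$ --- is handled trivially since such sequents are simply carried over verbatim into $\bcnorec(F)$-coderivations. Concretely, given $f(\vec x; \vec y) \in \bc(F)$, we argue by induction on the construction of $f$ in the algebra $\bc(F)$, building a $\cbc(F)$-coderivation $\der$ with $\denot{\der}(\vec x;\vec y) = f(\vec x;\vec y)$. For the base cases: if $f$ is one of the basic $\bc$-functions ($0$, $\succ i$, projections, $\pred$, $\cnd$), we use the corresponding finite $\bcnorec$-derivation, which is vacuously a regular progressing safe left-leaning coderivation; if $f \in F$ is an initial function, we use the one-rule coderivation $\vlinf{f}{}{\sn^{n_i},\n^{m_i} \seqar \n}{}$, which again is trivially regular, progressing (no infinite branches), safe and left-leaning.

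For the inductive cases, safe composition along a safe argument is simulated by $\cut_\n$ and along a normal argument by $\cut_\sq$ (using the semantics in \Cref{fig:semantics-of-bc}); here I would reuse the argument from \cite{CurziDas} that composing $\cbc$-coderivations via a single $\cut$ preserves regularity, progressiveness, safety and left-leaningness --- in particular only finitely many new $\cut_\sq$ steps are introduced along any branch, and no branch goes right at a new $\cut_\n$ infinitely often, since each composition adds just one cut. The interesting case is safe recursion on notation: given coderivations for $g, h_0, h_1$, one builds a \emph{regular} (circular) coderivation that loops via a $\cnd_\sq$ step on the recursion parameter, exactly as the coderivation $\cconc$ in \Cref{fig:examples-regular-coderivations} is built; the back-edge creates an infinite branch, but the thread following the recursion parameter $\sn$ is eventually constant $\sn$ and infinitely often principal for $\cnd_\sq$, hence progressing. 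Safety holds because the loop crosses no $\cut_\sq$, and left-leaningness because it crosses no right premise of a $\cut_\n$. One then checks that the semantics of this circular coderivation, read as an equational program (\Cref{defn:semantics-coderivations}), unfolds to precisely the recursion equations defining $f$.

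Since all of the above constructions are exactly those of \cite{CurziDas} for $\bc \subseteq \cbc$, with the sole addition of the trivial base case for $f \in F$, the proof amounts to verifying that the inductive invariants (regularity, progressiveness, safety, left-leaningness) are preserved by each algebra operation and that the base case for oracles does not disturb them --- which is immediate as that coderivation is finite. I do not expect any genuine obstacle here; the statement is essentially a bookkeeping observation that the simulation is \emph{relativisable}. The one point requiring a line of care is regularity in the safe-recursion case: the circular coderivation built for $f$ has finitely many distinct sub-coderivations precisely because $g, h_0, h_1$ do (by IH) and the loop introduces only one new node modulo the back-edge; this is the same argument as in \cite[proof of $\bc\subseteq\cbc$]{CurziDas}, and $F$-initial sequents, being leaves, contribute no new sub-coderivations beyond themselves.
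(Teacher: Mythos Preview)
Your proposal is correct and follows the same approach as the paper's proof: structural induction on the construction of a $\bc(F)$ function, with safe recursion simulated by a circular coderivation looping through $\cnd_\sq$ (and a $\cut_\n$ to feed the recursive call into $h_i$), and the only new base case being an initial function $f\in F$, handled by the corresponding initial sequent. The paper's argument is in fact terser than yours; your extra remarks about why regularity, progressiveness, safety and left-leaningness are preserved are accurate and consistent with what is implicit in the paper's appeal to~\cite{CurziDas}.
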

The proof is simply by structural induction on the definition of a $\bc(F)$ function, where the recursion cases are handled by circularity as in \cite{CurziDas}. In particular, if $f$ is defined by safe recursion on notation from $g,h_0,h_1$ then the corresponding $\cbc$-coderivation is given by:
$$
% \vlderivation{
% \vliiin{\srec}{}{\sn,\Gamma \seqar N}{
%     \vltr{g}{\Gamma \seqar N}{\vlhy{\ }}{\vlhy{}}{\vlhy{\ }}
% }{
%     \vltr{h_0}{\sn,\Gamma,N \seqar N}{\vlhy{\ }}{\vlhy{}}{\vlhy{\ }}
% }{
%     \vltr{h_1}{\sn,\Gamma,N\seqar N}{\vlhy{\ }}{\vlhy{}}{\vlhy{\ }}
% }
% }
% \begin{array}{r@{\ = \ }l}
%     f(0,\vec x;\vec y) & g(\vec x;\vec y) \\
%     f(\succ 0 x ,\vec x;\vec y ) & h_0 (x, \vec x;\vec y, f(x,\vec x;\vec y) \\
%     f(\succ 1 x ,\vec x;\vec y ) & h_1 (x, \vec x;\vec y, f(x,\vec x;\vec y) 
% \end{array}
% \quad \leadsto\quad
\small
\vlderivation{
\vliin{\cnd_\sq}{\bullet \quad {\text{\scriptsize$i=0,1$}}}{\sn, \Gamma \seqar N}{
    \vltr{g}{\Gamma \seqar N}{\vlhy{\ }}{\vlhy{}}{\vlhy{\ }}
}{
    \vliin{\cut_\n}{}{\sn, \Gamma \seqar N}{
        \vlin{\cnd_\sq}{\bullet}{\sn, \Gamma \seqar N}{\vlhy{\vdots}}
    }{
        \vltr{h_i}{\sn, \Gamma, N\seqar N}{\vlhy{\ }}{\vlhy{ }}{\vlhy{\ }}
    }
}
}
$$
The only new cases in the induction are for an initial function from $F$, which is simply translated into the appropriate initial sequent.

% \begin{lem}\label{lem:function-algebra-to-non-wellfounded} $\bc(\mathbb R)\subseteq \nucbc$.
% \end{lem}
% \begin{proof}
% Any real number $r: \n \to \n$ with $r(n)\in \{0,1\}$ can be represented by the $\nucbc$-coderivation $\completeness(r)$  encoding its graph (see~\Cref{fig:examples-equational-programs}), which is progressing,  and (trivially) safe and left-leaning. Also, $\completeness(r)$  is  $\relaxed$-regular,  since the sub-coderivations encoding $r(n)$ are just the rules $0$ and $1$, and so no  $\succ i$-step is needed.

% Now, let  $f(\RR)(\vec x; \vec y)\in \bc(\RR)$. Since $f$ has finite definition, $f(\vec r)(\vec x; \vec y)\in \bc(\vec r)$ for some $\vec r=r_1, \ldots, r_k \in \RR$. By~\Cref{prop:equivalence-function-algebras}  there is a function $f'(\emptyset)(\vec x; \vec y, \vec w)\in \bc(\emptyset)$ such that $f(\vec x; \vec y)= f'(\vec x; \vec y, \vec{r}(\vec x; \vec y))$. Since $\cbc=\fptime$ (\cite{CurziDas}), $f'(\vec x; \vec y, \vec w)\in \cbc$, so that there exists a $\nucbc$-coderivation $\der: \sq \vec \n, \vec \n, \vec \n \seqar \n$ such that $\denot \der=f'$. Then, the $\nucbc$-coderivation representing $f(\vec r)(\vec x; \vec y)$ can be obtained from  $\der$ by applying a series of $\cut_\n$-steps using the coderivations  $(\completeness(r_i))_{1 \leq i \leq k}$.
% \end{proof}
% \anupam{I commented old proof relying on incorrect proposition 3}

\subsection{Simulating $\RRnor$ oracles in $\nucbc$}
In this subsection we shall establish:
\begin{prop}\label{prop:cbc(R0)-to-nuB} 
$\cbc(\RRnor)\subseteq \nucbc$.
\end{prop}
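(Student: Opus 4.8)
The goal is to show $\cbc(\RRnor) \subseteq \nucbc$, i.e.\ that every regular progressing safe left-leaning $\bcnorec(\RRnor)$-coderivation can be replaced by a $\relaxed$-regular progressing safe left-leaning $\bcnorec$-coderivation (without oracle sequents) computing the same function. The plan is to take a coderivation $\der \in \cbc(\RRnor)$, which by regularity uses only finitely many oracles $r_1, \dots, r_k \in \RRnor$ (each $r_j$ being essentially a Boolean stream $r_j(0)r_j(1)\cdots$), and to \emph{unfold} each oracle initial sequent $\vlinf{r_j}{}{\sn^{n_j}, \n^{m_j} \seqar \n}{}$ into a (non-regular) sub-coderivation that reads off the relevant bit of the stream. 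The key technical device is already present in the excerpt: the coderivation $\mathcal{A}(r)$ from \Cref{fig:examples-regular-coderivations}, which on input $x$ returns $r(0) \cdot r(1) \cdots r(|x|-1)$, together with $\completeness(r)$ which computes $r(|x|)$ by exhausting values of $|x|$. In fact, what we need is precisely a coderivation computing $\lambda x.\, r(|x|)$: for an $\RRnor$ oracle depending only on lengths, $r(\vec x;) = r(|\vec x|)$, so we just need to plug the value $|\vec x|$ into a length-indexed stream lookup.

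The construction proceeds in two steps. First, for each $r_j$, build a $\bcnorec$-coderivation $\der_{r_j}$ with conclusion $\sn^{n_j}, \n^{m_j} \seqar \n$ whose interpretation is $r_j$; the natural candidate is a variant of $\completeness(r_j)$ (the "third line, right" coderivation in \Cref{fig:examples-regular-coderivations}), which by \Cref{exmp:3} is non-regular but, crucially, is $\mathsf{R}$-regular for $\mathsf{R} \subseteq \bcnorec \setminus \{0,1,\cndl_\sq\}$ when $r_j : \Nat \to \{0,1\}$, since each individual bit $r_j(i)$ is witnessed by a single $0$ or $1$ step. By \Cref{exmp:1} it is progressing (as $\completeness(f)$ is progressing whenever its components are, and here the components are trivial $0,1$ leaves), and it is visibly safe and left-leaning since it contains no $\cut_\sq$ and no $\cut_\n$ at all along the relevant looping branch (one should double-check the precise shape, absorbing weakenings as per \Cref{conv:convention} to handle the extra context $\sn^{n_j},\n^{m_j}$). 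Second, replace every occurrence of the initial sequent $\vlinf{r_j}{}{\sn^{n_j},\n^{m_j}\seqar\n}{}$ in $\der$ by $\der_{r_j}$. Call the result $\der'$. Since $\der$ was regular and we substituted finitely many fixed sub-coderivations, $\der'$ is $\relaxed$-regular: outside the plugged-in copies of the $\der_{r_j}$ there are only finitely many distinct sub-coderivations (inherited from $\der$), and within each $\der_{r_j}$ the only rules causing irregularity are among $\{0,1,\cndl_\sq\} \subseteq \relaxed$. One must verify $\{0,1,\cndl_\sq\} \subseteq \relaxed = \{\cnd_\sq, \cnd_\n, \suc 0, \suc 1, \id\}$ — if $\cndl_\sq$ is not literally in $\relaxed$ one falls back on \Cref{rem:extra-rules-redundant}, replacing $\cndl_\sq$ by $\cnd_\sq$ with a duplicated premise, which changes nothing semantically and keeps us inside $\relaxed$-regularity.

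Finally one checks that $\der'$ inherits the global conditions. Progressing: every infinite branch of $\der'$ either stays inside some plugged-in $\der_{r_j}$ from some point on — and there it has a progressing thread because $\der_{r_j}$ is progressing — or it passes through infinitely many steps of the "skeleton" $\der$, in which case it corresponds to an infinite branch of $\der$ (plugging in finite-on-each-branch... no: the branches through $\der_{r_j}$ are infinite, so one argues that a branch leaving every $\der_{r_j}$ infinitely often, hence spending only finitely long in each, projects to a branch of $\der$ and picks up $\der$'s progressing thread); a short case analysis closes this. Safety and left-leaning are similar and easier: the $\der_{r_j}$ contribute no $\cut_\sq$ and no right-branching $\cut_\n$, so any branch of $\der'$ crosses at most as many such steps as the corresponding behaviour in $\der$, which is finite by hypothesis. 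Semantics: by the equational-program reading (\Cref{defn:semantics-coderivations}, \Cref{rem:equational-programs}), $\denot{\der_{r_j}} = r_j$, so the equational program of $\der'$ is interderivable with that of $\der(F)$, giving $\denot{\der'} = \denot{\der(\RRnor)}$.

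The main obstacle I expect is the progressing check for $\der'$, specifically handling branches that oscillate between the skeleton and the plugged-in oracle coderivations infinitely often: one needs that such a branch still admits a single progressing thread, which requires either (a) showing the thread can be chosen to track the skeleton $\der$'s progressing thread through the $\der_{r_j}$ interludes (using that the oracle sequent has a $\n$ in the succedent that connects to $\der$'s thread), or (b) a slightly more careful argument that the oracle coderivation, though it does not itself progress on finite visits, does not disrupt threads passing through it. A clean way to sidestep part of this is to note each $\der_{r_j}$ is a \emph{finite} object on every branch that eventually exits it — i.e.\ the non-wellfounded part of $\der_{r_j}$ is confined to the $\cndl_\sq$-loop which, once you commit to outputting the $i$-th bit, terminates — so an infinite branch of $\der'$ either gets trapped in one $\der_{r_j}$'s loop (progresses via that loop's $\cndl_\sq$ thread) or visits the $\der_{r_j}$'s only finitely, passing through finite portions each time, hence is essentially an infinite branch of $\der$ up to finite detours and inherits $\der$'s progressing thread. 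The rest is routine bookkeeping, with care to keep all introduced rules inside $\relaxed$.
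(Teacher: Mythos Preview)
Your overall approach matches the paper's: replace each oracle initial sequent $\vlinf{r}{}{\sn^{n}\seqar \n}{}$ by a non-regular sub-coderivation computing $r$, built only from $0,1,\cndl_\sq$ (this is the paper's \Cref{lem:ext-compl-rel-nor}, which is indeed a multi-argument version of $\completeness(r)$, constructed by induction on the arity rather than by weakening --- weakening cannot work since $r$ may genuinely depend on all its normal arguments' lengths).

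However, you have the $\relaxed$-regularity condition backwards, and your proposed ``fix'' would break the argument. Recall the definition: $\der$ is $\mathsf R$-regular if it has only finitely many distinct sub-coderivations \emph{containing} a rule from $\mathsf R$. Thus the infinitely many distinct sub-coderivations arising inside each $\der_{r_j}$ must be $\relaxed$-\emph{free}, i.e.\ use no rule from $\relaxed=\{\cnd_\sq,\cnd_\n,\succ 0,\succ 1,\id\}$. The correct check is $\{0,1,\cndl_\sq\}\cap\relaxed=\emptyset$, which holds, not $\{0,1,\cndl_\sq\}\subseteq\relaxed$, which fails for all three rules. Your fallback of replacing $\cndl_\sq$ by $\cnd_\sq$ via \Cref{rem:extra-rules-redundant} would introduce infinitely many distinct sub-coderivations each containing $\cnd_\sq\in\relaxed$, destroying $\relaxed$-regularity. (This is precisely why the paper bootstraps the system with $\cndl_\sq$ and $0,1$ as primitive rules, cf.\ \Cref{rem:extra-rules-redundant}.)

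A smaller point: your worry about branches that ``oscillate'' between the skeleton and the plugged-in $\der_{r_j}$ is unfounded. Oracle sequents are \emph{leaves} of $\der$; once a branch of $\der'$ enters some $\der_{r_j}$ it never returns to the skeleton. Hence every infinite branch is either entirely a branch of $\der$ (inheriting its progressing thread) or eventually lies inside one $\der_{r_j}$ (which is progressing by construction). This also makes safety and left-leaning immediate.
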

By definition of $\nucbc$ and $\cbc$, it suffices to only consider the new initial sequents from $\RRnor$. For this we simply appeal to the following lemma:

\begin{lem}
\label{lem:ext-compl-rel-nor}
For each $r(\vec x;) \in \RRnor$, there is a $\nucbc$-coderivation defining it, in particular using only the rules $0,1,\cndl_\sq$.
\end{lem}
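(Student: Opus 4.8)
The target is Lemma~\ref{lem:ext-compl-rel-nor}: given $r(\vec x;)\in\RRnor$, build a $\nucbc$-coderivation $\der$ with $\denot\der=r$ using only $0,1,\cndl_\sq$. Since $r\in\RRnor$ means $r$ depends only on the \emph{lengths} $|\vec x|$ of its inputs, the function $r$ is really just a Boolean-valued function of a tuple of lengths, and the natural strategy is to read off its value by a non-wellfounded `case tree' that branches according to whether each normal input is $0$, even, or odd (i.e.\ ends in a $0$ or $1$), stripping one bit at a time via $\cndl_\sq$, and outputs the appropriate constant $0$ or $1$ at the leaves. This is exactly the pattern of the coderivation $\completeness(f)$ / $\mathcal{A}(r)$ in \Cref{fig:examples-regular-coderivations}: peel off bits of the normal argument with $\cndl_\sq$-steps, and at each `$0$' case (when the argument is exhausted) return a constant.

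\textbf{Key steps.} First I would reduce to the single-variable case: a coderivation handling one normal argument of type $\sn$ can be iterated/composed to handle a tuple $\sn^k$, so it suffices to give a $\nucbc$-coderivation for an arbitrary $r:\Nat\to\{0,1\}$ that depends only on $|x|$ — call its value on strings of length $n$ simply $r(n)\in\{0,1\}$. Then I would define $\der$ coinductively: at the root apply $\cndl_\sq$ with principal formula $\sn$; the left premise (the `$x=0$' case, corresponding to length $0$) is the constant $r(0)$, derived by a $0$- or $1$-step; the right premise (the `$x=\succ i x'$' case, for $x'$ of length $n-1$) is again a $\cndl_\sq$ with principal $\sn$, whose left premise is the constant $r(1)$ and whose right premise continues in the same fashion, so that after $n$ nested $\cndl_\sq$-steps the left premise supplies the constant $r(n)$. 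Formally, $\der$ is the prefix-closed tree where position $2^n$ carries $\cndl_\sq$ with conclusion $\sn\seqar\n$, position $2^n0$ carries the constant-$r(n)$ step ($0$ or $1$) with conclusion $\seqar\n$, and $2^{n+1}=2^n2$ continues the recursion. By the semantics in \Cref{fig:semantics-of-bc}, $\denot\der(\succ{i_1}\cdots\succ{i_n}0;)=r(n)$ and $\denot\der(0;)=r(0)$, which is precisely $r$ since $r$ ignores the bits $i_1,\dots,i_n$. Finally I would verify the side conditions defining $\nucbc$: (i) \emph{progressing} — every infinite branch is the unique branch through all the $2^n$, and it is infinitely often principal for $\cndl_\sq$ on the $\sn$-thread that is eventually constant $\sn$, so it contains a progressing thread; (ii) \emph{safe} — there are no $\cut_\sq$-steps at all; (iii) \emph{left-leaning} — there are no $\cut_\n$-steps at all; (iv) $\relaxed$-regular — the only rules used other than $0,1,\cndl_\sq$ are none, and $\relaxed=\{\cnd_\sq,\cnd_\n,\suc 0,\suc 1,\id\}$ contains no $0,1,\cndl_\sq$, so \emph{every} sub-coderivation containing a rule from $\relaxed$ is trivial/absent, whence there are (vacuously, or with a single empty family) only finitely many, giving $\relaxed$-regularity. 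For the multivariable case, I would either iterate this construction argument-by-argument or directly mimic it with nested $\cndl_\sq$-steps on each of the $k$ normal arguments in turn, with the leaves now indexed by the tuple of lengths $(n_1,\dots,n_k)$ and carrying the constant $r(n_1,\dots,n_k)$.

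\textbf{Main obstacle.} The construction itself is routine once the right coderivation shape is identified; the subtle point is the bookkeeping for the $\relaxed$-regularity condition. I need to check that the precise definition of $\relaxed$-regularity (Definition~\ref{defn:regularity}: finitely many distinct sub-coderivations \emph{containing rules among} $\relaxed$) is satisfied — and since $\der$ uses \emph{only} $0,1,\cndl_\sq$, none of which is in $\relaxed$, no sub-coderivation of $\der$ contains a rule of $\relaxed$, so the relevant set is empty and trivially finite; I would state this explicitly. A secondary minor point is handling normal arguments beyond the first: one must be careful that the extra normal arguments (and any safe arguments) are simply carried along inertly in the contexts $\Gamma$ of the $\cndl_\sq$-steps — equivalently, one composes the single-variable coderivation $k$ times — and that this composition does not introduce any $\cut_\sq$ or $\cut_\n$ and does not break progressiveness (each stage's thread is progressing, and only finitely many stages are composed, so every infinite branch lies entirely within one stage and inherits its progressing thread). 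Neither point is deep, but both should be spelled out to make the membership $\der\in\nucbc$ airtight.
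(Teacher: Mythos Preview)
Your proposal is correct and follows essentially the same construction as the paper: peel off bits of each normal argument via nested $\cndl_\sq$-steps, outputting the appropriate constant $0$ or $1$ once all arguments are exhausted. The paper organises this slightly more cleanly as an induction on the length of $\vec x$ (at each stage nesting $\cndl_\sq$ on the first argument and invoking the inductive hypothesis on $r_i(\vec x):=r(1^i,\vec x)$ at the left premises), which avoids any talk of ``composing'' single-variable coderivations and so sidesteps the worry about accidentally introducing cuts; but your ``directly mimic it with nested $\cndl_\sq$-steps on each of the $k$ normal arguments in turn'' is exactly this.
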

\begin{proof}
We proceed by induction on the length of $\vec x$. When the list is empty, then $r(;)$ is just a Boolean, in which case we can derive it with just the $0$ or $1$ step.
Now, for $r(x,\vec x;)$ we have:
$$
\small
\vlderivation{
\vliin{\cndl_\sq}{}{\underline\sn, \sq \vec N \seqar N }{
    \vltr{\text{IH($r_0$)}}{\sq \vec N \seqar N}{\vlhy{\quad }}{\vlhy{\quad}}{\vlhy{\quad }}
}{
    \vliin{\cndl_\sq}{}{\underline\sn , \sq \vec N \seqar N}{
    \vltr{\text{IH($r_1$)}}{\sq \vec N \seqar N}{\vlhy{\quad }}{\vlhy{\quad}}{\vlhy{\quad }}
}{\vlhy{\quad \vdots\quad }}
}
}
$$
where $r_i(\vec x)$ is the function $r(1^i,\vec x)$ and the coderivations marked IH($r_i$) are obtained by the inductive hypothesis for $r_i$.
\end{proof}

Note that, by putting together \cref{prop:fppoly-in-B(R0)}, \cref{prop:b(f)-in-cbc(f)} (setting $F = \RRnor$) and \cref{prop:cbc(R0)-to-nuB}, we now have one half of our main result:
\begin{cor}
\label{cor:fppoly-in-nuB}
$\FPpoly\subseteq \nucbc$
\end{cor}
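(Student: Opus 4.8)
The plan is to derive \Cref{cor:fppoly-in-nuB} directly from the three inclusions already proved in this section, chaining them together. Concretely, the argument is: take any two-sorted function $f(\vec x;)$ arising from some $f \in \FPpoly$. By \Cref{prop:fppoly-in-B(R0)} we have $f \in \bc(\RRnor)$. Then, instantiating \Cref{prop:b(f)-in-cbc(f)} with the set of oracles $F = \RRnor$, we obtain $f \in \cbc(\RRnor)$, i.e.\ there is a regular progressing safe left-leaning $\bcnorec(\RRnor)$-coderivation computing $f$. Finally, by \Cref{prop:cbc(R0)-to-nuB} we get $f \in \nucbc$, since that proposition shows precisely that any $\cbc(\RRnor)$-coderivation can be turned into a $\relaxed$-regular progressing safe left-leaning $\bcnorec$-coderivation.

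First I would state the chain of inclusions explicitly as
\[
\FPpoly \;\overset{\text{\ref{prop:fppoly-in-B(R0)}}}{\subseteq}\; \bc(\RRnor) \;\overset{\text{\ref{prop:b(f)-in-cbc(f)}}}{\subseteq}\; \cbc(\RRnor) \;\overset{\text{\ref{prop:cbc(R0)-to-nuB}}}{\subseteq}\; \nucbc,
\]
and then note that composing set inclusions is immediate. The only point requiring a sentence of care is the application of \Cref{prop:b(f)-in-cbc(f)}: one should observe that the hypothesis of that proposition — that $F$ is a set of two-sorted functions — is met by $\RRnor$, which is exactly the two-sorted lift (along the empty safe context) of the set $\RR$ of relations. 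Since a given $\bc(\RRnor)$-program depends only on finitely many oracles from $\RRnor$, the resulting $\cbc(\RRnor)$-coderivation is genuinely regular in the relativised sense, so \Cref{prop:cbc(R0)-to-nuB} applies.

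Honestly there is no real obstacle here: the corollary is a bookkeeping assembly of results already established in \Cref{sec3} and the current section. The substantive work — simulating $\FPpoly$ inside $\bc(\RRnor)$ via circuit evaluators, simulating $\bc$ in $\cbc$ by circularity, and replacing $\RRnor$-oracle sequents by the $0,1,\cndl_\sq$-coderivations of \Cref{lem:ext-compl-rel-nor} while checking the progressing/safe/left-leaning conditions are preserved — has already been discharged. The proof of the corollary is therefore a one-line composition, and the only thing worth flagging for the reader is that the "amount of irregularity" introduced is confined to the rules $\relaxed$, exactly matching the definition of $\nucbc$, which is what makes \Cref{prop:cbc(R0)-to-nuB} land in $\nucbc$ rather than in the class of arbitrary non-wellfounded coderivations.
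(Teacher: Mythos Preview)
Your proposal is correct and follows essentially the same approach as the paper: the paper simply states that the corollary follows by putting together \Cref{prop:fppoly-in-B(R0)}, \Cref{prop:b(f)-in-cbc(f)} (with $F=\RRnor$), and \Cref{prop:cbc(R0)-to-nuB}, which is precisely the chain of inclusions you spell out. Your additional remarks on why each step applies are sound but more detailed than the paper's one-line justification.
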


\section{$\nucbc$ as relativised regular coderivations}\label{sec:nuB-in-cb(R)}
To facilitate the other direction of \cref{thm:main-result}, let us first address a form of converse to \cref{prop:cbc(R0)-to-nuB} above, that duly embeds $\nucbc$ into a relativised circular system, which we shall rely on in the next section:

\begin{thm}\label{lem:non-wellfounded-oracles} $\nucbc \subseteq \cbc(\RRsafnor)$.
\end{thm}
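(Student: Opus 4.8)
The plan is to take an arbitrary coderivation $\der \in \nucbc$ — that is, a $\relaxed$-regular, progressing, safe, left-leaning $\bcnorec$-coderivation — and extract from it a $\cbc(\RRsafnor)$-coderivation computing the same function. The key observation is the defining condition of $\nucbc$: $\der$ has only finitely many distinct sub-coderivations \emph{once we ignore} the rules in $\relaxed = \{\cnd_\sq,\cnd_\n,\suc 0,\suc 1,\id\}$. So the ``irregularity'' of $\der$ is confined entirely to sub-coderivations built from this restricted rule set, which — crucially — contains no recursion-style or cut rules and in particular cannot increase input length in any unbounded way. The idea is to collapse each such irregular sub-coderivation into a single oracle call from $\RRsafnor$.

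Concretely, I would first analyse the maximal sub-coderivations of $\der$ that use \emph{only} rules from $\relaxed$; call these the \emph{relaxed regions}. Each relaxed region has some conclusion sequent $\sn^{n},\n^{m} \seqar \n$ (by the normal form convention on sequents), and by \cref{defn:semantics-coderivations} computes some two-sorted partial function $g(\vec x;\vec y)$. I claim each such $g$ is in fact total and belongs to $\safnor\RR$, i.e.\ its value depends only on the \emph{lengths} of its arguments: the rules $\suc 0,\suc 1,\id$ and the two conditionals $\cnd_\sq,\cnd_\n$ can only read the bits of their inputs and append bits to a running output, but because the region contains no $\cut$ and no $\saferec$, there is a uniform bound argument — reminiscent of the progressing-implies-totality argument (\cref{prop:prog-imp-tot}) — showing the output length is bounded and, more importantly, that within a relaxed region no bit of any input beyond its length can ever be inspected in a way that affects the output; I would need to verify carefully that along any infinite branch staying inside a relaxed region there is a progressing thread, forcing eventual periodicity of behaviour and hence length-determined output. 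Thus $g \in \safnor\RR$, and I replace the whole relaxed region by the initial sequent $\vlinf{g}{}{\sn^{n},\n^{m}\seqar\n}{}$.

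After performing this replacement on every relaxed region, what remains is a coderivation $\der'$ over $\bcnorec(\RRsafnor)$ whose non-relaxed ``skeleton'' is exactly the part of $\der$ that was already regular by hypothesis; hence $\der'$ is regular. It inherits progressiveness, safety and left-leaningness from $\der$ (these conditions only constrain branches, and collapsing relaxed regions — which contain no $\cut_\sq$, no $\cut_\n$, and whose only conditionals are absorbed into the oracle — can only shorten branches or remove steps that were never $\cut$-steps), so $\der' \in \cbc(\RRsafnor)$, and $\denot{\der'} = \denot{\der}$ since we substituted semantically-equal sub-coderivations. The main obstacle I anticipate is the claim that each relaxed region computes a total function in $\safnor\RR$: one must rule out that an infinite relaxed region diverges, and must show length-invariance despite the fact that $\cnd_\sq$ and $\cnd_\n$ branch on the \emph{low bit} of an argument — here the progressing condition on $\der$ restricted to the region, together with the absence of length-increasing machinery, should force the region to effectively only ever ``count down'' a normal argument via the progressing thread, so that after finitely many steps all genuinely value-dependent behaviour is exhausted and what is left is length-determined. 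Making this bookkeeping precise — essentially a small König-style / well-foundedness argument localised to relaxed regions — is the technical heart of the proof.
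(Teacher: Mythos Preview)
You have the definition of $\mathsf R$-regularity backwards, and this derails the whole argument. Recall \cref{defn:regularity}: a coderivation is $\mathsf R$-regular if it has only finitely many distinct sub-coderivations \emph{containing} a rule from $\mathsf R$. Hence the sub-coderivations that may occur in infinitely many distinct forms are precisely those that are \emph{$\mathsf R$-free}, i.e.\ built entirely from rules in $\bcnorec \setminus \mathsf R$. For $\mathsf R = \relaxed$, the irregular pieces are therefore $\{\cnd_\sq,\cnd_\n,\suc 0,\suc 1,\id\}$-\emph{free} sub-coderivations, using only $0,1,\cndl_\sq,\cndl_\n$, cuts, weakening, exchange, $\sql,\sqr$. (Compare \cref{exmp:3}: the irregular part of $\mathcal A(r)$ is $\completeness(r)$, which uses only $0,1,\cndl_\sq$.) Your ``relaxed regions'' are exactly the wrong pieces.

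This matters for the technical heart as well. In the paper's approach one cuts at the minimal $\{\cnd_\sq,\cnd_\n,\suc 0,\suc 1,\id\}$-free nodes and shows each such sub-coderivation computes a function in $\RRsafnor$ via two easy lemmas: absence of $\suc 0,\suc 1,\id$ together with progressiveness forces the output to lie in $\{0,1\}$; absence of $\cnd_\sq,\cnd_\n,\id$ forces the output to depend only on input \emph{lengths}, since the remaining branching rules $\cndl_\sq,\cndl_\n$ only test zero versus nonzero and then strip a bit irrespective of its value. By contrast, your regions \emph{do} contain $\cnd_\sq,\cnd_\n$ (which genuinely branch on the low bit) and $\suc 0,\suc 1$ (which grow the output), so neither length-invariance nor Boolean range holds for them in general; the obstacle you anticipate is real and cannot be overcome on this decomposition. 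Once you swap to the $\mathsf R$-free pieces, the argument goes through essentially as you outlined: replace each by an oracle from $\RRsafnor$, and what remains is regular by $\mathsf R$-regularity of $\der$, inheriting progressiveness, safety and left-leaning.
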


Before giving the proof, we need to first establish some intermediate results:

\begin{lem}
\label{lem:succ-id-free-imp-relation}
If $\der $ is progressing and $\{\succ 0 , \succ 1 , \id\}$-free then $\denot\der$ is a relation, i.e.\ $\denot \der (\vec x;\vec y) \leq 1$.
\end{lem}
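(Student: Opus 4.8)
The statement says that a progressing coderivation $\der$ built without the rules $\succ 0$, $\succ 1$, $\id$ can only compute a $\{0,1\}$-valued function. The natural approach is to argue that, because $\der$ is progressing, the semantics $\denot\der(\vec m;\vec n)$ is total (by \cref{prop:prog-imp-tot}), and then to track how output values are produced. First I would establish a syntactic invariant: in any $\{\succ 0,\succ 1,\id\}$-free coderivation, every rule whose conclusion has succedent $\n$ either has a premiss with succedent $\n$ (so the output is ``inherited'') or is one of the axiom/leaf rules $0$, $1$, or an oracle/initial sequent step; and the only rules that could ``grow'' a number — namely $\succ 0, \succ 1$ — are absent by hypothesis. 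The remaining rules ($\cut_\n$, $\cut_\sq$, $\wk$, $\exch$, $\sql$, $\sqr$, $\cnd_\n$, $\cnd_\sq$, $\cndl_\n$, $\cndl_\sq$) all propagate the value of some premiss verbatim into the conclusion, as one reads off from \cref{fig:semantics-of-bc}. So every value returned by $\denot\der$ is, ultimately, a value returned at a leaf, which can only be $0$ or $1$.

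To make this rigorous I would mimic the contradiction argument sketched after \cref{prop:prog-imp-tot}. Suppose for contradiction that $\denot\der(\vec m;\vec n) = k$ with $k > 1$. By following the equational-logic derivation of $\denot\der(\vec m;\vec n) = k$ backwards through the coderivation, I build a branch $B$ of $\der$: at each node, the semantic clause that ``fires'' points to a unique premiss whose sub-coderivation returns a value $k' \geq k > 1$ on the corresponding (numeral) arguments — crucially, since none of $\succ 0, \succ 1$ occur, the value is never modified, so it stays exactly $k$ (or, at a $\cut$, we pass to a premiss returning $k$ or to the premiss computing the cut formula which is itself a value used downstream; I would need to handle $\cut_\n$/$\cut_\sq$ carefully, noting the cut-formula subderivation is what propagates). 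Since $\der$ is total, this backward process cannot terminate at a leaf returning a value $>1$ (leaves return only $0$ or $1$, or are oracle steps from a set not present here since $\der$ is a plain $\bcnorec$-coderivation), so $B$ must be infinite. Being progressing, $B$ carries a progressing thread, eventually constant $\sn$ and infinitely often principal for a $\cnd_\sq$ or $\cndl_\sq$ step — but the succedent along the branch $B$ we constructed is constantly $\n$ (it carries the value $k$), giving the contradiction with well-foundedness of the relevant measure exactly as in the proof of \cref{prop:prog-imp-tot}. Actually, more directly: an infinite value-carrying branch through a total coderivation is impossible, since totality means the equational program terminates, i.e.\ the ``computation tree'' is well-founded; so $B$ finite, hence ends at a leaf, hence $k \leq 1$.

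The main obstacle I anticipate is the bookkeeping at $\cut_\n$ and $\cut_\sq$ steps: here the conclusion's value $\denot{\cut(\der_0,\der_1)}(\vec x;\vec y)$ equals $\denot{\der_1}$ applied to arguments one of which is $\denot{\der_0}(\vec x;\vec y)$, so the ``value $k$'' might be produced deep inside $\der_1$ after consuming a value from $\der_0$ — but $\der_0$'s output has succedent $\n$ too and is itself produced without any $\succ i$, so the invariant ``every value appearing anywhere in the computation is in $\{0,1\}$'' can be maintained by a single induction over the finite computation tree, simultaneously for all sub-coderivations, rather than by tracking one branch. I would phrase the final argument as: by totality the computation of $\denot\der(\vec m;\vec n)$ is a finite derivation in equational logic; by induction on this derivation, every numeral it mentions as a value of any $\denot{\pder\nu}$ on numeral arguments lies in $\{0,1\}$ — the base cases being the $0$ and $1$ axioms (value in $\{0,1\}$) and the inductive step using that none of the other rules, in particular not $\succ 0,\succ 1$, ever enlarges a value. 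Hence $k \in \{0,1\}$, i.e.\ $\denot\der(\vec x;\vec y)\leq 1$.
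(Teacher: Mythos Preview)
Your final argument---induct on the finite equational derivation, using that none of the remaining rules can enlarge an output value---is correct and takes a genuinely different route from the paper. The paper builds the same ``value $>1$'' branch $B$ but, rather than invoking totality to conclude $B$ is finite, argues that $B$ must be \emph{infinite} (it cannot terminate at a $0$ or $1$ step, nor at the absent $\id$) and then exploits the progressing condition directly: a progressing thread $(\sn^i)_{i\ge k}$ along $B$ carries antecedent input values $x^i$ that are non-increasing everywhere and strictly decrease at each $\cnd_\sq/\cndl_\sq$ step, contradicting the well-ordering of $\Nat$. Your approach uses progressing only via \cref{prop:prog-imp-tot} as a black box and is arguably cleaner; the paper's direct thread argument instead replays the infinite-descent reasoning of \cref{prop:prog-imp-tot} itself. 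One correction: your intermediate claim that ``the succedent along $B$ is constantly $\n$, giving the contradiction'' is not a valid step---the progressing thread lives in the antecedent, not the succedent---but you rightly abandon it. Your worry about cuts dissolves in the final version: the inductive hypothesis for $\der_1$ applies regardless of where its extra safe input came from, precisely because the absence of $\id$ ensures no rule ever copies an input straight to the output.
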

\begin{proof}
[Proof sketch]
We proceed by contradiction, always assuming \cref{prop:prog-imp-tot}, that progressing coderivations compute total functions.

If $\denot\der (\vec x;\vec y)>1$ then we argue that there is an immediate sub-coderivation $\der'$ of $\der$ and arguments $\vec x',\vec y'$ such that $\denot{\der'}(\vec x';\vec y')>1$.
Some of the critical cases are:
\begin{itemize}
    \item If $\der=\cut_\n(\der_0, \der_1)$ then $\denot \der(\vec x; \vec y) =\denot {\der_1}(\vec x;\vec y, \denot {\der_0}(\vec x; \vec y))$, and  we set $\der'\dfn \der_1$,   $\vec x'\dfn \vec x$ and $\vec y'\dfn \vec y, \denot {\der_0}(\vec x; \vec y)$ (since  $\denot {\der_0}(\vec x; \vec y)$ is well-defined). The case for  $\cut_\sq$ is similar.
    % ends with $\cut_N$ or $\cut_\sn$ then $\der'$ is the right sub-coderivation, and $\vec x',\vec y'$ is just $\vec x, \vec y$ along with the (well-defined) output of the left sub-coderivation.
    \item If $\der=\cnd_\sq(\der_0, \der_1, \der_2)$ then $\vec x=x, \vec z$ with  $\denot \der(0, \vec z; \vec y)= \der_0(\vec z; \vec y)$ and  $\denot \der (\succ i x', \vec z; \vec y)=\denot {\der_{i+1}}(x', \vec z; \vec y)$. If $x=0$ we set $\der' \dfn \der_0$, $\vec x'\dfn \vec z$, and  $\vec y'\dfn \vec y$. If $x= \succ i x'$ then we set $\der' \dfn \der_{i+1}$, $\vec x'\dfn x', \vec z$, and  $\vec y'\dfn \vec y$. The cases for $\cnd_N, \cndl_N$, and $\cndl_\sq$ are similar.
    % ends with a $\cnd_N, \cndl_N, \cnd_\sq$ or $\cndl_\sq$ step, then $\der'$ is the appropriate sub-coderivation depending on the value assigned to the principal formula, and $\vec x',\vec y'$ are similarly induced by $\vec x,\vec y$.
    \item In all other cases $\der$ ends with a unary rule so that $\der'$ is the only immediate sub-coderivation, and $\vec x',\vec y'$ are determined by the semantics of the rule (cf.~\cref{fig:semantics-of-bc}).
\end{itemize}
Note that, in the absence of $\succ 0,\succ 1$, we indeed have that $f'(\vec x';\vec y') = f(\vec x;\vec y)>1$ so we may continually apply this process to build up a branch $ B = ({\der=\der^{(0)}}, \der^{(1)}, \der^{(2)}, \dots)$ and arguments ${(\vec x;\vec y)=(\vec x^{(0)};\vec y^{(0)})}, (\vec x^{(1)};\vec y^{(1)}), (\vec x^{(2)}; \vec y^{(2)}), \dots$ such that $\denot{\der^{(k)}}(\vec x^{(k)};\vec y^{(k)}) = \denot\der(\vec x;\vec y)>1$. Observe that $ B$ cannot end at an $\id$ step, by assumption that $\der$ is $\id$-free. Also, if  $ B$ ends at a $0$ or $1$ step  we have by construction that $\denot \der(\vec x;\vec y) \in \{0,1\}$, a contradiction. 
Thus $ B$ must be infinite. 
% \begin{itemize}
%     % \item $ B$ cannot end at an $\id$ step, by assumption that $\der$ is $\id$-free;
%     \item  $ B$ ends at a $0$ or $1$ step;
%     \item otherwise $ B$ is an infinite branch.
% \end{itemize}
Since $\der$ is progressing there is a  progressing thread along $B$, \Achange{say} $(\sn^i)_{i\geq k}$, where each $\sn^i$ is an occurrence of $\sn$.
\Achange{Let us}
% Let $(\sn^i)_{i\geq k}$ be a progressing thread along $B$, since $\der$ is progressing, and 
examine the values, say $x^i$, assigned to each $\sn^i$. Notice that:
\begin{itemize}
    \item by  inspection of the rules and their interpretations from \Cref{defn:semantics-coderivations}, we have that $x^{i+1}\leq x^i$; and,
    \item if $\sn^i$ is principal for a $\cnd_\sq$ or a $\cndl_\sq$ step then $x^{i+1}< x^i$.
\end{itemize}
If follows that $(x^i)_{i\geq k}$ is a non-increasing sequence of natural numbers that does not converge, contradicting the well-ordering property of $\Nat$.
\end{proof}

% We also have:
\begin{lem}
 If $\der$ is  $\{ \cnd_\sq,\cnd_N, \id \}$-free,  {$|\vec x| = |\vec x’|$ and $|\vec y|=|\vec y’| $},  then $\denot \der(\vec x;\vec y) = \denot \der(\vec x';\vec y')$, whenever $f(\vec x;\vec y)$ is well-defined.
\end{lem}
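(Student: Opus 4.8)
The plan is to prove the slightly stronger claim that whenever $\denot\der(\vec x;\vec y)$ is well-defined and $|\vec x|=|\vec x'|$, $|\vec y|=|\vec y'|$, then $\denot\der(\vec x';\vec y')$ is also well-defined and takes the same value; the stated equation, and its symmetric counterpart, are then immediate. By \cref{rem:equational-programs}, well-definedness of $\denot\der(\vec x;\vec y)$ is witnessed by a \emph{finite} object --- a derivation in equational logic over the program for $\der$, equivalently a finite well-founded (finitely branching) tree of `recursive calls' obtained by unwinding the clauses of \cref{fig:semantics-of-bc} according to the bottom rule of the relevant sub-coderivation and the current arguments. I would argue by induction on the size of this witness, simultaneously building a witness of the same shape (hence again finite) for $\denot\der(\vec x';\vec y')$.

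The induction step is a case analysis on the bottom rule $\rules$ of $\der$. Since $\der$ is $\{\cnd_\sq,\cnd_\n,\id\}$-free, $\rules$ ranges over $\cut_\n,\cut_\sq$, the structural rules, $\sql,\sqr$, $0,1$, $\succ 0,\succ 1$, and $\cndl_\n,\cndl_\sq$. The cases $0,1$ are trivial (the value is a constant). For the structural rules, $\sql$, $\sqr$ and $\succ i$, the value of $\denot\der$ at $(\vec x;\vec y)$ is that of the unique premise on a length-preserving rearrangement of the arguments, possibly followed by a successor, so one simply invokes the induction hypothesis on the premise. For $\cut_\n$ and $\cut_\sq$ one first uses the induction hypothesis on the left premise to see that the `cut value' $v$ computed at $(\vec x;\vec y)$ and at $(\vec x';\vec y')$ agree, and then uses the induction hypothesis on the right premise with its context extended by $v$ in both cases (the lengths still matching). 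The interesting cases are $\cndl_\sq$ and $\cndl_\n$: here the key observations are that (i) whether the inspected argument equals $0$ depends only on its length, so the same clause fires for $(\vec x;\vec y)$ and $(\vec x';\vec y')$; and (ii) in the nonzero clause the argument passed to the premise is the `half' $\hlf{\cdot}$ of the inspected argument, and $|\hlf m| = |m|-1$ for $m \geq 1$, so the two halves again have equal length and the induction hypothesis applies.

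The main obstacle is bookkeeping rather than depth: one must be scrupulous that in the $\cndl$ and $\cut$ cases the two argument tuples remain pairwise length-matched after the substitutions prescribed by \cref{fig:semantics-of-bc}, and that the witness built for $(\vec x';\vec y')$ is genuinely finite --- which it is, being of the same shape as the given one. It is also worth recording why the three excluded rules must be excluded: $\id$ copies a safe argument verbatim, while $\cnd_\sq$ and $\cnd_\n$ branch on the least significant bit of an argument --- distinguishing $\succ 0 x$ from $\succ 1 x$, which have the same length --- so any of these would let $\denot\der$ depend on more than the lengths of its inputs; the admitted rules $\cndl_\sq$, $\cndl_\n$ escape this precisely because they branch only on the length-determined property of being $0$ and otherwise discard the peeled-off bit.
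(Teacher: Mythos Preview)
Your proposal is correct and takes essentially the same approach as the paper. The paper's proof sketch is terser --- it says simply that one may take the finite equational derivation witnessing $\denot\der(\vec x;\vec y)=m$ and ``replace $\vec x,\vec y$ by $\vec x',\vec y'$'' to obtain a witness for $\denot\der(\vec x';\vec y')=m$, noting that the only critical steps are $\cndl_\n$ and $\cndl_\sq$, whose semantics depend only on lengths --- whereas you have unpacked this into an explicit induction on the call tree with a rule-by-rule case analysis, including the observation that cut values coincide and so lengths remain matched after substitution; the underlying argument is identical.
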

\begin{proof}
[Proof sketch]
Being given by an equational program, we have that $\denot \der (\vec x;\vec y) = m$ has a (finite) equational derivation for some $m\in \Nat$, by assumption that it is well-defined (cf.~\Cref{rem:equational-programs}).
Replacing $\vec x,\vec y$ by $\vec x',\vec y'$ in this derivation yields  $\denot\der(\vec x';\vec y') = m$ too. The only critical cases are the steps $\cndl_N $,  $\cndl_\sq$, whose semantics only depend on the length of their arguments. 
\end{proof}
% \anupam{commented old proof of combined lemma below}
% \begin{proof}[Proof sketch]
% On the one hand, since $\der$ is $\{\suc 0, \suc 1 , \id\}$-free by construction, by inspecting the rules of $\bcnorec$ we conclude that $\denot{\der}(\vec x; \vec y)\in \{0,1\}$\anupam{why? It seems obvious, but this is a bit handwavy.} (recall that progressing coderivations always compute total functions). On the other hand, since $\der$ is $\{\cnd_\sq, \cnd_\n, \id\}$-free by construction, by inspecting the rules of $\bcnorec$ we conclude that $\denot{\der}(x_1, \ldots, x_n; y_1, \ldots, y_m)=\denot{\der}(x'_1, \ldots, x'_n; y'_1, \ldots, y'_m)$ if $|x_i|=|x'_i|$ and $|y_j|=|y'_j|$. So, $\denot{\der}( \vec x; \vec y )\in \RRsafnor$.\anupam{this is probably okay, but we can add some justification if necessary. Presumably all of this should be proved by some contradiction argument.}
% \end{proof}
Putting the two above Lemmata together we have:
\begin{prop}
\label{prop:semi-regular-imp-RRsafnor}
If $\der $ is progressing and $\{\cnd_\sq,\cnd_N,\succ 0 , \succ 1, \id\}$-free, then $\denot\der \in \RRsafnor$.
\end{prop}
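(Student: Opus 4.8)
\textbf{Proof plan for Proposition~\ref{prop:semi-regular-imp-RRsafnor}.}
The statement is an immediate conjunction of the two preceding lemmata, so the plan is simply to assemble them. Suppose $\der$ is progressing and $\{\cnd_\sq,\cnd_N,\succ 0,\succ 1,\id\}$-free. Since $\der$ is in particular $\{\succ 0,\succ 1,\id\}$-free and progressing, \cref{lem:succ-id-free-imp-relation} tells us that $\denot\der(\vec x;\vec y)\leq 1$ for all $\vec x,\vec y$; in particular $\denot\der$ is a relation. It remains to check the length-invariance condition defining $\RRsafnor$: namely that $|\vec x|=|\vec x'|$ and $|\vec y|=|\vec y'|$ imply $\denot\der(\vec x;\vec y)=\denot\der(\vec x';\vec y')$.

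For this, note that $\der$ is also $\{\cnd_\sq,\cnd_N,\id\}$-free, so the preceding Lemma (length-invariance for $\{\cnd_\sq,\cnd_N,\id\}$-free coderivations) applies directly, \emph{provided} the relevant values are well-defined. But by \cref{prop:prog-imp-tot}, since $\der$ is progressing, $\denot\der$ is total, so $\denot\der(\vec x;\vec y)$ is always defined and the side condition is automatically met. Hence $\denot\der(\vec x;\vec y)=\denot\der(\vec x';\vec y')$ whenever $|\vec x|=|\vec x'|$, $|\vec y|=|\vec y'|$. Combining the two facts, $\denot\der$ is a relation that depends only on the lengths of its arguments, i.e.\ $\denot\der\in\RRsafnor$, as required.

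The only subtlety — and the one point worth spelling out rather than glossing — is the interaction between the two lemmata's hypotheses: the length-invariance lemma has a well-definedness proviso that we discharge via totality (\cref{prop:prog-imp-tot}), which in turn needs progressiveness; and the relation lemma separately needs progressiveness and $\id$-freeness. So the real content is just bookkeeping: the hypothesis set $\{\cnd_\sq,\cnd_N,\succ 0,\succ 1,\id\}$ is exactly large enough to feed both lemmata simultaneously, and progressiveness does double duty (it is needed by both, and also to secure totality for the proviso). There is no genuine obstacle here beyond confirming these hypothesis sets line up; the proof is a two-line corollary of the machinery already in place.
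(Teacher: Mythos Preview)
Your proposal is correct and follows exactly the paper's approach: the paper introduces the proposition with the single line ``Putting the two above Lemmata together we have,'' and your argument is precisely that assembly, with the well-definedness proviso of the second lemma discharged via totality (\cref{prop:prog-imp-tot}). There is nothing to add.
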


Now we can prove \cref{lem:non-wellfounded-oracles}:
\begin{proof}
[Proof sketch]
Let $\der$ be a $\nucbc$-coderivation and let $V$ be the set of minimal nodes $\nu$ such that $\der_\nu$ is $\{\cnd_\sq, \cnd_N,\succ 0 , \succ 1 , \id\}$-free, and so by \cref{prop:semi-regular-imp-RRsafnor} we have that each $\denot{\der_\nu} \in \RRsafnor$.

Now, let $\der^V$ be obtained from $\der$ by simply deleting each sub-coderivation $\der_\nu$, for $\nu \in V$, and construing each of their conclusions as new initial sequents.
By definition of $\nucbc$, note that $\der^V$ is now a coderivation in $\cbc(\denot{\der_\nu})_{\nu \in V}\subseteq \cbc(\RRsafnor)$ and we are done.
\end{proof}
% \anupam{commented old proof below}
% \begin{proof}
% Let $\der$ be a $\nucbc$-coderivation.  We define  $X_\der$ as the set of nodes such that, for any branch $B$:\anupam{I think this proof can be simplified. To look at later.}
% \begin{enumerate}
%     \item  if $B$ is finite then  $X_\der$   contains its leaf;
%     \item  if $B$ contains infinitely many rules from $\relaxed$ then  $X_\der$  contains the least node $v\in B$ such that $\der_v=\der_u$, for some $u\in B$ below $v$ (which exists by $\relaxed$-regularity); 
%     \item if $B$ is infinite but contains only finitely many rules from $\relaxed$ then  $X_\der$ contains the least $v\in B$ such that no $u \in B$ above $v$ is conclusion of a rule in $\relaxed$.
% \end{enumerate}
% It is easy to check that $X_\der$ is an antichain\anupam{why is it an antichain? justify.} containing a node from any $B$, so $X_\der$ forms a bar for $\der$. Now, let  $X^{(iii)}_\der$ be the set of nodes of  $X_\der$ that are contained in branches satisfying case (iii) above.  If we show that $\denot{\der_v}( \vec x ; \vec y )\in \RR$ for any $v \in X^{(iii)}_\der$ then $\denot{\der}\in \cbc(\{\denot {\der_v}\}_{v \in X^{(iii)}_\der})\subseteq \cbc(\RRsafnor)$. Of course, this follows from the above Lemma.
% \end{proof}

\section{$\nucbc\subseteq \FPpoly$: a relativised algebra subsuming circular typing}
\label{sec:cb(R)-in-fppoly}
The final part of our chain of inclusions requires us to translate (relativised) circular coderivations into an appropriate function algebra.
The idea is that, in the presence of safety, one can reduce circularity to a form of recursion on `permutations of prefixes' that nonetheless remains feasible.
This was (one of) the main result(s) of \cite{CurziDas} and, fortunately, we are able to import those results accounting only for additional initial relations.

\subsection{Safe recursion on permutations of prefixes}
Let us write $\vec x \permprefeq \vec y$ if $\vec x$ is a permutation of prefixes of $\vec y$, i.e.\ $\vec x = x_0, \dots, x_{n-1}$ and $\vec y = y_0 , \dots , y_{n-1}$ and there is a permutation $\pi : [n] \to [n]$ s.t.\ each $x_i$ is a prefix of $y_{\pi i}$.
We shall write $\vec x \permpref \vec y$ if for at least one $i <n$ we have that $x_i $ is a strict prefix of $y_{\pi i}$. Note in particular that $\permpref$ is a well-founded pre-order so admits an induction and recursion principles.

To formulate recursion over well-founded relations it is convenient to employ (two-sorted) oracles as placeholders for recursive calls.
Due to the necessary constraints on composition, we shall formally distinguish these oracles (metavariables, $a,b,$ etc.) from additional initial functions (metavariables $f,g$ etc.\ until now).

\begin{defn}\label{defn:bsubseteq}
\Achange{Let $F$ be a set of two-sorted functions.}
The algebra \Achange{$\bcpp( F, \vec a)$} is the smallest class of two-sorted functions containing,
\begin{itemize}
    \item all the initial functions of $\bc$;
    \item each \Achange{(two-sorted)} function $a_i $ \Achange{among $ \vec a$};
    \item each \Achange{(two-sorted)} function \Achange{$f \in F$};
\end{itemize}
and closed under:
\begin{itemize}
\item (Relativised safe composition)
\begin{itemize}
    \item \Achange{if $g(\vec x;\vec y) \in \bcpp(F,\vec a)   $ and $h(\vec x;\vec y, y) \in  \bcpp(F,\vec a)   $ then  ${f(\vec x;\vec y) := h(\vec x;\vec y, g(\vec x;\vec y))\in \bcpp(F,\vec a)}$};
    \item \Achange{if $g(\vec x;) \in  \bcpp(F)$ and $h(\vec x,x;\vec y) \in \bcpp(F,\vec a)  $ then $  { f(\vec x;\vec y) := h(\vec x, g(\vec x;);\vec y)\in \bcpp(F,\vec a)}$};
\end{itemize}
    \item
    (Safe recursion on $\permpref$)
    \newline
    \Achange{if $h(a)(\vec x;\vec y) \in \bcpp(F,a, \vec a)$ then ${ f(\vec x;\vec y) := h(\lambda \vec u\permpref \vec x, \lambda \vec v \permprefeq \vec y . f(\vec u;\vec v))(\vec x;\vec y)\in \bcpp(F,\vec a) }$}.
\end{itemize}
\end{defn}
To be clear, the `guarded' abstraction notation above is formally defined as $$
\left(\lambda \vec u\permpref \vec x, \lambda \vec v\permprefeq \vec y . f(\vec u;\vec v)\right) (\vec u';\vec v')
\ := \ 
\begin{cases}
f(\vec u';\vec v') & \vec u' \permpref \vec x, \vec v' \permprefeq \vec y \\
0 & \text{otherwise}
\end{cases}
$$

\Achange{$\bcpp(\emptyset,\vec a)$ is the same as the notion $\bcpp(\vec a)$ from \cite{CurziDas}}.
Note in particular the distinction between \Achange{$F$} and $\vec a$ in the safe composition scheme: when composing along a normal parameter (second line), the function $g(\vec x;)$ must not contain any oracles \Achange{among} $\vec a$.

% \begin{prop}
% [Closure under simultaneous recursion]
% \todo{this might not be necessary for the main text.}
% \end{prop}

Adapting the \emph{Bounding Lemma} from \cite[Lemma 38]{CurziDas} to account for further initial relations gives:
\begin{lem}
 [Relational Bounding lemma]\label{lem:rel-bou-lem}
 \Achange{Let $R$ be a set of two-sorted relations and $f(R)(\vec x;\vec y) \in \bcpp (R)$.} 
 There is a polynomial $p_f(\vec n)$ such that, writing $m_f(\vec x,\vec y) = p_f(|\vec x|) + \max |\vec y|$, we have:
 \begin{itemize}
     \item \Achange{$|f(R)(\vec x; \vec y)| < m_f(\vec x, \vec y)$}
     \item \Achange{$f(R)(\vec x;\vec y) = f(\lambda |\vec u_r|,|\vec v_r| <m_f(\vec x,\vec y). r(\vec u_r;\vec v_r))_{r\in R} (\vec x;\vec y)$}
 \end{itemize}
\end{lem}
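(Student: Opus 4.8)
The plan is to prove the Relational Bounding Lemma by induction on the definition of $f(R) \in \bcpp(R)$, extending the proof of the Bounding Lemma from \cite[Lemma 38]{CurziDas} by simply adding a base case for the new initial relations and checking that the inductive cases go through unchanged. Since every $r \in R$ is a relation, we have $|r(\vec u;\vec v)| \leq 1$, so for the new initial functions it suffices to take $p_r(\vec n) := 1$: then $|r(\vec x;\vec y)| \leq 1 < m_r(\vec x,\vec y)$ trivially, and the modulus-of-continuity clause holds because $r$ queries only itself, at exactly its own argument lengths $|\vec x|,|\vec y|$, which are strictly below $m_r(\vec x,\vec y) = 1 + |\vec x| + \max|\vec y|$ (here one uses $|x| \geq 1$ for $x \geq 1$, or handles the degenerate small cases by inflating the constant). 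The remaining base cases — the initial functions of $\bc$ — are exactly as in \cite{CurziDas}, and since these contain no oracle calls at all, the continuity clause is vacuous for them.

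For the inductive step I would treat the three closure schemes. For relativised safe composition, given $f(R)(\vec x;\vec y) = h(R)(\vec x;\vec y,g(R)(\vec x;\vec y))$, one sets $p_f(\vec n) := p_h(\vec n, p_g(\vec n))$ (composing the polynomials, using monotonicity of $p_h$); the growth bound follows since $|g(R)(\vec x;\vec y)| < m_g(\vec x,\vec y)$ feeds into the last safe slot of $h$, contributing only to the $\max|\vec y|$ term, and then $|h(R)(\cdots)| < m_h(\vec x, (\vec y, g(R)(\vec x;\vec y)))$, which is bounded by $p_h(|\vec x|, \ldots) + \max(|\vec y|, |g(R)(\vec x;\vec y)|)$ — and crucially $|g(R)(\vec x;\vec y)| < p_g(|\vec x|) + \max|\vec y| \leq p_f(|\vec x|) + \max|\vec y|$ once $p_f$ dominates $p_g$, so the overall bound collapses to $m_f(\vec x,\vec y)$. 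The case of composition along a normal parameter is analogous but uses that $g(R)(\vec x;) \in \bcpp(R)$ has no oracles among the recursion placeholders, so its growth $|g(R)(\vec x;)| < p_g(|\vec x|)$ enters a normal slot of $h$ and is absorbed into the polynomial part. In both composition cases the continuity clause is inherited: any oracle query made by $f$ is made either by $g$ (on arguments of length $< m_g \leq m_f$) or by $h$ (on arguments of length $< m_h$, which by the above is $< m_f$), so all queries stay below $m_f(\vec x,\vec y)$.

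The main obstacle, as in \cite{CurziDas}, is the case of safe recursion on $\permpref$: $f(R)(\vec x;\vec y) = h(R)(\lambda \vec u \permpref \vec x, \lambda \vec v \permprefeq \vec y.\, f(R)(\vec u;\vec v))(\vec x;\vec y)$. Here one argues by a nested induction along the well-founded order $\permpref$, as in the original proof. The key point specific to our setting is that the recursive oracle $a$ handed to $h$ is \emph{not} a relation in general, but it \emph{is} bounded: by the outer induction hypothesis applied up the $\permpref$-chain, $|f(R)(\vec u;\vec v)| < m_f(\vec u,\vec v) \leq m_f(\vec x,\vec y)$ whenever $\vec u \permpref \vec x$ and $\vec v \permprefeq \vec y$, since $|\vec u| \leq |\vec x|$ componentwise (prefixes) and $|\vec v| \leq |\vec y|$. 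One then invokes the bound for $h(R,a)$ treating $a$ as an oracle of the same polynomial modulus, so that $p_f$ is obtained from $p_h$ by the same fixed-point-free construction used in \cite{CurziDas} (the degree does not blow up precisely because recursive calls occur only in safe position and $\permpref$ shrinks the normal arguments). For the continuity clause one checks that every query $f(R)$ makes to some $r \in R$ is ultimately a query made by $h(R,a)$ inside some recursive unfolding on arguments $\vec u \permprefeq \vec x$, $\vec v \permprefeq \vec y$, hence of length bounded by $m_h$ evaluated at lengths $\leq (|\vec x|,|\vec y|)$, which is $\leq m_f(\vec x,\vec y)$ by the choice of $p_f$; so replacing each $r$ by its restriction to inputs of length $< m_f(\vec x,\vec y)$ leaves the computation unchanged. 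I expect the bookkeeping for exactly why the polynomial $p_f$ from the recursion case does not grow super-polynomially to be the only genuinely delicate part, and for this I would cite the corresponding argument in \cite{CurziDas} verbatim, noting that the addition of the relations $R$ affects neither the degree count nor the well-foundedness argument.
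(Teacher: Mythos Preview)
Your overall strategy matches the paper's: induction on the definition of $f$ in $\bcpp(R)$, with the relations $r\in R$ handled trivially since $|r(\vec u;\vec v)|\leq 1$, and the delicate work concentrated in the $\srecpp$ case. However, there is a genuine gap in the recursion step that the paper explicitly flags and that your proposal elides.

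When $f(\vec x;\vec y) = h(\lambda \vec u\permpref \vec x,\lambda \vec v\permprefeq \vec y.f(\vec u;\vec v))(\vec x;\vec y)$, the function $h$ lives in $\bcpp(R,a)$, not in $\bcpp(R)$. So your inductive hypothesis, which is stated only for members of $\bcpp(R)$, does not apply to $h$. Your phrase ``invoke the bound for $h(R,a)$ treating $a$ as an oracle of the same polynomial modulus'' is precisely where the difficulty lies: the recursive oracle $a$ is \emph{not} a relation (its outputs can be as large as $f$ itself), so you cannot absorb it into $R$, and the statement you are inducting on gives you nothing for $h$. The paper handles this by proving a strictly stronger \emph{general version}: for every $f(\vec a)\in\bcpp(R,\vec a)$ and every choice of constants $\vec c$ with $|a_i(\vec x_i;\vec y_i)|\leq c_i+\sum_{j\neq i}c_j+\max|\vec y_i|$, one has $|f(\vec a)(\vec x;\vec y)|\leq p_f(\sumlen{\vec x})+\sum\vec c+\max|\vec y|$ (and similarly for continuity). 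The stated lemma is then recovered by setting $\vec a=\emptyset$. With this formulation the recursion case goes through by a sub-induction on $\sumlen{\vec x}$: one sets the constant for the fresh oracle $a$ to $c:=p_f(\sumlen{\vec x}-1)$, applies the (now applicable) main IH to $h(\vec a,a)$, and chooses $p_f(n):=n\cdot p_h(n)$ so that $p_f(n)\geq p_h(n)+p_f(n-1)$.

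In short, the ``fixed-point-free construction'' you defer to \cite{CurziDas} is not merely bookkeeping: it requires strengthening the inductive statement to track arbitrary (non-relational) intermediate oracles via additive constants. Without that strengthening your induction does not close.
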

The first point is common to implicit complexity, being essentially Bellantoni and Cook's `polymax bounding lemma' from \cite{BellantoniCook}.
The second point expresses a dual property: while the first bounds the modulus of \emph{growth}, the second bounds the modulus of \emph{continuity}.

Here it is important the the new initial functions are relations, or at least that they have constant/limited growth rate. 
In fact, for the proof, cf.~\cite{CurziDas}, one needs a more complicated statement accounting for growth properties of the intermediate oracles $\vec a$ used for recursion, even though we only ultimately need the statement above for our purposes, once all such oracles $\vec a$ are `discharged'.

Using the Bounding Lemma
we have from \cite{CurziDas} (again accounting for further initial relations) the main characterisation result for $\bcpp$:
\begin{prop}
[Relativised characterisation]\label{prop:relativised-characterisation}
For a set $R$ of relations,
$\bcpp(\safnor R) \subseteq \fptime(R)$.
\end{prop}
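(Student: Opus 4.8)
The plan is to follow Bellantoni and Cook's original argument that $\bc \subseteq \fptime$, modularly threading the oracles $R$ through the whole construction. First I would set up a Turing machine $M$ with oracle tapes, one for each relation $r \in R$ appearing in the given definition of $f$ (there are only finitely many, since $\bcpp$-functions are finite programs), and argue by induction on the build-up of $f(R)(\vec x;\vec y) \in \bcpp(\safnor R)$ that $f$ is computed by such a machine within time polynomial in $|\vec x|$ (crucially \emph{not} in $|\vec y|$), charging oracle queries as unit cost. The base cases (the initial functions of $\bc$ and the oracle functions $r \in \safnor R$ themselves, the latter being a single oracle query) are immediate; relativised safe composition is handled by composing the machines, using the induction hypothesis that the $g(\vec x;)$ composed into a normal position is itself polytime-in-$|\vec x|$ so that its output has length polynomial in $|\vec x|$ and can be fed as a normal input without breaking the time bound.

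The heart of the argument is the case of safe recursion on $\permpref$. Here I would invoke the Relational Bounding Lemma (\cref{lem:rel-bou-lem}): it guarantees a polynomial $p_f$ bounding both the length $|f(R)(\vec x;\vec y)| < m_f(\vec x,\vec y) = p_f(|\vec x|) + \max|\vec y|$ and the modulus of continuity of $f$, i.e.\ $f$ only queries its oracles $r$ on arguments of length below $m_f(\vec x,\vec y)$. The polynomial growth bound is what makes a naive recursive evaluation feasible: because $\permpref$ is a well-founded pre-order and recursion descends along it, the recursion tree has polynomially bounded depth and each intermediate value $f(\vec u;\vec v)$ for $\vec u \permpref \vec x$, $\vec v \permprefeq \vec y$ has length bounded by $m_f(\vec x,\vec y)$; one memoises the (polynomially many) relevant subcalls in a table, as in the standard Bellantoni--Cook simulation, and evaluates bottom-up. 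The oracle calls made during this process stay within the polynomial continuity bound, so the total number of oracle queries and the total running time remain polynomial in $|\vec x|$. I would then note that since $f$ only has normal arguments at top level (or, for the final statement, we care about $f(\vec x;)$ via the embedding $\safnor R$, and $f(\vec x;\vec y)$ with the safe inputs supplied as ordinary inputs is still polytime since their lengths are then genuinely part of the input size), we conclude $f \in \fptime(R)$.

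The main obstacle I anticipate is bookkeeping the interaction between the two moduli — growth and continuity — through the recursion on $\permpref$, rather than any single conceptual difficulty. Specifically, one must be careful that when the recursion oracle $a$ (standing for the recursive call) is itself composed into normal positions of subfunctions, the arguments it is queried on stay of polynomial length; this is exactly the point where the asymmetry of relativised safe composition (that $g(\vec x;)$ in a normal position must be $\vec a$-free) is essential, and where the paper warns that "for the proof $\dots$ one needs a more complicated statement accounting for growth properties of the intermediate oracles $\vec a$." Since \cref{prop:relativised-characterisation} is explicitly stated to be imported from \cite{CurziDas} with only the addition of initial relations, the honest plan is: reproduce the \cite{CurziDas} proof of $\bcpp(\vec a) \subseteq \fptime$ verbatim, observe that every step relativises because (i) oracle queries are unit cost, (ii) there are finitely many oracles, and (iii) the Bounding Lemma already accounts for oracle continuity, and remark that the simulating machine simply gains a bank of oracle tapes. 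The only genuinely new content beyond \cite{CurziDas} is checking that the continuity clause of \cref{lem:rel-bou-lem} ensures the simulation queries oracles only on polynomially-bounded arguments, which is what keeps the query count polynomial.
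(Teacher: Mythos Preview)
Your approach is essentially the paper's: induction on the structure of $f$, with the $\permpref$-recursion case handled by a memoised bottom-up evaluation over the polynomially many pairs $(\vec u,\vec v)$ with $\vec u \permpref \vec x$ and $\vec v \permprefeq \vec y$, each stored value polynomially bounded via \cref{lem:rel-bou-lem}. Two small remarks: the paper's working invariant is simply ``polytime in $|\vec x,\vec y|$'' rather than ``in $|\vec x|$ alone'' (your stronger claim is not stable under oracle calls with safe inputs or under safe composition, though you correctly relax it at the end), and the continuity clause of \cref{lem:rel-bou-lem} is not actually invoked in the paper's proof of this proposition, since a polynomial-time oracle machine automatically poses only polynomially bounded queries.
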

The main point for proving this result is that the graph of $\permprefeq$ is relatively small, in particular for each $\vec y$ there are only polynomially many $\vec x \permprefeq \vec y$.\footnote{Of course, this polynomial depends on the length of $\vec y$, but for a given function of the algebra this is some global constant.}
So we can calculate a function $f(\vec x;\vec y) \in \bcpp(R)$ simply by polynomial-time induction (at the meta level) on $\permpref$, storing all previous values in a lookup table.
This table will have only polynomially many entries, by the previous observation about the size of the graph of $\permprefeq$, and each entry will have only polynomial size by the Bounding Lemma.

\subsection{$\cbc(\safnor \RR) \subseteq \bcpp(\RRsafnor)$: from circular proofs to recursive functions}
The point of $\bcpp$ in \cite{CurziDas} was to play the role of a target algebra to translate circular coderivations into.
The \emph{Translation Lemma} from that work~\cite[Lemma 47]{CurziDas},  accounting for further initial relations, gives:
\begin{lem}
[Relativised translation]\label{lem:rel-tran-lem}
Let $R$ be a set of relations. $\cbc (\safnor R) \subseteq \bcpp ( \safnor R)$.
\end{lem}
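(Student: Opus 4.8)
\textbf{Proof plan for \cref{lem:rel-tran-lem}.}
The plan is to adapt the Translation Lemma of \cite{CurziDas} modularly, treating the new initial sequents from $\safnor R$ as a black box. Recall that the original argument (for $\cbc \subseteq \bcpp$) takes a regular, progressing, safe, left-leaning $\bcnorec$-coderivation $\der$ and produces a $\bcpp$-function computing $\denot\der$, essentially by: (i) using safety to bound the number of $\cut_\sq$-steps on any branch, so that the ``normal'' part of $\der$ is finite; (ii) identifying the finitely many distinct sub-coderivations (regularity) as a finite system of mutually recursive definitions; (iii) observing that left-leaningness plus progressiveness means each recursive call occurs on arguments that are $\permpref$-smaller, so the recursion can be packaged as safe recursion on $\permpref$ (with oracles $\vec a$ as placeholders for the recursive calls, later discharged). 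The only thing to add here is that $\der \in \cbc(\safnor R)$ may additionally contain initial sequents $\vlinf{r}{}{\sn^{n_i},\n^{m_i}\seqar\n}{}$ for $r \in R$; each such leaf is simply interpreted as the corresponding initial function $r(\vec x,\vec y) \in \safnor R$, which is available in $\bcpp(\safnor R)$ by \cref{defn:bsubseteq}. Since $\bcpp(\safnor R)$ is closed under exactly the same composition and $\permpref$-recursion schemes as $\bcpp$, every step of the original translation goes through verbatim, now relative to the extended base of initial functions.

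Concretely, I would proceed as follows. First, recall from \cite{CurziDas} the key structural facts about a $\cbc$-coderivation $\der$: safety gives a finite bound on $\cut_\sq$-nesting along branches, so $\der$ decomposes into a finite ``spine'' of $\cut_\sq$-blocks, each of which is a regular, progressing, left-leaning coderivation with no $\cut_\sq$; and within each such block, left-leaningness together with the progressing condition ensures that following any infinite branch, the recursive parameter (the eventually-constant, infinitely-often-principal $\sn$ thread) strictly decreases in the $\pref$-order infinitely often, while the other arguments only get permuted/truncated — this is precisely what the $\permpref$ pre-order tracks. Second, since $\der$ is regular it has finitely many distinct sub-coderivations $\der_0, \dots, \der_{k}$; I would define, for each, a function symbol $f_j(\vec x; \vec y)$ and read off the defining equations from the semantics in \cref{fig:semantics-of-bc}, exactly as in \cite{CurziDas}, the only new clause being: if $\der_j$ is an initial sequent labelled $r \in R$, then $f_j(\vec x;\vec y) := r(\vec x,\vec y)$, a member of $\safnor R \subseteq \bcpp(\safnor R)$. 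Third, I would check that this finite equational system is realised in $\bcpp(\safnor R)$: the non-recursive clauses use only $\bc$-initial functions, safe composition (along safe parameters for $\cut_\sq$/$\cut_\n$-right, along normal for $\cut_\n$-spine, respecting the oracle-freeness side condition thanks to left-leaningness), and the $\RR$-initial functions; the recursive clauses are handled by safe recursion on $\permpref$ with the recursive calls abstracted as oracles $\vec a$, and the side conditions of that scheme (the subformula of the recursive call being oracle-free in its normal argument) are met for the same reasons as in the unrelativised case.

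The main obstacle — and the only place real care is needed — is verifying that the interaction between the new $\safnor R$ initial functions and the oracle/composition bookkeeping of $\bcpp$ does not break the side conditions. In particular, $\bcpp(\safnor R, \vec a)$ distinguishes genuine oracles $\vec a$ (placeholders for recursive calls, forbidden in the normal-argument slot of a composition) from honest initial functions $f \in \safnor R$ (which are permitted there); one must confirm that the initial sequents coming from $R$ are always translated into the latter category, never accidentally into the former. This is immediate from the construction — an $r \in R$ leaf is a fixed function, not a recursion unfolding — but it is the point that would need to be spelled out to claim genuine modularity over \cite{CurziDas}. Everything else, including the verification that the resulting $\bcpp(\safnor R)$-function indeed computes $\denot\der$ (by the coherence of the equational program, \cref{rem:equational-programs}, together with totality from \cref{prop:prog-imp-tot}), is a routine replay of the original proof, so I would state the lemma as a ``modular extension'' and only highlight the new initial-function case, as is done for the analogous \cref{prop:b(f)-in-cbc(f)} and \cref{prop:relativised-characterisation}.
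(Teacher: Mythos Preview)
Your proposal is correct and follows essentially the same approach as the paper: the Translation Lemma of \cite{CurziDas} is re-run with the single addition that an initial sequent labelled $r\in R$ becomes the base case $\denot{\der_\nu}=r\in\safnor R\subseteq\bcpp(\safnor R)$, treated as an honest initial function (in the $F$-slot of \cref{defn:bsubseteq}) rather than as a recursion oracle. The paper's actual machinery for the underlying \cite{CurziDas} argument is organised via cycle normal forms (buds, companions, and a simultaneous $\ssrecpp$ scheme) rather than your informal ``spine of $\cut_\sq$-blocks'' picture, but the relativisation step you describe is exactly the one carried out.
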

Note that we specialise the statement above only to sets of relations to avoid size issues potentially caused be new initial funtions of arbitrary growth rate. 
In fact, this proof requires closure of $\bcpp(F,\vec a)$ under a \emph{simultaneous} version of its recursion scheme, upon which a careful translation from circular coderivations in `cycle normal form' (see, e.g.,~\cite[Definition 6.2.1]{Brotherston-thesis}) to an equational specification can be duly resolved in $\bcpp$.

Now by setting $R = \RR$, we have the following consequence of \cref{prop:relativised-characterisation}:
\begin{cor}
\label{cor:nuB-in-fppoly}
$\nucbc \subseteq \FPpoly$
\end{cor}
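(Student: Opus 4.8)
The final statement, \Cref{cor:nuB-in-fppoly}, asserts $\nucbc \subseteq \FPpoly$. The plan is to simply chain together the inclusions already assembled in the preceding sections, since every link in the relevant part of the ``grand tour'' has now been established. Concretely, the argument is: given $\der \in \nucbc$, apply \Cref{lem:non-wellfounded-oracles} to get $\der \in \cbc(\RRsafnor)$; then specialise the \emph{Relativised Translation Lemma} (\Cref{lem:rel-tran-lem}) with $R := \RR$ to obtain $\cbc(\safnor\RR) \subseteq \bcpp(\safnor\RR) = \bcpp(\RRsafnor)$; then invoke the \emph{Relativised characterisation} (\Cref{prop:relativised-characterisation}) with the same $R := \RR$ to get $\bcpp(\RRsafnor) = \bcpp(\safnor\RR) \subseteq \fptime(\RR)$; and finally close the loop with \Cref{prop:fppoly=fptime(RR)}, which gives $\fptime(\RR) = \FPpoly$.

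So the proof I would write is essentially one display, the tail half of the grand-tour diagram:
\[
\nucbc
\ \overset{\text{\tiny T.\ref{lem:non-wellfounded-oracles}}}{\subseteq}\
\cbc(\RRsafnor)
\ \overset{\text{\tiny L.\ref{lem:rel-tran-lem}}}{\subseteq}\
\bcpp(\RRsafnor)
\ \overset{\text{\tiny P.\ref{prop:relativised-characterisation}}}{\subseteq}\
\fptime(\RR)
\ \overset{\text{\tiny P.\ref{prop:fppoly=fptime(RR)}}}{=}\
\FPpoly ,
\]
together with a sentence noting that \Cref{lem:rel-tran-lem} and \Cref{prop:relativised-characterisation} are applied with $R := \RR$, so that the notational identifications $\bcpp(\safnor\RR) = \bcpp(\RRsafnor)$ and $\FPpoly = \fptime(\RR)$ are exactly as stated. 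One small bookkeeping point worth remarking on explicitly is that $\RR$ is indeed a \emph{set of relations} in the sense required by \Cref{lem:rel-tran-lem} and \Cref{prop:relativised-characterisation} — its members are functions valued in $\{0,1\}$ — so the hypotheses of those results are met; this is immediate from the definition $\RR := \{\, r : \mathbb{N}^k \to \{0,1\} \mid |\vec x| = |\vec y| \implies r(\vec x) = r(\vec y)\,\}$.

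There is really no genuine obstacle remaining at this point: every inclusion in the chain has already been proved or imported earlier in the paper, and the corollary is a pure composition. If anything, the only thing requiring a moment's care is the alignment of notation — checking that $\safnor\RR$ and $\RRsafnor$ denote the same set (which they do by the $\safnor(-)$ convention applied to $\RR$), and that the ``$\RR$ is a set of relations'' side-condition is satisfied — but these are routine. The substantive work (the totality/safety/left-leaning analysis feeding \Cref{lem:non-wellfounded-oracles}, the cycle-normal-form translation feeding \Cref{lem:rel-tran-lem}, and the polynomial-size lookup-table argument feeding \Cref{prop:relativised-characterisation} via the \emph{Relational Bounding Lemma}) has all been done upstream, so the corollary's proof should be a two-line appeal to the preceding results.
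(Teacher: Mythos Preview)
Your proposal is correct and matches the paper's own proof essentially verbatim: the paper simply chains the four inclusions $\nucbc \subseteq \cbc(\RRsafnor) \subseteq \bcpp(\RRsafnor) \subseteq \fptime(\RR) \subseteq \FPpoly$ citing \Cref{lem:non-wellfounded-oracles}, \Cref{lem:rel-tran-lem}, \Cref{prop:relativised-characterisation}, and \Cref{prop:fppoly=fptime(RR)} in that order. Your additional bookkeeping remarks (that $\RR$ is a set of relations and that $\safnor\RR = \RRsafnor$) are fine but not spelled out in the paper's two-line proof.
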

\begin{proof}
We have $\nucbc\subseteq \cbc(\RRsafnor)$ by \cref{lem:non-wellfounded-oracles}, $\cbc(\RRsafnor)\subseteq \bcpp(\RRsafnor)$ by \cref{lem:rel-tran-lem}, $\bcpp(\RRsafnor)\subseteq \fptime(\RR)$ by \cref{prop:relativised-characterisation}, and finally $\fptime(\RR) \subseteq \FPpoly$ by \cref{prop:fppoly=fptime(RR)}.
\end{proof}

Along with \cref{cor:fppoly-in-nuB}, we have now established both directions of our main result \cref{thm:main-result} that $\nucbc = \FPpoly$, hence completing the proof.

	\section{Conclusions}

In this work we presented the two-sorted non-wellfounded proof system $\nucbc$ and proved that it characterises the complexity class $\FPpoly$. Our results build on previous work \cite{CurziDas}, where we defined the cyclic proof systems $\cbc$ and $\ncbc$ capturing, respectively,  $\fptime$ and $\felementary$~\cite{CurziDas}. 
The system $\nucbc$ is obtained from $\cbc$  by associating non-uniformity in computation to a form of non-wellfoundedness in proof theory. To establish the characterisation theorems, we also formalised some (presumably) folklore results on relativised function algebras for $\FPpoly$.

For future research, the first author is investigating non-wellfounded approaches to $\FPpoly$ in the setting of linear logic~\cite{Girard87}. In particular, we are studying a non-wellfounded  version of Mazza's Parsimonious Logic~\cite{Mazza15-log}, a variant of linear logic where the exponential modality $\oc$ satisfies Milner's law ($\oc A \simeq \oc A \otimes A)$.
This provides a natural computational interpretation of formulas $!A$ as types of streams on $A$. 
Mazza showed in~\cite{MazzaT15-non-uniform} that Parsimonious Logic can be used to capture $\Ppoly$ using \emph{wellfounded} proofs that are essentially  \emph{infinitely branching}. 
We conjecture that a similar characterisation can be obtained in a  non-wellfounded (and finitely branching) setting, using ideas from this work.

{Another \Achange{direction} is to explore applications of the results of this paper to probabilistic complexity. In particular, we aim to study fragments of $\nucbc$ modelling the class $\mathbf{BPP}$ (bounded-error probabilistic polynomial time), essentially by leveraging on well-known derandomisation methods showing the inclusion of $\mathbf{BPP}$ in $\FPpoly$, and hence in $\fptime(\RR)$ (see~\Cref{prop:fppoly=fptime(RR)}). A challenging aspect of this task is to obtain characterisation results that are entirely in the style of ICC, since  $\mathbf{BPP}$ is defined by explicit (error) bounds, as observed in~\cite{LagoT15}. We \Achange{suspect} that $\nucbc$ represents the right framework for investigating fully implicit characterisations of this class, where additional proof-theoretic conditions can be introduced to restrict computationally the access to oracles and, consequently, to model bounded-error probabilistic computation.}

As \cite{CurziDas} also established a system $\ncbc$ for $\felementary$, it would be pertinent to ask whether ideas in this work can be applied to $\ncbc$ to characterise $\FEpoly$, i.e., the class of functions computable in elementary time by a Turing machine with access to a polynomial advice. 
% A naive attempt would be to apply the methods developed in this paper to $\ncbc$. 
Unfortunately, the modulus of continuity established for $\ncbc$ in~\cite{CurziDas} is super-polynomial (indeed elementary), 
meaning that the same technique, a priori, would not restrict computation to only polynomial advice.
Consideration of this issue is left to future research.

\bibliography{main}

%%%%HERE APPENDIX%%%%%%%%
\clearpage
\appendix

\section{Some properties of progressing  $\bcnorec$-coderivations established in~\cite{CurziDas}}

\noindent
\textbf{Proof of~\Cref{prop:prog-imp-tot}.}
 We proceed by contradiction.
If $\denot \der$ is non-total then, since each rule preserves totality top-down, we must have that $\denot{\der'}$ is non-total for one of $\der$'s immediate sub-coderivations $\der'$.
Continuing this reasoning we can build an infinite leftmost `non-total' branch $B=(\der^{i})_{i<\omega}$.
Let $(\sn^i)_{i\geq k}$ be a progressing thread along $B$, and assign to each $\sn^i$ the least natural number $n_i \in \Nat$ such that $\denot{\der^{i}}$ is non-total when $n_i$ is assigned to the type occurrence $\sn^i$.

Now, notice that:
\begin{itemize}
    \item $(n_i)_{i\geq k}$ is monotone non-increasing, by inspection of the rules and their interpretations from \Cref{defn:semantics-bc}.
    \item $(n_i)_{i\geq k}$ does not converge, since $(\sn^i)_{i\geq k}$ is progressing and so is infinitely often principal for $\cnd_\sq$ or $\cndl_\sq$, where the value of $n_i$ must strictly decrease (cf., again, \Cref{defn:semantics-bc}).
\end{itemize}
This contradicts the well-ordering property of the natural numbers.
\qed

\begin{prop}[\cite{CurziDas}]\label{prop:cutbox} Given a progressing  $\bcnorec$-coderivation  $\der: \sq \Gamma , \vec \n \seqar \sn$, there is a progressing  $\bcnorec$-coderivation $\der^*:\sq\Gamma \seqar \sn$   such that:
\begin{equation*}
    \model{\der}(\vec{x}; \vec{y})= \model{\der^*}(\vec{x};).
\end{equation*}
\end{prop}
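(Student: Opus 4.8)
**Proof proposal for Proposition (progressing $\der:\sq\Gamma,\vec\n\seqar\sn$ can be turned into $\der^*:\sq\Gamma\seqar\sn$ with the same semantics on the normal arguments).**

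The plan is to argue by induction on (a suitable well-ordering of) the coderivation $\der$, pushing the safe arguments $\vec\n$ upward and eliminating them, exploiting the fact that the succedent is $\sn$ (a \emph{normal} type) and that $\der$ is progressing. The key observation is that a progressing coderivation with succedent $\sn$ cannot `really use' its safe inputs in a way that survives to the conclusion: by \Cref{prop:semi-regular-imp-RRsafnor}-style reasoning (or more directly by inspection of the rules), the last rule producing the $\sn$ succedent must be an $\sqr$ step, below which the succedent is $\n$ and the context consists entirely of boxed (normal) formulas $\sq\Gamma'$; but then the safe formulas $\vec\n$ of the antecedent must have been introduced (from below) by $\sql$ or weakening or must be principal for a conditional, and in each case we can trace their fate. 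Concretely, I would first normalise $\der$ so that its root step is $\sqr$, applied to some $\der_0 : \sq\Gamma \seqar \n$ — wait, this is not literally possible because $\vec\n$ occurs in the antecedent; rather, the root region down to the first $\sqr$ (reading top-down) or the first place a safe formula is consumed must be analysed.

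More carefully, here is the induction I would run. Proceed by induction on $\der$ (using the fact that progressing coderivations are, for the purposes of such transformations, amenable to a well-founded analysis via \Cref{prop:prog-imp-tot}, or simply by a guarded corecursive definition of $\der^*$ checked to be progressing and correct afterwards). Case on the last rule of $\der$:
\begin{itemize}
\item If the last rule is $\sqr$, then $\der = \sqr(\der_0)$ with $\der_0 : \sq\Gamma,\vec\n \seqar \n$; but $\sqr$ requires the antecedent to be of the form $\sq\Delta$, so in fact $\vec\n$ must be empty and there is nothing to do — set $\der^* := \der$. (If $\vec \n$ is nonempty we are not in this case.)
\item If the last rule acts only on the context $\sq\Gamma$ or is $\succ 0,\succ 1$ applied to the $\n$/$\sn$ succedent, recurse into the premise(s), carrying $\vec\n$ along, and reapply the rule; the semantics identity is immediate from \Cref{fig:semantics-of-bc}.
\item If the last rule is $\wk_\sq$ introducing one of the $\n$'s of $\vec\n$, drop it and recurse.
\item If the last rule is $\sql$ moving a $\sn$ from the context into safe position — this cannot produce a fresh member of $\vec\n$ since $\sql$ produces $\n$ on the left from $\n$ on the right... actually $\sql : \n$-in-antecedent $\rightsquigarrow$ $\sn$-in-antecedent; so a $\n$ in the conclusion's antecedent may be principal for $\sql$, in which case recurse into the premise where it has become $\sn$, then we are reducing the number of plain $\n$'s; handle the bookkeeping.
\item If the last rule is a cut, a conditional $\cnd_\n,\cnd_\sq,\cndl_\n,\cndl_\sq$, or $\srec$ — but note $\srec \notin \bcnorec$ — with one of $\vec \n$ principal, recurse into each premise. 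The crucial point in the conditional case is that the $\n$ being eliminated is the \emph{tested} argument and appears in safe position in the premises, so after the recursive transformation it disappears together with the conditional (replaced by whichever branch, but since we must keep a single coderivation independent of the value, we keep the conditional but now with a normal-only antecedent — re-examine).
\end{itemize}

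The main obstacle, and where I would spend the real effort, is the conditional/cut cases where a member of $\vec\n$ is genuinely tested: there the transformed coderivation still needs to branch, but the branching is now on a normal argument, so we must re-derive it using $\cnd_\n$ rather than $\cnd_\sq$, and check that (i) the resulting object is still progressing (the progressing thread might have gone through the $\sn$ we are eliminating, so we must verify that some \emph{other} infinite thread along each infinite branch is progressing — this is where the hypothesis that $\der$ was progressing \emph{before} deleting the safe slots must be used, perhaps by showing the deleted thread was not needed, or by a more global argument) and (ii) the Kleene–Herbrand–Gödel semantics is preserved, i.e.\ $\model{\der}(\vec x;\vec y)=\model{\der^*}(\vec x;)$ for all $\vec x,\vec y$, which amounts to showing $\model{\der}$ does not in fact depend on $\vec y$ — a semantic fact that follows because the succedent type $\sn$ is normal and, by an argument dual to \Cref{lem:succ-id-free-imp-relation}, the value of a progressing coderivation with normal succedent is independent of its safe inputs. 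I would likely isolate this last semantic independence as a separate lemma (by contradiction, building a branch witnessing dependence on a safe argument and deriving a contradiction with progressiveness exactly as in the proof sketch of \Cref{lem:succ-id-free-imp-relation}), and then the structural transformation becomes the comparatively routine job of rewriting $\cnd_\sq$ as $\cnd_\n$ and $\cut_\sq$ as $\cut_\n$ wherever a safe slot from $\vec\n$ is involved, deleting $\sql$/$\wk_\sq$ steps that introduced those slots, and verifying progressiveness of the result.
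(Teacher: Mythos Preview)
Your proposal has the right intuition — the $\sn$ succedent should make the safe inputs irrelevant — but you miss the structural observation that makes the argument short and well-founded. All four conditional rules $\cnd_\n,\cnd_\sq,\cndl_\n,\cndl_\sq$ have succedent $\n$, not $\sn$; hence the last rule of $\der$ \emph{cannot} be a conditional. The ``main obstacle'' you identify — a member of $\vec\n$ being genuinely tested at the root, forcing you to rewrite $\cnd_\sq$ as $\cnd_\n$ and re-verify progressiveness — simply never arises: at every stage of the recursion the sequent under consideration still has succedent $\sn$, so conditionals are excluded throughout.

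This same observation supplies the well-founded measure you were searching for. By progressiveness every infinite branch of $\der$ contains a $\cnd_\sq$ or $\cndl_\sq$ step; the lowest such steps (together with the leaves of finite branches) form a bar, and by K\"onig's Lemma the region $X_\der$ below it is finite. One then inducts on $|X_\der|$, with a case analysis on the last rule. Most cases are vacuous (rules with succedent $\n$) or immediate (unary rules, $\sqr$). The only interesting cases are the cuts. For $\cut_\n$, only the right premise $\der_1:\sq\Gamma,\vec\n,\n\seqar\sn$ has succedent $\sn$; apply the inductive hypothesis to it alone, obtaining $\der_1^*:\sq\Gamma\seqar\sn$, and set $\der^*:=\der_1^*$, discarding $\der_0$ entirely — the semantic equation holds because $\denot{\der_1^*}(\vec x;)=\denot{\der_1}(\vec x;\vec y,\denot{\der_0}(\vec x;\vec y))$ by the inductive hypothesis. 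For $\cut_\sq$ both premises have succedent $\sn$, so one recurses into both and recomposes with $\cut_\sq$. No corecursive definition, no separate semantic-independence lemma, and no delicate re-verification of progressiveness is needed.
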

\begin{proof} By progressiveness, any infinite branch contains a $\cnd_\sq$-step or a $\cndl_\sq$-step, which have non-modal succedents. 
Thus there is a set of $\cnd_{\sq}$-occurrences and $\cndl_\sq$-occurrences that forms a bar across $\der$.
% Since the conclusion has a modal succedent, by inspection on the rules of $\bcnorec$ we conclude that each infinite branch contains an instance of $\sqr$. Hence, the set of the conclusions of such $\sqr$-steps forms  a bar across $\der$. 
By K\"{o}nig Lemma, the set of all nodes of $\der$ below this bar, say $X_\der$, is finite. The proof now follows by induction on the cardinality of $X_\der$. The case where the last rule of $\der$ is an instance of $\id, \zero, 1, \sqr, \cnd_{\n}, \cnd_\sq, \cndl_{\n}, \cndl_\sq$, or $\srec$ are trivial. If  the last rule of $\der$ is an instance of $\exch_{\n},\exch_{\sq},  \sql,\succ i$, and $\wk_\sq$ then we apply the induction hypothesis. Let us now suppose that $\der$ has been obtained from a derivation $\der_0$ by applying an instance of  $\wk_\n$. By induction hypothesis, there exists a derivation  $\pder0^*:\lists{n}{\sn}{\sn} \seqar \sn$ such that $\model{\pder0}(\vec x; \vec y)= \model{\pder0^*}(\vec x; )$. Since $\model{\der}(\vec x; \vec y, y)= \model{\pder0}(\vec x; \vec y)=\model{\pder0^*}(\vec x; )$ we just set $\der^*= \pder{0}^*$.  Suppose now that $\der$ is obtained from two derivations $\der_0$ and $\der_1$ by applying an instance of  $\cut_\n$. By induction hypothesis, there exists $\pder{1}^*$ such that $\model{\pder{1}}(\vec x;\vec y,y)=\model{\pder{1}^*}(\vec x;)$. Since 
 $\model{\pder{1}}(\vec x; \vec y,  \model{\pder{0}}(\vec x; \vec y))= \model{\pder{1}^*}(\vec x;)$, we set $\der^*=\pder{1}^*$. As for the case where the last rule is $\cut_\sq$, by induction hypothesis, there exist derivations $\pder{0}^*$ and $\pder{1}^*$ such that $\model{\pder{0}}(\vec x; \vec y)= \model{\pder{0}^*}(\vec x;)$ and $\model{\pder{1}}(\vec x,x; \vec y)= \model{\pder{1}^*}(\vec x, x;)$, so that we define $\der^*$ as the derivation obtained from $\pder{0}^*$ and $\pder{1}^*$ by applying the rule $\cut_{\sq}$.
\end{proof}

\section{Simultaneous recursion schemes}\label{sec:simult-rec}

For our main results, we will ultimately need that the function algebra \black{$\bcpp(F)$} is closed under simultaneous versions of its recursion scheme.

\begin{defn}[Simultaneous schemes]
\label{defn:simultaneous-schemes}

\black{Let $F$ be a set of two-sorted functions.} We define the scheme $\ssrecpp$ as follows, for arbitrary $\vec a = a_1, \dots, a_{k}$:
\begin{itemize}
    \item from $h_i(\vec a)(\vec x; \vec y)$ over $\vec a, \vec b$\black{, $F$}, for $1 \leq i\leq k$, define $f_i(\vec x;\vec y)$ over $\vec b$\black{, $F$}, for $1 \leq i \leq k$, by:
    $$
    f_i(\vec x; \vec y) = 
    h_i((\lambda \vec u \permpref \vec x, \lambda \vec v\subseteq \vec y. f_j(\vec u; \vec v))_{1 \leq j\leq  k})(\vec x; \vec y)
    $$
\end{itemize}
\end{defn}
\begin{prop} \label{prop:simultaneous-recursion-admissible}
 If $\vec f(\vec x;\vec y)$ over $\vec b$\black{,$F$} are obtained by applying  $\ssrecpp$ to $\vec h(\vec a)(\vec x;\vec y) \in \bcpp(\vec a,\vec b\black{, F})$, then also $\vec f(\vec x;\vec y) \in \bcpp (\vec b\black{,  F})$
\end{prop}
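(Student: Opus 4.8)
The plan is to reduce simultaneous recursion on $\permpref$ to the single recursion scheme already built into $\bcpp$, via the classical `tupling' (or `pairing') trick. First I would fix a polynomial-time pairing apparatus: a two-sorted tuple constructor $\langle z_1,\dots,z_k\rangle$ together with projections $(\cdot)_i$, all of which are available in $\bc$ (hence in $\bcpp(F,\vec a)$) since they are $\fptime$ functions of bounded growth; crucially the coding must be chosen so that the length of $\langle z_1,\dots,z_k\rangle$ is bounded by a fixed linear function of $\max_i|z_i|$, so that applying it inside a recursion does not break the Bounding Lemma. I would also need the `guarded abstraction' bookkeeping: given an oracle $a$ intended to range over $\vec u\permpref\vec x$, $\vec v\permprefeq\vec y$ and returning a $k$-tuple, one recovers the component oracles $a^{(j)}(\vec u;\vec v) := (a(\vec u;\vec v))_j$, which are again in the algebra by relativised safe composition along a safe parameter.

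Then I would define a single function $g(\vec x;\vec y)$ by $\srecwfpo{\permpref}$ whose value is the tuple $\langle f_1(\vec x;\vec y),\dots,f_k(\vec x;\vec y)\rangle$. Concretely, set
$$
g(\vec x;\vec y) := H\bigl(\lambda \vec u\permpref \vec x,\ \lambda\vec v\permprefeq\vec y.\ g(\vec u;\vec v)\bigr)(\vec x;\vec y),
$$
where $H(a)(\vec x;\vec y) := \langle h_1((a^{(j)})_{j})(\vec x;\vec y),\dots, h_k((a^{(j)})_{j})(\vec x;\vec y)\rangle$, and each inner $h_i$ receives, as its $j$-th recursion oracle, the function $a^{(j)}$ described above. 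One checks $H(a) \in \bcpp(F,a,\vec a,\vec b)$: it is built from the given $h_i(\vec a)(\vec x;\vec y)\in\bcpp(F,\vec a,\vec b)$ — note we must substitute the $a^{(j)}$ for the $\vec a$-oracles of $h_i$, which is legitimate because the $a^{(j)}$ are genuine functions of the algebra, not free oracles — then tupled, which is safe composition along safe arguments. Hence $\srecwfpo{\permpref}$ applies and $g\in\bcpp(F,\vec b)$ (abbreviating $\vec b$ for whatever residual oracles remain; in the statement these are the $\vec b$ of \Cref{defn:simultaneous-schemes}). Finally define $f_i(\vec x;\vec y) := (g(\vec x;\vec y))_i$, which is in $\bcpp(F,\vec b)$ by safe composition.

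It remains to verify $f_i$ as defined equals the $f_i$ specified by $\ssrecwfpo{\permpref}$. I would do this by $\permpref$-induction on $\vec x$: assuming $f_j(\vec u;\vec v) = (g(\vec u;\vec v))_j$ for all $\vec u\permpref\vec x$ (and $\vec v\permprefeq\vec y$), unfold $g(\vec x;\vec y) = H(\dots)(\vec x;\vec y)$, use that the guarded oracle passed to $h_i$ is exactly $\lambda\vec u\permpref\vec x,\vec v\permprefeq\vec y.(g(\vec u;\vec v))_j = \lambda\vec u\permpref\vec x,\vec v\permprefeq\vec y.f_j(\vec u;\vec v)$ by the induction hypothesis, and read off the $i$-th projection to land on the defining equation of $f_i$ in \Cref{defn:simultaneous-schemes}. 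The guards match because both recursion schemes use the same relation $\permpref$ on normal arguments and $\permprefeq$ on safe arguments, and the `otherwise $\mapsto 0$' cases align.

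The main obstacle I expect is not the recursion-theoretic plumbing but the \emph{bookkeeping of the oracle contexts}: the algebra $\bcpp(F,\vec a)$ makes a genuine distinction between free oracles $\vec a$ (which may not appear in the $g$-argument of a normal composition, and which are discharged by the recursion scheme) and the `real' functions $F$, and one must track carefully that in forming $H(a)$ the $\vec a$ of each $h_i$ get instantiated to the $a^{(j)}$ — legal only because those are actual algebra members — while the fresh oracle $a$ is the one bound by $\srecwfpo{\permpref}$, so that after the recursion no free oracle beyond $\vec b$ survives. A secondary, milder concern is ensuring the pairing function's growth is linear (so the Relational Bounding Lemma, \Cref{lem:rel-bou-lem}, still gives polynomial moduli for $g$), but this is a standard choice of coding and I would simply cite it. I would present the argument at the level of these verifications, leaving the explicit $\bc$-definitions of pairing and projection to a one-line remark.
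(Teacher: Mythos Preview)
Your tupling approach has a genuine obstruction that is not merely bookkeeping. The Bounding Lemma for $\bcpp$ (\Cref{lem:rel-bou-lem}, and its general form in the appendix) forces every $g(\vec x;\vec y)\in\bcpp(F,\vec b)$ to satisfy $|g(\vec x;\vec y)| \le p_g(|\vec x|) + \max|\vec y|$, with the safe contribution carrying coefficient exactly~$1$. But any injective encoding of a $k$-tuple of values, each of length close to $\max|\vec y|$, must have length roughly $k\cdot\max|\vec y|$; so the tupled function $g(\vec x;\vec y)=\langle f_1(\vec x;\vec y),\dots,f_k(\vec x;\vec y)\rangle$ cannot lie in $\bcpp$ in general. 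For a concrete witness take $k=2$ and $h_i(\vec a)(\vec x;y):=y$ (the safe projection, ignoring the oracles): then each $f_i(\vec x;y)=y$, yet any injective $\langle y,y\rangle$ has length at least $2|y|-O(1)$, violating the bound for all choices of $p_g$. Your remark that it suffices for the pairing to be ``linear in $\max_i|z_i|$'' is exactly where this goes wrong: linear with slope $k>1$ is already too much. Equivalently, a $\bc$-function of safe arguments alone can apply only boundedly many successors, so no injective pairing $P(;z_1,\dots,z_k)$ is definable.

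The paper sidesteps this by not tupling the outputs at all. It instead adds $k$ extra \emph{safe} selector arguments $\vec z$ and defines a single $f(\vec x;\vec y,\vec z)$ returning $h_i(\ldots)(\vec x;\vec y)$ when $\vec z$ equals $\permi i$, the $i$th cyclic rotation of the list $\numeral 1,\dots,\numeral k$, and $0$ otherwise. The recursive call to $f_j$ becomes a call $a(\vec u;\vec v,\permi j)$ to the single fresh oracle; this is admissible under the guard $\lambda\vec w\permprefeq\vec z$ because $\permi j$ is a permutation of $\permi i$, whence $\permi j\permprefeq\vec z$ whenever $\vec z=\permi i$. The case analysis on $\vec z$ is over a fixed finite set of bounded numerals, so it is realised by nested $\cnd$ on safe arguments without any recursion, and the output of $f$ is always a single $f_i$-value, so the polymax bound is respected. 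Your substitution concern (instantiating the oracles $a_j$ of each $h_i$ by functions of the fresh oracle $a$) is shared by both approaches and is harmless, handled by a routine induction on the construction of $h_i$; it is the pairing, not the substitution, that breaks your plan.
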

\begin{proof}
% [Proof of~\Cref{prop:simultaneous-recursion-admissible}.]
\black{In what follows we shall neglect relativisation to $F$, as these oracles play no role in the proof.} Let $f_i(\vec x;\vec y)$ and $h_i(a_1, \dots, a_k)(\vec x; \vec y)$ be as given in \Cref{defn:simultaneous-schemes}, and temporarily write $f_j^{\vec x;\vec y}$ for $\lambda \vec u \permpref \vec x, \lambda \vec v \permprefeq \vec y . f_j (\vec u;\vec v)$, so we have:
$$
f_i(\vec x;\vec y) = h_i (f_1^{\vec x; \vec y}, \dots, f_k^{\vec x;\vec y})(\vec x; \vec y)
$$
For $i\in \Nat$, let us temporarily write $\numeral i$ for $i$ in binary notation\footnote{In fact, any notation will do, but we pick one for concreteness.}, and $\permi i$ for the list $$\numeral i, \numeral{i+1}, \dots, \numeral k, \numeral 1, \numeral 2, \dots, \numeral{i-1}$$
Note that, for all $i=1, \dots, k$, $\permi i$ is a permutation (in fact a rotation) of $\numeral 1, \dots, \numeral k$.

Now, let $f(\vec x; \vec y, \vec z)$ over oracles $\vec b$ be given as follows:
\begin{equation}
    \label{eq:sim-rec-reduction}
    f(\vec x; \vec y, \vec z) \dfn \begin{cases}
h_1 (f_1^{\vec x;\vec y}, \dots, f_k^{\vec x; \vec y})(\vec x;\vec y) & \vec z = \permi 1 \\
\vdots \\
h_k (f_1^{\vec x;\vec y}, \dots, f_k^{\vec x; \vec y})(\vec x;\vec y) & \vec z = \permi k \\
0 & \text{otherwise}
\end{cases}
\end{equation}
Note that this really is a finite case distinction since each of the boundedly many $\permi i$ has bounded size, both bounds depending only on $k$, and so is computable in $\bcnorec$ over $\vec h$. 

By definition, then, we have that $f(\vec x;\vec y, \permi i) = f_i (\vec x; \vec y)$.
Moreover note that, for each $j=1,\dots,k$, we have,
$$
\arraycolsep=2pt
\begin{array}{rcl}
   f_j^{\vec x;\vec y} (\vec u'; \vec v')  &=& (\lambda \vec u \permpref \vec x, \lambda \vec v \permprefeq \vec y. f_j(\vec u;\vec v)) (\vec u';\vec v') \\
    & = & (\lambda \vec u \permpref \vec x, \lambda \vec v \permprefeq \vec y. f(\vec u;\vec v,\permi j ))(\vec u';\vec v') \\
    & = & (\lambda \vec u \permpref \vec x, \lambda \vec v \permprefeq \vec y, \lambda \vec w \permprefeq \vec z . f(\vec u;\vec v, \vec w)) (\vec u';\vec v',\permi j)
\end{array}
$$
% \gianluca{A very small comment: from the viewpoint of the lambda calculus, shouldn't the last function in the above chain of  equations have shape $\lambda \vec u \permpref \vec x, \lambda \vec v \permprefeq \vec y, (\lambda \vec w \permprefeq \vec z . f(\vec u;\vec v, \vec w))(\permi j) $? In this case: can we infer the equation below?}
% \anupam{Is this better? I am glossing over the exchange issue a little, since all functions are type 1 so I can assume all inputs are fed. Perhaps we should say more, or give a footnote.}
as long as $\vec z$ is some $\vec i$,
so indeed~\eqref{eq:sim-rec-reduction} has the form,
$$
f(\vec x; \vec y, \vec z) = h (\lambda \vec u \permpref \vec x, \lambda \vec v \permprefeq \vec y, \lambda \vec w \permprefeq \vec z . f(\vec u; \vec v, \vec w)) (\vec x;\vec y)
$$
and $f(\vec x;\vec y,\vec z) \in \bcpp(\vec b)$ by $\srecpp$.
Finally, since  $f_i(\vec x;\vec y) = f(\vec x;\vec y, \permi i)$, we indeed have that each $f_i(\vec x;\vec y) \in \bcpp(\vec b)$.
\end{proof}
\section{Proof of~\Cref{lem:rel-tran-lem}}
\black{In this section we prove~\Cref{lem:rel-tran-lem}.  In what follows, when clear from the context, we shall simply write $\der$ in place of $\der(F)$ to facilitate readability. We recall that oracles in a coderivation are initial sequents.}

First, we observe that a regular coderivation (with oracles) can be naturally seen as a finite tree with `backpointers', a representation known as \emph{cycle normal form}, cf.~\cite{Brotherston05,brotherston2011sequent}.   

\begin{defn}[Cycle normal form]
\black{Let $F$ be a set of  two-sorted functions, and let $\der$ be a regular $\bcnorec(F)$-coderivation. }
The  \emph{cycle normal form} (or simply \emph{cycle nf}) of $\der$ is a pair $\derrd$, where  $\rd$ is a partial self-mapping on the nodes of $\der$ whose domain of definition is denoted $\bud(\der)$ and:
\begin{enumerate}[(i)]
    \item every infinite branch of $\der$ contains some (unique) $\nu \in \bud(\der)$;
    % \item for any $\nu \in \bud(\der)$, $\nu$ is the least node (w.r.t.~$\sqsubseteq$) of $\der$  satisfying \black{both} $\rd(\nu) \sqsubset \nu$ and $\pder\nu= \pder{\rd(\nu)}$;
    \item if $\nu \in \bud(\der)$ then both $\rd(\nu) \sqsubset \nu$ and  $\der_{\rd(\nu)} = \der_\nu$;
    \item for any two distinct nodes $\mu \sqsubset \nu$ strictly below $\bud(\der)$, $\der_\mu \neq \der_\nu$
\end{enumerate}
% \anupam{Second condition still seems wrong to me: `least node s.t.\ some property of $\rd(\nu)$' is consistent with not taking the first repetition along a branch, since the first repetition is not in $\rd$'s domain.} 
% \anupam{Can we replace second condition with the two conditions: (a) if $\nu \in \bud(\der)$ then $\der_{\rd(\nu)} = \der_\nu$; and (b) for any two distinct nodes $\mu \sqsubset \nu$ below $\bud(\der)$, $\der_\mu \neq \der_\nu$. This then matches your explanation in the following remark.}

We call any $\nu \in \bud(\der)$ a \emph{bud}, and $\rd(\nu)$ its \emph{companion}. A \emph{terminal} node is either one of the leaves of $\der$ (\black{among which, the functions in  $F$}) or   a bud. The set of nodes of $\der$ bounded above by a terminal node is denoted $T_\der$.  
Given a node $\nu\in T_\der$, we  define $\bud_\nu(\der)$ as the restriction of buds to those above $\nu$.
\end{defn}

\begin{rem}\label{rem:canonicalnodesfinite}
The cycle normal form of a regular coderivation $\der$ always exists, as  by definition any infinite branch contains a node $\nu$ such that $\pder \nu=\pder \mu$ for some node $\mu$ below $\nu$. 
$\bud(\der)$ is designed to consist of just the \emph{least} such nodes, so that by construction the  cycle normal form is unique. 
Note that $\bud(\der)$ must form an antichain: if $\mu, \nu \in \bud(\der)$ with $\mu\sqsubset \nu$, then $\rd(\mu)\sqsubset \mu $ are below $\bud(\der)$ but we have $\der_{\rd(\mu)} = \der_\mu$ by (ii) above, contradicting the (iii). 

Also, notice that any branch of $\der$ contains a leaf of $T_\der$. Moreover, since $\bud(\der)$ is an antichain, the leaves of $T_\der$ defines a ‘bar’ across  $\der$,
and so $T_\der$ is a finite tree. 
% Finally, as this tree has finite branches, it must be  \emph{finite}  by weak K\"{o}nig's Lemma.
\end{rem}

% We shall  represent cycle nfs as \emph{finite}  prooftrees (thanks to Remark~\ref{rem:canonicalnodesfinite}) with inference rules  labelled  with coloured bullets, in such a way that bullets with the same colour represent  endoints of the same backlink. 

The following proposition allows us to reformulate   progressiveness, safety and left-leaning conditions for cycle normal forms. 
% \anupam{Could we put this earlier and/or refer to the proofs of NL-decidability of certain conditions?}
\begin{prop}\label{prop:structureofcycles}
\black{Let $F$ be  two-sorted  functions, and let $\der$ be a regular   $\bcnorec(F)$-coderivation} with cycle nf $\derrd$. 
For any  $\nu \in \bud(\der)$, the (finite) path $\pi$ from $\rd(\nu)$ to $\nu$ satisfies:
\begin{enumerate}
\item \label{enum:budcompanion0} if  $\der$ is progressing, $\pi$ must contain the conclusion of an instance of $\cnd_{\sn}$ \black{or $\cndl_\sq$};
    \item \label{enum:budcompanion1} if  $\der$ is \black{progressing and safe}, $\pi$ cannot  contain  the conclusion of  $\cut_{\sn}$, $\sql$,  $\wk_{\sq}$, and  the   leftmost premise of  $\cnd_{\sq}$ \black{and of $\cndl_{\sq}$};
    \item \label{enum:budcompanion3} if $\der$ is a \black{$\cbc(F)$-coderivation},  $\pi$  cannot contain the conclusion of $\wk_{\n}$, the leftmost premise of  $\cnd_{\n}$ \black{and of $\cndl_{\n}$}, and the rightmost premise of $\cut_{\n}$.
\end{enumerate}
\end{prop}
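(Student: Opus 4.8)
The plan is to reduce each of the three statements to the corresponding global condition on $\der$ by constructing, from a hypothetical "bad" bud-companion path $\pi$, an infinite branch of $\der$ that violates the relevant condition (progressiveness, safety, or the $\cbc$-defining conditions). The key observation is that if $\nu \in \bud(\der)$ with companion $\rd(\nu)$, then $\der_{\rd(\nu)} = \der_\nu$ by clause (ii) of the cycle normal form, so the finite path $\pi$ from $\rd(\nu)$ to $\nu$ can be "unfolded" infinitely: concatenating $\pi$ with itself infinitely often (formally, following $\pi$, then at $\nu$ jumping back to $\rd(\nu)$ via the identification $\der_\nu = \der_{\rd(\nu)}$, and repeating) yields a genuine infinite branch $B_\pi$ of $\der$ all of whose inference steps, from some point on, are exactly those occurring on $\pi$, repeated cyclically.

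For item \ref{enum:budcompanion0}: suppose $\der$ is progressing but $\pi$ contains no conclusion of a $\cnd_\sq$ or $\cndl_\sq$ step. Then the infinite branch $B_\pi$ built above contains only finitely many $\cnd_\sq$/$\cndl_\sq$ steps (none after the first full traversal of $\pi$). By \Cref{defn:progressing}, any progressing thread must be infinitely often principal for $\cnd_\sq$ or $\cndl_\sq$; hence $B_\pi$ has no progressing thread, contradicting progressiveness. For item \ref{enum:budcompanion1}: assume $\der$ is progressing and safe. If $\pi$ contained the conclusion of a $\cut_\sq$ step, then $B_\pi$ would cross infinitely many $\cut_\sq$ steps, contradicting safety (\Cref{defn:conditions}); so $\pi$ cannot contain a $\cut_\sq$ conclusion. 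For $\sql$, $\wk_\sq$, and the leftmost premise of $\cnd_\sq$ and $\cndl_\sq$, the argument is via threads: since $\der$ is progressing, $B_\pi$ has a progressing thread $\tau$, which must be eventually constant $\sn$ and infinitely often principal for some $\cnd_\sq$ or $\cndl_\sq$ on $\pi$ (using item \ref{enum:budcompanion0}). Tracing $\tau$ through one period of $\pi$, one checks by inspection of the rules in \Cref{fig:bc-type-system} that $\tau$ cannot pass through the conclusion of $\sql$ or $\wk_\sq$ (these destroy or do not carry an $\sn$ antecedent occurrence upward in a way compatible with a thread that returns to the same occurrence after a period), nor through the leftmost (i.e.\ the $g$-side, modality-free) premise of $\cnd_\sq$ or $\cndl_\sq$ (which has context $\sq\Gamma$ replaced — the principal $\sn$ does not appear there). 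Since $\tau$ must traverse the whole period and return, $\pi$ cannot contain these steps at all.

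For item \ref{enum:budcompanion3}: assume $\der$ is a $\cbc(F)$-coderivation, so in addition it is left-leaning. If $\pi$ contained the rightmost premise of a $\cut_\n$ step, then $B_\pi$ would go right at a $\cut_\n$ infinitely often, contradicting left-leaningness (\Cref{defn:conditions}); so $\pi$ avoids the rightmost premise of $\cut_\n$. For $\wk_\n$ and the leftmost premise of $\cnd_\n$ and $\cndl_\n$, we again use the progressing thread $\tau$ along $B_\pi$: by item \ref{enum:budcompanion1}, $\tau$ lives entirely in the $\sn$ zone of the antecedent, and by inspection such a thread cannot pass through the conclusion of $\wk_\n$ (which adds an $\n$, not an $\sn$) nor through the leftmost premise of $\cnd_\n$ or $\cndl_\n$ (whose principal formula is an $\n$ occurrence, not an $\sn$), so as before these steps cannot occur on $\pi$.

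The main obstacle I expect is the careful thread-chasing in items \ref{enum:budcompanion1} and \ref{enum:budcompanion3}: one must verify, rule by rule against \Cref{fig:bc-type-system} and using the immediate-ancestry colouring of \Cref{defn:ancestry}, that a thread which is eventually constant $\sn$ and periodic along $\pi$ genuinely cannot visit the listed positions — in particular handling the interaction between the $\sql/\sqr$ modality rules and the position of the principal $\sn$ occurrence, and ensuring the "unfold $\pi$ into an infinite branch" construction respects the formalisation of coderivations as labelled subsets of $\{0,1,2\}^*$. The rest is bookkeeping once the unfolding construction and the two global-condition-to-thread contradictions are set up.
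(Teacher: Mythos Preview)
Your unfolding construction and the arguments for item~\ref{enum:budcompanion0}, for the $\cut_\sq$ case of item~\ref{enum:budcompanion1}, and for the rightmost-premise-of-$\cut_\n$ case of item~\ref{enum:budcompanion3} are correct and match the paper's. The gap is in the remaining cases, where the thread-chasing argument does not go through --- not because the rule-by-rule verification is delicate, but because the key claim is false.

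A progressing $\sn$-thread $\tau$ \emph{can} pass through $\sql$, $\wk_\sq$, $\wk_\n$, and the leftmost premises of the conditionals. Take $\sql$: the conclusion is $\sn,\Gamma\seqar A$ and the premise is $\Gamma,\n\seqar A$; only the \emph{principal} $\sn$ turns into $\n$, while any $\sn$ occurrence lying inside $\Gamma$ is carried upward unchanged (same orange colour in \Cref{fig:bc-type-system}). So $\tau$ may simply track a non-principal $\sn$ in $\Gamma$ across the step and remain constant $\sn$. The same happens at $\wk_\sq$ and at the leftmost premise of $\cnd_\sq/\cndl_\sq$; and at $\wk_\n$ and the leftmost premise of $\cnd_\n/\cndl_\n$ no $\sn$ occurrence is touched at all, so an $\sn$-thread passes through trivially. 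Hence the existence of a periodic progressing thread along $B_\pi$ does not by itself exclude these rules from $\pi$.

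The paper's argument is instead a counting one. Having excluded $\cut_\sq$ from $\pi$ (by safety, as you do), observe that along $\pi$ the number of $\sn$ occurrences in the antecedent is monotone non-increasing from conclusion to premise, and strictly decreases precisely at $\sql$, at $\wk_\sq$, and when entering the leftmost premise of $\cnd_\sq/\cndl_\sq$. Since $\rd(\nu)$ and $\nu$ carry the same sequent, this count is equal at the two endpoints of $\pi$, so no strict decrease can occur along it; this yields item~\ref{enum:budcompanion1}. For item~\ref{enum:budcompanion3}, having additionally excluded the rightmost premise of $\cut_\n$ (by left-leaningness, as you do), apply the same monotonicity argument to the number of non-modal $\n$ occurrences in the antecedent to rule out $\wk_\n$ and the leftmost premise of $\cnd_\n/\cndl_\n$.
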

\begin{proof}
By definition of cycle nf, each path from $\rd(\nu)$ to $\nu$ in $\derrd$ is contained in a branch of $\der$ such that each rule instance in the former appears  infinitely many times in the latter. Hence:
\begin{enumerate}
\item[(i)]  if $\der$ is progressing,  the path contains the conclusion of an instance of $\cnd_{\sn}$   \black{or $\cndl_\sq$};
\item[(ii)]  if $\der$ is safe, the path   cannot contain the conclusion of a $\cut_{\sn}$ rule;
\item[(iii)] if $\der$ is left-leaning, the path    cannot contain the rightmost premise of a $\cut_{\n}$ rule.
\end{enumerate}
This shows point~\ref{enum:budcompanion0}. Let us consider  point~\ref{enum:budcompanion1}.  By point~(ii), if $\der$ is safe then,  going from a node $\mu$ of the path to each of its children $\mu'$,  the number of modal formulas in the context of the corresponding sequents cannot increase.  Moreover, the only cases where this number strictly decreases is when $\mu$ is the conclusion of $\sql$,  $\wk_{\sn}$, or  when $\mu'$ is  the leftmost premise of $\cnd_{\sq}$ \black{and of $\cndl_{\sq}$}. Since $\rd(\nu)$ and $\nu$ must be labelled with the same sequent, all such cases are impossible. As for  point~\ref{enum:budcompanion3} we notice that,   by point~(iii) and the above reasoning, if $\der$ is safe and left-leaning then,  going from a node $\mu$ of the path to each of its children $\mu'$,  the number of non-modal formulas in the context of the corresponding sequents cannot increase.  Moreover, the only cases where this number strictly decreases is when $\mu$ is the conclusion of $\wk_{\n}$, or  when $\mu'$ is  the leftmost premise of $\cnd_{\n}$ \black{and of $\cndl_{\n}$}. Since $\rd(\nu)$ and $\nu$ must be labelled with the same sequent, all such cases are impossible. 
\end{proof} 

In what follows we shall indicate circularities in cycle nfs explicitly by extending   \black{$\cbc(F)$} with a new inference rule called $\com$: 
$$
\vlinf{\com}{X}{\Gamma \Rightarrow A}{\Gamma \Rightarrow A}
$$
where $X$ is a finite set of nodes. 
In this presentation, we insist that each companion $\nu$ of the cycle nf $\derrd$ is always the conclusion of an instance of $\com$, where $X$ denotes the set of buds $\nu'$ such that $\rd(\nu')=\nu$.  
This expedient will allow us to formally 
distinguish cases when a node of $T_\der$ is a companion from those where it is the conclusion of a standard rule of $\bcnorec$.

To facilitate the translation, we shall define two disjoint  sets  $\close{\nu}$ and $\open{\nu}$. Intuitively, given a cycle nf $\derrd$ 
 and  $\nu\in T_\der$,  $\close{\nu}$ is the set of  companions above  $\nu$, while   $\open{\nu}$  is the set of  buds whose companion is strictly below   $\nu$.

\begin{defn} \black{Let $F$ be a set of   two-sorted  functions, and let}  $\derrd$ be the  cycle normal form of a \black{$\bcnorec(F)$-coderivation} $\der$. We define the following two sets for any $\nu\in T_\der$:
$$
\begin{aligned}
\close{\nu}&\dfn \{ \mu \in \rd(\bud_\nu(\der))  \ \vert \ \nu \sqsubseteq \mu \}\\
\open{\nu}&\dfn \{ \mu \in \bud_\nu(\der)) \ \vert \ \rd(\mu) \sqsubset \nu \}
\end{aligned}
$$
% Nodes in $\close \nu$ range over $\mu$, nodes in $\open \nu$ range over $\mu$.  
\end{defn}

\black{We are now ready to prove~\Cref{lem:rel-tran-lem}, i.e.,  $\cbc (\safnor R) \subseteq \bcpp ( \safnor R)$ for $R$  a set of relations. Actually, we shall establish a stronger result,~\Cref{lem:translation} below. Given $\der \in \cbc (\safnor R)$,} the  proof proceeds by analysing each node $\hnu\in T_\der$  and associates with it an instance of the scheme  $\ssrecpp$ (\Cref{defn:simultaneous-schemes})  that simultaneously defines the  functions $\{ \model{\pder\nu} \ \vert \ \nu\in \close \nu \cup \{\hnu \}\}$,   with the help of an additional set of oracles $\{ \model{\pder\mu} \ \vert \ \mu\in \open{\hnu}\}$. When $\hnu$ is the root of $\der$, note that $\open{\hnu} = \emptyset$ and so the function thus defined will be oracle-free.
 Thus we obtain an instance of   $\ssrecpp$ defining $\model{\der}$, and so \black{$\model{\der} \in \bcpp(\safnor R)$} by Proposition~\ref{prop:simultaneous-recursion-admissible}.

\begin{lem}[Translation Lemma, general version]\label{lem:translation} \black{Let $R$ be a set of relations. } Given $\derrd$ the cycle nf of a \black{$\cbc(\safnor R)$-coderivation} $\der$, and  $\hnu\in T_\der$:
\begin{enumerate}
    \item \label{enum:1} \black{If  $\open \hnu
   =\emptyset$ then   $\model{\pder \hnu} \in  \bcpp(\safnor R)$. }
    \item \label{enum:2} If  $\open \hnu \neq \emptyset$  then $\forall \nu \in \close  \hnu \cup \{ \hnu\}$:
 $$
    % \label{eqn:translationlemmaequation}
    \begin{aligned}
     \model{\pder \nu}(\vec{x}; \vec{y})&=h_{\nu}(( \lambda \vec{u}\subseteq \vec{x},\black{\lambda \vec v \subseteq \vec y}. \model{\pder \mu}(\vec{u};\vec v))_{\mu \in \close \hnu \cup \open \hnu})(\vec{x}; \vec{y})
    \end{aligned}
$$
    where:
    \begin{enumerate}
    \item \label{enum:a} $h_\nu \in \bcpp( (\model{\pder{\mu}})_{\mu \in \open \hnu}, (a_\mu)_{\mu \in \close \hnu}\black{, \safnor R})$ and hence    $\model{\pder \nu}\in  \bcpp( (\model{\pder \mu})_{\mu \in \open \hnu}\black{, \safnor R})$;
    \item \label{enum:b}  the order $\vec u \subseteq \vec x$ is strict  if either $\nu, \mu \in \close \hnu$ or the path from $\nu$ to $\mu$ in $\pder \hnu$ contains the conclusion of an instance of  $\cnd_{\sq}$ \black{or an instance of $\cndl_\sq$};
    % \item \label{enum:d}  if    $\der$ is a $\cbc$-coderivation  then  $\vec v \subseteq \vec y$.
    \end{enumerate}
    \end{enumerate}
\end{lem}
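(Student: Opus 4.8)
The plan is to proceed by induction on the finite tree $T_\der$, working from the leaves of $T_\der$ down towards the root, so that at node $\hnu$ we may assume the statement for all children of $\hnu$ in $T_\der$. The base cases are the terminal nodes: if $\hnu$ is a leaf of $\der$ it is either an instance of an axiom ($\id$, $0$, $1$) or an initial sequent from $\safnor R$, and in either case $\model{\pder\hnu}$ is directly a function of $\bcpp(\safnor R)$, with $\open\hnu=\emptyset$, giving clause~\ref{enum:1}; if $\hnu$ is a bud, then $\close\hnu=\emptyset$ and $\open\hnu=\{\hnu\}\ne\emptyset$, and the required equation is trivial with $h_{\hnu}$ a projection onto the oracle $a_{\hnu}=\model{\pder{\rd(\hnu)}}$, noting that the path from $\hnu$ to $\hnu$ is empty so clause~\ref{enum:b} is vacuous. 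For the inductive step, I would case on the bottom rule $\rules$ of $\pder\hnu$: if $\rules$ is the rule $\com$ (i.e.~$\hnu$ is a companion), then $\hnu\in\close\mu$ for each associated bud, and one collects the simultaneous system over all $\nu\in\close\hnu\cup\{\hnu\}$; otherwise $\rules$ is a standard $\bcnorec$ rule and we combine the translations of the immediate sub-coderivations via the appropriate $\bcpp$ operation dictated by the semantics of $\rules$ in \Cref{fig:semantics-of-bc}.

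The heart of the argument is bookkeeping the sets $\close{\cdot}$ and $\open{\cdot}$ and the strictness of the guards $\vec u\subseteq\vec x$ as one passes a rule. Going from $\hnu$ to a child $\hnu'$, a companion above $\hnu'$ that lies below $\hnu$ drops out of $\close{\hnu'}$ relative to $\close\hnu$, and symmetrically a bud whose companion is exactly $\hnu$ moves from $\open{\hnu'}$ into the `closed' side once we reach $\hnu$; these moves are exactly what lets us turn the oracles $a_\mu$ of the inductive hypotheses into genuine recursive calls $\model{\pder\mu}$ and finally, at the root, into oracle-free functions. The strictness clause~\ref{enum:b} is maintained using \Cref{prop:structureofcycles}: safety (point~\ref{enum:budcompanion1}) and left-leaning (point~\ref{enum:budcompanion3}) guarantee that the bud-to-companion paths never cross $\cut_\sq$, $\sql$, $\wk_\sq$, $\wk_\n$, the leftmost premises of the conditional rules, or the rightmost premise of $\cut_\n$, so the context of modal and non-modal formulas never grows along such a path; this is precisely what forces the abstracted arguments $\vec u$ to be \emph{permutations of prefixes} of $\vec x$ (hence $\vec u\subseteq\vec x$) rather than something larger, and progressiveness (point~\ref{enum:budcompanion0}) forces a $\cnd_\sq$ or $\cndl_\sq$ along any cycle, yielding the \emph{strict} decrease $\vec u\subset\vec x$ needed for $\srecpp$ to apply. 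The case $\nu,\mu\in\close\hnu$ being strict is handled by composing two such cycle paths.

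For clause~\ref{enum:a}, once the equations are assembled one observes that the system defining $\{\model{\pder\nu}\}_{\nu\in\close\hnu\cup\{\hnu\}}$ is literally an instance of the simultaneous scheme $\ssrecpp$ from \Cref{defn:simultaneous-schemes}, with parameters $\vec b=(\model{\pder\mu})_{\mu\in\open\hnu}$ and oracles $\vec a=(a_\mu)_{\mu\in\close\hnu}$, since the recursive calls are guarded by $\subseteq\vec x$ (and $\subseteq\vec y$ on the safe side) and at least one such call per cycle is strict by the previous paragraph; then \Cref{prop:simultaneous-recursion-admissible} gives $\model{\pder\nu}\in\bcpp((\model{\pder\mu})_{\mu\in\open\hnu},\safnor R)$. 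When $\hnu$ is the root, $\open\hnu=\emptyset$, so this is plainly in $\bcpp(\safnor R)$, which is clause~\ref{enum:1} for $\hnu$ a companion (and the axiom-style leaf cases already gave it directly). The main obstacle I expect is the careful verification that the guard on every recursive call genuinely lands inside $\permprefeq$ as the context is manipulated by the structural and conditional rules — i.e.~turning the qualitative `context does not grow' from \Cref{prop:structureofcycles} into the exact claim that the tuple of arguments passed to a recursive call is a permutation of prefixes of the current argument tuple — together with matching up the positional bookkeeping of $\exch_\n$, $\exch_\sq$ and the $\sq_l$/$\sq_r$ modality shifts; the rest is a routine, if lengthy, rule-by-rule translation mirroring \cite[Lemma~47]{CurziDas}.
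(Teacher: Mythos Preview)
Your proposal is correct and follows essentially the same approach as the paper: a simultaneous induction on $T_\der$ (paper: ``longest distance of $\hnu$ from a leaf of $T_\der$''), casing on the bottom rule, with the $\com$ case as the crux and \Cref{prop:structureofcycles} supplying the structural constraints that force $\permprefeq$ and strictness. Two small points of precision are worth flagging. First, in the $\cut_\sq$ case the paper needs \Cref{prop:cutbox} to strip the safe arguments from the left premise before composing along a normal parameter; your sketch does not mention this, though it is indeed among the ``routine'' rule cases. Second, your remark that strictness for $\nu,\mu\in\close\hnu$ is obtained by ``composing two such cycle paths'' is the right intuition, but the paper realises it concretely by a one-step unfolding: at a companion $\hnu$ with bud set $X$, one substitutes the inductive equations for each $\nu\in\close{\nu'}$ (the child's closed set) into the equation for $\model{\pder{\nu'}}$, and then observes that for any $\mu\in X$ either the $\nu'$-to-$\nu$ or the $\nu$-to-$\mu$ segment already crosses a $\cnd_\sq/\cndl_\sq$, so after substitution every call to an $X$-oracle (hence to $\model{\pder\hnu}$) is strict. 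Finally, your phrase ``$\hnu\in\close\mu$ for each associated bud'' is not literally right (for a bud $\mu$ one has $\close\mu=\emptyset$); what you mean is that $\hnu$ joins $\close\hnu=\close{\nu'}\cup\{\hnu\}$ while the buds in $X$ drop out of $\open{\nu'}$, i.e.\ $\open\hnu=\open{\nu'}\setminus X$.
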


\begin{proof}
Points~\ref{enum:1} and~\ref{enum:2} are proven by simultaneous induction on the longest distance of $\nu_0$ from a leaf of  $T_\der$.  Notice that in the following situations only point~\ref{enum:1} applies:
\begin{itemize}
\item  \black{$\nu$ is the conclusion of an initial sequent $r \in R$.}
    \item  $\nu$ is the conclusion of an instance of $\id$, $\zero$, \black{or $1$};
    \item  $\nu$ is the conclusion of an instance of $\wk_{\n}$, $\wk_{\sq}$, $\sql$ and  $\cut_{\sn}$;
\end{itemize}
In particular, the  last two cases hold by  Proposition~\ref{prop:structureofcycles}.\ref{enum:budcompanion1}-\ref{enum:budcompanion3}, as it must be that $\open{\nu'}=\emptyset$ for any premise $\nu'$ of $\hnu$. Let us discuss the case where  $\hnu$ is the conclusion of a  $\cut_{\sq}$ rule with premises $\nu_1$ and $\nu_2$.  By induction on point~\ref{enum:1} we have $\model{\pder {\nu_1}}(\vec x; \vec y), \model{\pder{\nu_2}}(\vec x, x; \vec y)\in \black{\bcpp(\safnor R)}$. Since the conclusion of $\pder{\nu_1}$  has modal succedent, by  Proposition~\ref{prop:cutbox}  there must be a coderivation $\der^*$ such that $\model{\der^*}(\vec x;)= \model{\pder{\nu_1}}(\vec x; \vec y)\in \black{\bcpp(\safnor R)}$.  Hence, we define $\model{\pder \hnu}(\vec x; \vec y)= \model{\pder{\nu_2}}(\vec x, \model{\der^*}(\vec x;); \vec y)\in \black{\bcpp(\safnor R)}$. 

Let us now  consider   point~\ref{enum:2}. If  $\hnu$ is the conclusion of a bud  then  $\open \hnu= \{ \hnu\}$, $\close \hnu=\emptyset$, and all points hold trivially. The cases where $\hnu$ is an instance of {an initial sequent \black{$r\in R$},} $\wk_\n$, $\exch_\n$, $\exch_\sq$, $\sqr$,  $\succ 0$ or $\succ 1$ are straightforward.  Suppose that $\hnu$ is the conclusion of a $\cnd_\sq$ step with premises $\nu'$, $\nu_1$, and $\nu_2$, and let us  assume $\open{\nu_1} \neq \emptyset$, $\open{\nu_2} \neq \emptyset$. By Proposition~\ref{prop:structureofcycles}.\ref{enum:budcompanion1} we have $\open{\nu'}=\emptyset$, so that \black{$\model{\pder{\nu'}}\in \bcpp(\safnor R)$} by induction hypothesis on point~\ref{enum:1}.  By definition,  $\open \hnu=\open{\nu_1}\cup \open{\nu_2}$ and  $\close \hnu=\close{\nu_1}\cup \close{\nu_2}$. Then, we set:
\[
\arraycolsep=2pt
    \begin{array}{c}
    \model{\pder \hnu}(x, \vec{x}; \vec{y})
= \cnd(;x,  \model{\pder{\nu'}}(\vec{x}; \vec{y}), \model{\pder{\nu_1}}(\pred(x;),\vec{x}; \vec{y}) , \model{\pder{\nu_2}}(\pred(x;), \vec{x}; \vec{y}))
  \end{array}
\]
  where $\pred(x;)$ can be defined from $\pred(;x)$ and projections.  By induction hypothesis on $\nu_i$:
  \[
  \arraycolsep=2pt
  \def\arraystretch{1.2}
      \begin{array}{rcl}
       \model{\pder{\nu_i} }(\pred(x;), \vec{x}; \vec{y}) &
       = & 
       h_{\nu_i}(( \lambda u,\vec{u}\subseteq \pred(x;),\vec{x}, \black{\lambda \vec{v} \subseteq \vec y}. \model{\pder{\mu}}(u,\vec{u};\vec{v}))_{\mu \in \close{\nu_i}\cup \open{\nu_i}})(\pred(x;),\vec{x}; \vec{y})\\
       & =& 
       h_{\nu_i}(( \lambda u,\vec{u}\subset x,\vec{x}, \black{\lambda \vec{v} \subseteq \vec y}. \model{\pder{\mu}}(u,\vec{u};\vec{v}))_{\mu \in \close{\nu_i}\cup \open{\nu_i}})(\pred(x;),\vec{x}; \vec{y})
      \end{array}
\]
  hence 
  \[
  \arraycolsep=2pt
  \begin{array}{c}
        \model{\pder \hnu}(x, \vec{x}; \vec{y})
        =
        h_{\hnu}(( \lambda u,\vec{u}\subset x,\vec{x}, \black{\lambda \vec v \subseteq\vec y}. \model{\pder{\mu}}(u,\vec{u};\vec v))_{\mu \in \close{\hnu}\cup \open{\hnu}})(x,\vec{x}; \vec{y}) 
  \end{array}
  \]
  for some $h_{\hnu}$. By IH,   $h_{\hnu}\in \bcpp((\model{\pder \mu})_{\mu \in \open \hnu}, (a_\mu)_{\mu \in \close \hnu}\black{, \safnor R})$  and $\model{\pder \hnu}\in \bcpp((\model{\pder \mu})_{\mu \in \open \hnu}\black{, \safnor R})$. This shows point~\ref{enum:a}. Point~\ref{enum:b} is trivial.
 
Let us now consider the case where $\hnu$ is an instance of $\cnd_\n$, assuming $\open{\nu_1} \neq \emptyset$ and $\open{\nu_2} \neq \emptyset$. By Proposition~\ref{prop:structureofcycles}.\ref{enum:budcompanion3} we have $\open{\nu'}=\emptyset$, so that $\model{\pder{\nu'}}\in \black{\bcpp(\safnor R)}$ by induction hypothesis on point~\ref{enum:1}.  By definition,  $\open \hnu=\open{\nu_1}\cup \open{\nu_2}$ and  $\close \hnu=\close{\nu_1}\cup \close{\nu_2}$. Then, we set:
\[
\arraycolsep=2pt
\begin{array}{c}
 \model{\pder \hnu}(\vec{x};y,  \vec{y})
 = 
 \cnd(;y, \model{\pder{\nu'}}(\vec{x}; \vec{y}), \model{\pder{\nu_1}}(\vec{x};\pred(;y), \vec{y}) , \model{\pder{\nu_2}}( \vec{x};\pred(;y), \vec{y}))    
\end{array}
\]
By induction hypothesis on $\nu_i$:
  \[
\arraycolsep=2pt
\begin{array}{rcl}
\model{\pder{\nu_i} }(\vec{x}; \pred(;y), \vec{y}) 
&=&
h_{\nu_i}(( \lambda \vec{u}\subseteq x,\vec{x}, \lambda v,\vec{v}\subseteq \vec z. \model{\pder{\mu}}(\vec{u};v,\vec{v}))_{\mu \in \close{\nu_i}\cup \open{\nu_i}})(\vec{x};\vec z)\\
&=& 
h_{\nu_i}(( \lambda \vec{u}\subset \vec{x}, \lambda v,\vec{v}\subset y,\vec{y}. \model{\pder{\mu}}(\vec{u};v,\vec{v}))_{\mu \in \close{\nu_i}\cup \open{\nu_i}})(\vec{x};\vec z)
\end{array}
\]
where $\vec z= \pred(;y), \vec{y}$, and   hence 
   \[
\arraycolsep=2pt
\begin{array}{c}
\model{\pder \hnu}( \vec{x};y, \vec{y})
=
h_{\hnu}((\lambda \vec{u}\subseteq \vec{x}, \lambda v,\vec v\subset y, \vec y. \model{\pder{\mu}}(\vec{u};v,\vec v))_{\mu \in \close{\hnu}\cup \open{\hnu}})(\vec{x};y, \vec{y})
\end{array}
\]
 for some $h_{\hnu}$. Point~\ref{enum:a} and~\ref{enum:b} are given by the induction hypothesis.

\black{The cases for $\cndl_\sq$ and $\cndl_\n$ are similar.}
  
Let us now consider the case where $\hnu$ is the conclusion of an instance of  $\com$  with premise $\nu'$, where  $X$ is the set of nodes labelling the rule. We have  $\open{\hnu}= \open{\nu'} \setminus X$ and $\close{\hnu}=\close{\nu'}\cup\{\hnu \}$. We want to find $(h_\nu)_{\nu \in \close \nu \cup \{ \hnu\}}$ defining the equations for $(\model{\pder \nu})_{\nu \in \close \nu \cup \{ \hnu\}}$   in such a way that points~\ref{enum:a}-\ref{enum:b} hold. We shall start by defining $h_{\hnu}$. First, note that, by definition of cycle nf, $\model{\pder{\hnu}}(\vec{x}; \vec{y}) =\model{\pder{\nu'}}(\vec{x}; \vec{y})=\model{\pder \mu}(\vec{x}; \vec{y})$ for all $\mu \in X$.  By  induction hypothesis on $\nu'$ there exists a family  $(g_{\nu})_{\nu \in \close{\nu'}\cup \{ \nu'\}}$ such that:
\begin{equation}\label{eqn:1}
\arraycolsep=2pt
\begin{array}{c}
     \model{\pder{\nu'}}(\vec{x}; \vec{y})
     = 
      g_{\nu'}(( \lambda \vec{u}\subseteq \vec{x}, \black{\lambda \vec v \subseteq \vec y}. \model{\pder{\mu}}(\vec{u};\vec{v}))_{\mu \in \close{\nu'}\cup \open{\nu'}})(\vec{x}; \vec{y})
\end{array}
\end{equation}
and, moreover, for all $\nu \in \close{\nu'}$:
\begin{equation}\label{eqn:2}
\arraycolsep=2pt
\begin{array}{c}
     \model{\pder \nu}(\vec{x};\vec y) = 
   g_{\nu}(( \lambda \vec{u}\subseteq \vec{x}, \black{\lambda \vec{v}\subseteq \vec y}. \model{\pder \mu}(\vec{u};\vec {v}))_{\mu \in \close{\nu'} \cup \open{\nu'}})(\vec{x}; \vec y)
   \end{array}
\end{equation}
Since $ \open{\nu'}=\open \hnu \cup X$ and  the path from $\nu'$ to any  $\mu\in X$ must cross an instance of $\cnd_{\sq}$ by  Proposition~\ref{prop:structureofcycles}.\ref{enum:budcompanion0}, the induction hypothesis on $\nu'$ (point~\ref{enum:b}) allows us to rewrite~\eqref{eqn:1}  as follows:
\begin{equation}\label{eqn:translation1}
\def\arraystretch{1.2}
\begin{array}{rl}
     \model{\pder{\nu'}}(\vec{x}; \vec{y})
    = 
     g_{\nu'}(&( \lambda \vec{u}\subseteq \vec{x},\black{\lambda \vec{v}\subseteq \vec y}. \model{\pder \nu}(\vec{u};\vec v))_{\nu \in \close{\nu'}}, \\ &( \lambda \vec{u}\subseteq \vec{x}, \black{\lambda \vec{v}\subseteq \vec y}. \model{\pder \mu}(\vec{u};\vec v))_{\mu \in  \open \hnu}, \\ &( \lambda \vec{u}\subset \vec{x}, \black{\lambda \vec{v}\subseteq \vec y}. \model{\pder{\mu}}(\vec{u};\vec v))_{\mu \in  X}\quad )(\vec{x}; \vec{y}) 
\end{array}
\end{equation}
On the other hand, for all $\nu \in \mathcal{C}_{\nu'}$,  for all  $\vec u \subseteq \vec x$ and $\vec{v}$, the equation in~\eqref{eqn:2} can be rewritten as:
\begin{equation}\label{eqn:translationintermediate}
\def\arraystretch{1.2}
   \begin{array}{rl}
\model{\pder \nu}(\vec{u};\vec v)
     = 
      g_{\nu}( &(\lambda \vec{w}\subset \vec{u}, \black{\lambda \vec {v'}\subseteq \vec v}. \model{\pder{\mu}}(\vec{u};\vec {v'}))_{\mu \in \close{\nu'}}, \\
  &   ( \lambda \vec{w}\subseteq \vec{u}, \black{\lambda \vec {v'}\subseteq \vec v}. \model{\pder{\mu}}(\vec{w};\vec {v'}))_{\mu \in  \open{\hnu}}, \\
  & ( \lambda \vec{w}\subseteq \vec{u}, \black{\lambda \vec {v'}\subseteq \vec v}. \model{\pder{\mu}}(\vec{w};\vec {v'}))_{\mu \in  X} \quad ) (\vec{u}; \vec v)
         \end{array}
\end{equation}
and so, for all $\nu \in \close{\nu'}$:
\begin{equation}\label{eqn:3}
\def\arraystretch{1.2}
\hspace{-0.3cm}
    \begin{array}{rl}
 \lambda \vec u \subseteq \vec x, \black{\lambda v\subseteq \vec y}. \model{\pder \nu}(\vec{u};\vec v)
     = 
     \lambda \vec u \subseteq \vec x, \black{\lambda v\subseteq \vec y}.  g_{\nu}( &(\lambda \vec{w}\subset \vec{x}, \black{\lambda \vec {v'}\subseteq \vec v}. \model{\pder{\mu}}(\vec{u};\vec {v'}))_{\mu \in \close{\nu'}}, \\
  &  ( \lambda \vec{w}\subseteq \vec{x}, \black{\lambda \vec {v'}\subseteq \vec v}. \model{\pder{\mu}}(\vec{w};\vec {v'}))_{\mu \in  \open{\hnu}}, \\
  & ( \lambda \vec{w}\subseteq \vec{x}, \black{\lambda \vec {v'}\subseteq \vec v}. \model{\pder{\mu}}(\vec{w};\vec {v'}))_{\mu \in  X} \quad ) (\vec{u}; \vec v)
         \end{array}
\end{equation}
Now, since  the paths from $\nu'$ to any  $\mu \in X$ in $\der$ must contain an instance of $\cnd_{\sq}$ \black{or an instance of $\cndl_\sq$}, for all  $\nu \in \close{\nu'}$ and all  $\mu \in X$, we have that either the path from $\nu'$ to $\nu$  contains an instance of $\cnd_{\sq}$ \black{or an instance of $\cndl_\sq$} or the path from $\nu$ to $\mu$ does. By applying the induction hypothesis on $\nu'$  (point~\ref{enum:b}), given  $\nu \in \close{\nu'}$  and   $\mu \in X$, either  $\lambda \vec{u}\subseteq \vec{x}, \black{\lambda \vec v\subseteq \vec y}. \model{\pder{\nu}}(\vec{u};\vec v)$ in~\eqref{eqn:translation1} is such that $\vec{u}\subset \vec{x}$,   or $\lambda \vec{w}\subseteq \vec{u}, \black{\lambda \vec {v'}\subseteq \vec v}. \model{\pder{\mu}}(\vec{w};\vec {v'})$ in~\eqref{eqn:translationintermediate} is such that $\vec{w}\subset \vec{u}$. This means that, for any $\mu \in X$,   $\lambda \vec{w}\subseteq \vec{x}, \black{\lambda \vec {v'}\subseteq \vec v}. \model{\pder{\mu}}(\vec{w};\vec {v'})$ in~\eqref{eqn:3} is such that $\vec{w}\subset \vec{x}$. For each  $\nu \in \close{\nu'}$, by rewriting $ \lambda \vec{u}\subseteq  \vec{x}, \black{\lambda \vec v \subseteq \vec y}. \model{\pder{\nu}}(\vec{u};\vec v)$  in~\eqref{eqn:translation1} according to the equation in~\eqref{eqn:3} we obtain:
\[
\begin{array}{c}
   \model{\pder{\hnu}}(\vec{x}; \vec{y})
   =
   t_{\hnu}( ( \lambda \vec{u}\subset \vec{x}, \black{\lambda \vec v\subseteq \vec y}. \model{\pder{\mu}}(\vec{u};\vec v))_{\mu \in  \close{\nu'}\cup X}, 
   ( \lambda \vec{u}\subseteq \vec{x}, \black{\lambda \vec v\subseteq \vec y}. \model{\pder{\mu}}(\vec{u};\vec v))_{\mu \in  \open{\hnu}})(\vec{x}; \vec{y})
\end{array}
\]
for some $t_{\hnu}$. Since  $\model{\pder{\mu}}= \model{\pder{\hnu}}$ for all $\mu \in X$,   and since  $\close{\hnu}= \close{\nu'}\cup \{ \hnu\}$, by setting $h_{\hnu}\dfn t_{\hnu}$ the above equation gives us the following:
\begin{equation}\label{eqn:translationintermediate2}
\begin{array}{c}
    \model{\pder{\hnu}}(\vec{x}; \vec{y})
    =
    h_{\hnu}(( \lambda \vec{u}\subseteq \vec{x}, \black{\lambda \vec v\subseteq \vec y}. \model{\pder{\mu}}(\vec{u};\vec v))_{\mu \in \close{\hnu}\cup \open{\hnu}})(\vec{x}; \vec{y})
    \end{array}
\end{equation}
which satisfies point~\ref{enum:b}.   From~\eqref{eqn:translationintermediate2} we are able to find the functions $(h_\nu)_{\nu \in \close \nu}$ defining the equations for $(\model{\pder \nu})_{\nu \in \close \nu}$. Indeed, the induction hypothesis on $\nu'$ gives us~\eqref{eqn:2} for any $\nu \in \close{\nu'}$. We rewrite in each such equation any   $\lambda \vec{u}\subseteq \vec{x}, \black{\lambda \vec{v}\subseteq y}.
\model{\pder{\mu}}(\vec{u};\vec v)$  such that  $\mu \in X$ according to equation~\eqref{eqn:translationintermediate2}, as $\model{\pder{\mu}}=\model{\pder{\hnu}}$ for any $\mu\in X$. We obtain the following equation for any $\nu \in \close{\nu'}$:
\[
\arraycolsep=2pt
\begin{array}{c}
\model{\pder{\nu}}(\vec{x}; \vec{y})
=
     t_\nu( ( \lambda \vec{u}\subset \vec{x}, \black{\lambda \vec v \subseteq \vec y}. \model{\pder{\mu}}(\vec{u};\vec v))_{\mu \in  \close{\nu'}\cup \{\hnu\}}, 
     ( \lambda \vec{u}\subseteq \vec{x}, \black{\lambda \vec v\subseteq \vec y}. \model{\pder{\mu}}(\vec{u};\vec v))_{\mu \in  \open{\hnu}})(\vec{x}; \vec{y})
\end{array}
\]
for some $t_\nu$. Since $\close{\hnu}=\close{\nu'}\cup \{\hnu\}$ and the above  equation satisfies point~\ref{enum:b}, we set $h_{\nu}\dfn t_{\nu}$ and we obtain, for all $\nu \in \close{\hnu}$:
\begin{equation}\label{eqn:4}
   \model{\pder{\nu}}(\vec{x}; \vec{y})= h_\nu( ( \lambda \vec{u}\subseteq \vec{x}, \black{\lambda \vec v\subseteq \vec y}. \model{\pder{\mu}}(\vec{u};\vec v))_{\mu \in  \close{\hnu}\cup \open{\hnu}})(\vec{x}; \vec{y})
\end{equation}
It remains to show that~\eqref{eqn:translationintermediate2} and~\eqref{eqn:4} satisfy point~\ref{enum:a}. On the one hand for all $\nu \in \close{\hnu}$ we  have  $h_{\nu}\in \bcpp((\model{\pder{\mu}})_{\mu \in \open{\hnu}}, (a_\mu)_{\mu \in \close{\hnu}}\black{,\safnor R} )$, with   $(a_\mu)_{\mu \in \close{\hnu}} $  oracle functions.  On the other hand, by applying the induction hypothesis, we have  $\model{\pder{\hnu}}=\model{\pder{\nu'}}\in \bcpp((\model{\pder\mu})_{\mu \in \open{\nu'}}\black{, \safnor R})$ and  $\model{\pder{\nu}}\in \bcpp((\model{\pder\mu})_{\mu \in \open{\nu'}}\black{, \safnor R})$, for all $\nu \in \close{\hnu}$.  Since $\open{\nu'}= \open{\hnu}\cup X$ and  $\model{\pder{\hnu}}= \model{\pder{\mu}}$ for all $\mu \in X$, we have both  $\model{\pder{\hnu}}\in \bcpp((\model{\pder\mu})_{\mu \in \open{\hnu}}\black{,  \safnor R})$ and   $\model{\pder{\nu}}\in \bcpp((\model{\pder\mu})_{\mu \in \open{\hnu}},  \model{\pder\hnu}\black{, \safnor R})$ for all $\nu \in \close{\hnu}$, and hence $\model{\pder{\nu}}\in \bcpp((\model{\pder\mu})_{\mu \in \open{\hnu}}\black{, \safnor R})$, for all $\nu \in \close{\hnu}$. 

Last, suppose that $\hnu$ is the conclusion of an instance of $\cut_{\n}$ with premises $\nu_1$ and $\nu_2$ then:
\begin{equation*}
    \model{\pder{\hnu}}(\vec{x}; \vec{y})= \model{\pder{\nu_2}}(\vec{x}; \model{\pder{\nu_1}}(\vec{x}; \vec{y}), \vec{y}) 
\end{equation*}
\black{By Proposition~\ref{prop:structureofcycles}.\ref{enum:budcompanion3} we have $\open{\nu_2}= \emptyset$, so that $\model{\pder{\nu_2}}(\vec{x};y, \vec{y})\in \bcpp(\safnor R)$ by applying the induction hypothesis on $\nu_1$}. We shall only consider the case where $\open{\nu_1}\neq \emptyset$. Then, \black{$\open{\hnu}=\open{\nu_1}$ and  $\close{\hnu}=\close{\nu_1}$}. 
% By induction hypothesis on $\nu_1$:
% \[
% \arraycolsep=2pt
% \begin{array}{c}
%      \model{\pder{\nu_1}}(\vec{x}; \vec{y})
%      =
%      h_{\nu_1}(( \lambda \vec{u}\subseteq \vec{x}, \black{\lambda \vec v\subseteq \vec y}. \model{\pder{\mu}}(\vec{u};\vec v))_{\mu \in \close{\nu_1}\cup \open{\nu_1}})(\vec{x}; \vec{y})\end{array}
% \]
Moreover,  induction hypothesis on $\nu_1$:
      \[
\arraycolsep=2pt
\begin{array}{c}
       \model{\pder{\nu_1}}(\vec{x};\vec{y})  = 
       h_{\nu_1}(( \lambda \vec{u}\subseteq \vec{x}, \black{\vec v\subseteq  \vec y}. \model{\pder{\mu}}(\vec{u};v,\vec v))_{\mu \in \close{\nu_1}\cup \open{\nu_1}})(\vec{x}; \vec{y})
\end{array}
\]
So that:
\begin{equation*}
% \label{eqn:5}
     \model{\pder{\hnu}}(\vec{x}; \vec{y})=h_{\hnu}(( \lambda \vec{u}\subseteq \vec{x}, \black{\lambda \vec v\subseteq \vec y}. \model{\pder{\mu}}(\vec{u};\vec v))_{\mu \in \close{\hnu}\cup \open{\hnu}})(\vec{x}; \vec{y})
\end{equation*}
for some $h_{\hnu}$. Points~\ref{enum:a} and~\ref{enum:b}  hold by applying the induction hypothesis. 
\end{proof}

\section{Proof of~\Cref{lem:rel-bou-lem}}

% \begin{lem}
%  [Relational Bounding lemma]
%  Let $R$ is a set of relations, i.e.\ functions with outputs of constant size.
%  If $f(\vec r)(\vec x;\vec y) \in \bcpp (R)$ then there is a polynomial $p_f(\vec n)$ such that, writing $m_f(\vec x,\vec y) = p_f(|\vec x|) + \max |\vec y|$, we have:
%  \begin{itemize}
%      \item $|f(\vec r)(\vec x; \vec y)| < m_f(\vec x, \vec y)$
%      \item $f(\vec r)(\vec x;\vec y) = f(\lambda |\vec u_i|,|\vec v_i| <m_f(\vec x,\vec y). r_i(\vec u_i;\vec v_i))_i (\vec x;\vec y)$
%  \end{itemize}
% \end{lem}

Let us employ the notation $\sumlen{\vec x} \dfn \sum |\vec x|$ throughout this section.  We  will prove the following stronger statement. 
 
\begin{lem}[Bounding lemma, general version]
Let $f(\vec a)(\vec x;\vec y) \in \bcpp(\vec a, R)$, with $\vec a = a_1, \dots, a_k$ oracles and $R$ a set of relations. There is a function $m_f^{\vec c}(\vec x,\vec y)$ with,
$$
m_f^{\vec c}(\vec x,\vec y) = 
p_f(\sumlen{\vec x}) + \sum\vec c + \max |\vec y|
$$
for a monotone polynomial  $p_f(n)$ such that whenever there are constants $\vec c=c_1, \dots, c_k$ such that,
\begin{equation}
    \label{eq:oracle-const-max-bound}
    |a_i (\vec x_i;\vec y_i)| \leq c_i + \sum_{j\neq i} c_j + \max|\vec y_i|
\end{equation}
% for some constants $\vec c =c_1, \ldots, c_k$, then 
for $1\leq i \leq k$,
we have:\footnote{To be clear, here we write $|\vec y_i| \leq m_f^{\vec c}(\vec x,\vec y)  $ here as an abbreviation for $\{|y_{ij}| \leq m_f^{\vec c}(\vec x,\vec y)\}_j$.}
\begin{equation}
    \label{eq:elem-const-max-bound}
    |f(\vec a)(\vec x;\vec y)| \leq m_f^{\vec c}(\vec x,\vec y)
\end{equation}
\begin{equation}
    \label{eq:input-bounding-eqn}
    \begin{array}{rcl}
        f(\vec a)(\vec x; \vec y)    & = & f\left(\lambda \vec x_i.\lambda |\vec y_i| \leq m_f^{\vec c}(\vec x,\vec y) . a_i(\vec x_i;\vec y_i)\right)_i(\vec x;\vec y) \\
             f(\vec a)(\vec x; \vec y)   &  = & f\left(\lambda \vec x_r.\lambda |\vec y_r| \leq m_f^{\vec c}(\vec x,\vec y) . r(\vec x_r;\vec y_r)\right)_{r \in R}(\vec x;\vec y)
    \end{array}
\end{equation}
\end{lem}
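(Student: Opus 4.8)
The plan is to establish the two conclusions \eqref{eq:elem-const-max-bound} and \eqref{eq:input-bounding-eqn} simultaneously, by structural induction on the construction of $f(\vec a)(\vec x;\vec y)$ in $\bcpp(\vec a, R)$, letting the polynomial $p_f$ be built from the polynomials supplied by the inductive hypotheses for the immediate subfunctions. This is, in essence, the argument of \cite[Lemma~38]{CurziDas} (itself an elaboration of Bellantoni and Cook's `polymax bounding lemma' from \cite{BellantoniCook}), the only genuinely new ingredient being the initial relations $r\in R$. Those are harmless: since $|r(\vec x_r;\vec y_r)|\le 1$ always, one may insist that every $p_f$ has constant term at least $1$; a relation $r$, viewed as an initial function, then satisfies \eqref{eq:elem-const-max-bound} outright, while \eqref{eq:input-bounding-eqn} holds vacuously because the length restriction never bites ($m_f^{\vec c}(\vec x_r,\vec y_r)\ge\max|\vec y_r|$). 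In the inductive steps the block $R$ is threaded through exactly as the oracle block $\vec a$, but contributes nothing to $\sum\vec c$, so henceforth I would speak only of $\vec a$. The body statement \cref{lem:rel-bou-lem} is then the special case $\vec a=\emptyset$.

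The routine cases go as in \cite{CurziDas}. For the initial functions of $\bc$ the bounds hold with an explicit linear $p_f$, by inspection; for an oracle $a_i$ the growth bound is precisely the hypothesis \eqref{eq:oracle-const-max-bound}, and the continuity bound is immediate. For relativised safe composition along a safe parameter, $f(\vec x;\vec y)=h(\vec x;\vec y,g(\vec x;\vec y))$: the value of $g$ enters $h$ in safe position, so by the inductive hypotheses $|f(\vec x;\vec y)|\le p_h(\sumlen{\vec x})+\sum\vec c+\max(|\vec y|,m_g^{\vec c}(\vec x,\vec y))$, which has the prescribed shape once $p_f$ is chosen to dominate $p_h+p_g$; the continuity bound follows identically, using monotonicity of the bounding functions in all arguments. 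For composition along a normal parameter, $f(\vec x;\vec y)=h(\vec x,g(\vec x;);\vec y)$ with $g$ free of the oracles $\vec a$, the inductive hypothesis gives $|g(\vec x;)|\le p_g(\sumlen{\vec x})$ with no $\sum\vec c$ and no $\max|\vec y|$ summand (since $g$ has no safe inputs and queries no $a_i$), and feeding this into the \emph{normal} argument of $h$ gives $|f(\vec x;\vec y)|\le p_h(\sumlen{\vec x}+p_g(\sumlen{\vec x}))+\sum\vec c+\max|\vec y|$, still polynomial in $\sumlen{\vec x}$.

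The substantial case is safe recursion on $\permpref$, where $f(\vec x;\vec y)=h\bigl(\lambda\vec u\permpref\vec x,\lambda\vec v\permprefeq\vec y.\,f(\vec u;\vec v)\bigr)(\vec x;\vec y)$. Here I would run an inner well-founded induction along $\permpref$, treating the recursive call as an extra oracle $a$ for $h$. The key point — and the one real obstacle — is that the constant bounding this recursion-oracle is not genuinely constant but grows with the recursion depth, so it cannot be plugged naively into the structural hypothesis for $h$; the resolution is the classical accumulation argument, exploiting that $\permpref$ is \emph{short}. Concretely, note that $\sumlen{\cdot}$ strictly decreases along every $\permpref$-descending chain, so the longest such chain from $\vec u$ has at most $\sumlen{\vec u}+1$ nodes; call this $\rho(\vec u)$, and observe $\rho(\vec u')\le\rho(\vec u)-1$ whenever $\vec u'\permpref\vec u$. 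I would then prove by $\permpref$-induction the refined invariant $|f(\vec u;\vec v)|\le \rho(\vec u)\cdot p_h(\sumlen{\vec u})+\sum\vec c+B+\max|\vec v|$, where $B$ bounds the $\permpref$-minimal computations: applying the structural hypothesis for $h$ at $\vec u$ with the recursion-oracle constant $c_a\dfn(\rho(\vec u)-1)\cdot p_h(\sumlen{\vec u})+B$ (legitimate by the inner induction and $\sumlen{\vec u'}\le\sumlen{\vec u}$) yields exactly $\rho(\vec u)\cdot p_h(\sumlen{\vec u})+\sum\vec c+B+\max|\vec v|$. Since $\rho(\vec u)\le\sumlen{\vec u}+1$, setting $p_f(m)\dfn (m+1)\cdot p_h(m)+B$ gives a polynomial with $m_f^{\vec c}(\vec x,\vec y)=p_f(\sumlen{\vec x})+\sum\vec c+\max|\vec y|$ as required. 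The continuity statement \eqref{eq:input-bounding-eqn} for $f$ then drops out: every query to an $a_i$ or $r\in R$ made anywhere in the unfolding is, by the inner induction together with the continuity hypothesis for $h$ and monotonicity of the bounding functions, made at a safe argument of length $\le m_h^{\vec c,c_a}(\vec x,\vec y)\le m_f^{\vec c}(\vec x,\vec y)$ (and the statement imposes no constraint on normal arguments of oracles).

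Thus the only delicate step is the polynomial accounting in the recursion case: one must find the right inner invariant — one whose "effective constant" strictly decreases along $\permpref$ — which in turn rests on the fact that $\permpref$-descending chains from $\vec x$ have length linear in $\sumlen{\vec x}$ and, crucially, that this length bound depends only on $\sumlen{\vec x}$ and not on the safe inputs $\vec y$, so it can be absorbed into $p_f(\sumlen{\vec x})$ without disturbing the "$+\max|\vec y|$" summand that safe recursion is designed to preserve. Everything else is the bookkeeping already present in \cite[Lemma~38]{CurziDas}, now carrying the relations $R$ along as a trivial extra block of oracles; the simultaneous-recursion variant needed downstream (\cref{prop:simultaneous-recursion-admissible}) inherits the same bounds by the same analysis.
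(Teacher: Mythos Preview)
Your plan is essentially the paper's own argument: structural induction on the definition of $f$, with an inner induction on $\sumlen{\vec x}$ for the $\srecpp$ case. Your chain-length function $\rho(\vec u)$ is just $\sumlen{\vec u}+1$, so your ``accumulation'' is exactly the paper's sub-induction; they set $p_f(n):=n\cdot p_h(n)$, note $p_f(n)\ge p_h(n)+p_f(n-1)$, and take the recursion-oracle constant to be $c:=p_f(\sumlen{\vec x}-1)$, which is your $c_a$ without the superfluous $B$.

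The one place your sketch is looser than the paper is the safe--safe composition step. As you wrote it, the bound
\[
p_h(\sumlen{\vec x})+\textstyle\sum\vec c+\max\bigl(|\vec y|,\,m_g^{\vec c}(\vec x,\vec y)\bigr)
\]
unfolds to $(p_h+p_g)(\sumlen{\vec x})+2\sum\vec c+\max|\vec y|$, which is \emph{not} of the prescribed shape: the coefficient of $\sum\vec c$ has doubled, and $p_f$ cannot absorb it since $p_f$ must be independent of $\vec c$. The paper is explicit here: it records that at most one of $g,h$ carries the oracles $\vec a$ (writing $\delta_g+\delta_h\le 1$), so only a single $\sum\vec c$ survives and $p_f:=p_h+p_g$ suffices. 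You should invoke this restriction rather than assert the shape.
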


\begin{rem}
Notice that~\Cref{lem:rel-bou-lem} is obtained by setting $\vec{a}=\emptyset$. Notice also that, in the case when $f(\vec x;\vec y)$ is just, say, $a_i(\vec x;\vec y)$, we may choose to set $\vec a = a_i$ or $\vec a = a_1,\dots, a_n$ in the above lemma, yielding different bounds in each case.
We shall exploit this in inductive hypotheses  in the proof that follows (typically when we write `WLoG').
\end{rem}

\begin{proof}
We prove~\Cref{eq:elem-const-max-bound} and~\Cref{eq:input-bounding-eqn} by induction on the definition of $f(\vec x;\vec y)$, always assuming that we have $\vec c$ satisfying \Cref{eq:oracle-const-max-bound}.

Let us start with~\Cref{eq:elem-const-max-bound}. If $f(\vec x;\vec y)$ is an initial function \black{or $f \in R$}  then it suffices to set $p_f(n) \dfn 1 +  n$.

If $f(\vec x;\vec y) = a_i(\vec x;\vec y)$ then it suffices to set $p_f(n) \dfn 0 $.

If $f(\vec x;\vec y) = h(\vec x;\vec y, g(\vec x;\vec y))$, let $p_h$ and $p_g$ be obtained from the inductive hypothesis. Notice that either $g$ or $h$ does not have oracles \black{in $\vec{a}$} by~\Cref{defn:bsubseteq}, so that:
$$
\arraycolsep=2pt
\begin{array}{rcl}
    |f(\vec x;\vec y)| & = & |h(\vec x; \vec y, g(\vec x; \vec y))| \\
    & \leq & p_h(\sumlen{\vec x}) + \sum\vec c + \max(|\vec y|, |g(\vec x;\vec y)|) \\
    & \leq & p_h(\sumlen{\vec x}) + \black{\delta_h}\sum\vec c + p_g(\sumlen{\vec x}) + \black{\delta_g}\sum \vec c + \max|\vec y|
    % \\
    % & \leq & (e_h(\sumlen{\vec x}) + e_g(\sumlen{\vec x})) + (d_h+d_g)\sum \vec c + \max |\vec y
\end{array}
$$
\black{where $\delta_g+\delta_h\in\{0,1\}$ such that  $\delta_g+\delta_h=1$}, so we may set $p_f(n) \dfn p_h(n) + p_g(n)$.

If $f(\vec x;\vec y) = h(\vec x, g(\vec x;);\vec y)$, let $p_h$ and $p_g$ be obtained from the inductive hypothesis. Note that, by definition of Safe Composition along a normal parameter, we must have that $g$ has no oracles \black{in $\vec a$ by~\Cref{defn:bsubseteq}}, and so in fact $|g(\vec x;)| \leq p_g(\sumlen{\vec x})$.
We thus have,
$$
\arraycolsep=2pt
\begin{array}{rcl}
    |f(\vec x;\vec y)| & = & |h(\vec x, g(\vec x;);\vec y)| \\
     & \leq & p_h(\sumlen{\vec x}, |g(\vec x;)|) + \sum \vec c + \max|\vec y| \\
     & \leq & p_h(\sumlen{\vec x}, p_g(\sumlen{\vec x})) + \sum \vec c + \max|\vec y| 
\end{array}
$$
so we may set $p_f(n)\dfn p_h(n, p_g(n))$.

Finally, if $f(\vec x;\vec y) = h(\lambda \vec u \permpref \vec x, \lambda \vec v\subseteq \vec y. f(\vec u;\vec v) ) (\vec x;\vec y)$, let $p_h$ be obtained from the inductive hypothesis.
We claim that it suffices to set  $p_f(n) \dfn n p_h(n)$. Notice that, by monotonicity:
\begin{equation}
 \label{eq:bnd-lem-rec-fn-inv}
    p_f(n) \geq p_h(n)+ p_f(n-1)
\end{equation}
for any $n>0$.

% First, let us calculate the following invariant, for $n>0$:
% \begin{equation*}
%     \arraycolsep=2pt
%     \begin{array}{rcll}
%     e_f(n) & = & nd_h^n e_h(n) &  \\
%         & = & d_h^n e_h(n) + (n-1)d_h^n e_h(n) & \\
%         & = & d_h^n e_h(n) + d_h (n-1) d_h^{n-1}e_h(n) \\
%         & \geq & e_h(n) + d_h (n-1) d_h^{n-1}e_h(n) & \text{$d_h\geq 1$} \\
%         & \geq & e_h(n) + d_h (n-1) d_h^{n-1}e_h(n-1) \  &  \text{$e_h$~monotone} \\
%         & \geq & e_h(n) + d_h e_f(n-1) & \text{def.~of $e_f$}
% \end{array}
% \end{equation*}
% Hence:
% \begin{equation}
%  \label{eq:bnd-lem-rec-fn-inv}
%      e_f(n) \geq  e_h(n) + d_h e_f(n-1) 
% \end{equation}

Now, to show  \Cref{eq:elem-const-max-bound} we proceed by a sub-induction on $\sumlen{\vec x}$.
For the base case, when $\sumlen{\vec x} = 0 $ (and so, indeed, $\vec x = \vec 0$), note simply that $\lambda \vec u\permpref \vec x, \lambda \vec v . f(\vec u;\vec v)$ is the constant function $0$, and so we may appeal to the main inductive hypothesis for $h(a)$ setting the corresponding constant $c$ for $a$ to be $0$ to obtain,
$$
\arraycolsep=2pt
\begin{array}{rcl}
    |f(\vec 0;\vec y)| & = & |h(0)(\vec 0;\vec y)| \\
        & \leq & p_h(0) + \sum\vec c + \max |\vec y| \\
        & \leq & p_f(0) + \sum\vec c + \max |\vec y|
\end{array}
$$
as required.
For the sub-inductive step,  let $\sumlen{\vec x} > 0$. 
Note that, whenever $\vec u \permpref \vec x$ we have $\sumlen{\vec u} < \sumlen{\vec x} $ and so, by the sub-inductive hypothesis and monotonicity of $p_f$ we have:
$$
|f(\vec u;\vec v)| \leq p_f(\sumlen {\vec x} -1) +  \sum \vec c + \max |\vec v|
$$
Now we may again appeal to the main inductive hypothesis for $h(a)$ by setting $c= p_f(\sumlen{\vec x}  -1)$ to be the corresponding constant for $a = \lambda \vec u \permpref \vec x, \lambda \vec v. f(\vec u;\vec v)$. 
We thus obtain:
$$
\arraycolsep=2pt
\begin{array}{rclll}
     |f(\vec x;\vec y)| & = & |h(\lambda \vec u \permpref \vec x, \lambda \vec v\subseteq \vec y . f(\vec u;\vec v)) (\vec x;\vec y)| \\
         & \leq & p_h (\sumlen{\vec x}) + c + \sum\vec c + \max |\vec y| & & \text{main IH} \\
         & \leq & ( p_h (\sumlen{\vec x}) + p_f(\sumlen{\vec x} -1) ) 
         + \sum\vec c + \max |\vec y| & & \text{def.~of $c$}\\
         & \leq & p_f(\sumlen{\vec x}) + \sum \vec c + \max|\vec y| && \eqref{eq:bnd-lem-rec-fn-inv}
\end{array}
$$
Let us now prove~\Cref{eq:input-bounding-eqn}, and let $p_f$ be constructed by induction on $f$ as above. We proceed again by induction on the definition of $f(\vec a)(\vec x;\vec y)$, always making explicit the oracles of a function.

The initial functions and oracle calls are immediate, due to the `$\max |\vec y|$' term in \Cref{eq:input-bounding-eqn}. 

If $f(\vec a)(\vec x; \vec y) = h(\vec a)(\vec x; \vec y, g(\vec a)(\vec x; \vec y))$ then, by the inductive hypothesis for $h(\vec a)$, any oracle call from $h(\vec a)$ only takes safe inputs of lengths:
    $$
    \begin{array}{rll}
        \leq & p_h(\sumlen{\vec x}) + \sum \vec c + \max (|\vec y|, |g(\vec a)(\vec x;\vec y)|) \\
        \leq & p_h(\sumlen{\vec x})+ \sum \vec c + p_g(\sumlen{\vec x}) + \sum \vec c + \max |\vec y| & \text{\eqref{eq:elem-const-max-bound}} \\
        \leq &  (p_h(\sumlen{\vec x} + p_g(\sumlen{\vec x})) + \sum\vec c + \max |\vec y|\\
         \leq & p_f(\sumlen{\vec x}) + \sum\vec c + \max|\vec y|
    \end{array}
    $$
    Note that any oracle call from $g(\vec a)$ will still only take safe inputs of lengths $\leq p_g(\sumlen{\vec x}) + \sum \vec c + \max |\vec y|$, by the inductive hypothesis, and $p_g$  is bounded above by $p_f$.
    
If $f(\vec a)(\vec x;\vec y) = h(\vec a)(\vec x, g(\emptyset)(\vec x;);\vec y)$ then, by the inductive hypothesis, any oracle call will only take safe inputs of lengths:
\[
\arraycolsep=2pt
\begin{array}{rlll}
    \leq & p_h(\sumlen{\vec x} + |g(\emptyset)(\vec x;)|) + \sum\vec c + \max |\vec y| \\
    \leq & p_h(\sumlen{\vec x} + p_g(\sumlen{\vec x})) + \sum\vec c + \max |\vec y| & &\eqref{eq:elem-const-max-bound}  \\
    \leq & p_f(\sumlen{\vec x}) + \sum \vec c + \max |\vec y|
\end{array}
\]

Last, suppose $f(\vec a)(\vec x;\vec y)= h(\vec a, \lambda \vec u\permpref \vec x, \lambda \vec v\subseteq \vec y . f(\vec a)(\vec u;\vec v))(\vec x;\vec y)$. 
We proceed by a sub-induction on $\sumlen{\vec x}$.
Note that, since $\vec u \permpref \vec x \implies \sumlen{\vec u} <\sumlen{\vec x}$, we immediately inherit from the inductive hypothesis the appropriate bound on safe inputs for oracle calls from $\lambda \vec u\permpref \vec x, \lambda \vec v\subseteq \vec y. f(\vec a)(\vec u;\vec v)$.

Now, recall from the proof of~\eqref{eq:elem-const-max-bound} that whenever $\vec u \permpref \vec x$ (and so $\sumlen{\vec u} < \sumlen{\vec x}$), we have $|f(\vec u;\vec v)| \leq p_f(\sumlen{\vec x}-1) + \sum \vec c + \max |\vec v|$.
So by setting $c = p_f(\sumlen{\vec x}-1)$ in the inductive hypothesis for $h(\vec a,a)$, with $a = \lambda \vec u\permpref \vec x, \lambda \vec v\subseteq \vec y. f(\vec a)(\vec u;\vec v)$, any oracle call from $h(\vec a,a)$ will only take safe inputs of lengths:
$$
\arraycolsep=2pt
\begin{array}{rlll}
    \leq & p_h(\sumlen{\vec x}) + p_f(\sumlen{\vec x}-1) + \sum \vec c + \max |\vec y| \\
    \leq & p_f(\sumlen{\vec x}) +  \sum \vec c + \max |\vec y| & & \text{\eqref{eq:bnd-lem-rec-fn-inv}}
\end{array}
$$
% Now, for an oracle call from $\lambda \vec u\permpref \vec x, \lambda \vec v. f(\vec a)(\vec u;\vec v)$ we may simply apply the same reasoning again, essentially proceeding by induction on $\sumlen{\vec x}$.
This completes the proof.  
\end{proof}

\section{Proof of~\Cref{prop:relativised-characterisation}}
In this section we prove~\Cref{prop:relativised-characterisation}, i.e.,  that $\bcpp(\safnor R) \subseteq \fptime(R)$, for $R$  a set of relations. We proceed by induction on the definition of $f(\vec x;\vec y)$.

Each initial function is polynomial-time computable, and each (relativised) complexity class considered is under composition, so it suffices to only consider the respective recursion schemes. Suppose we have $h(a)(\vec x;\vec y) \in \bcpp(a, \vec a\black{, \safnor R})$ and let:
$$
f(\vec x;\vec y) = h(\lambda \vec u \subset \vec x, \lambda \vec v \subseteq \vec y. f(\vec u; \vec v))(\vec x;\vec y)
$$
We start by making some observations:
\begin{enumerate}
    \item\label{item:poly-growth-rate} First, note that $|f(\vec x;\vec y)| \leq p_f(|\vec x|) +\sum \vec c + \max |\vec y| $, by~\Cref{lem:rel-bou-lem}, and so $|f(\vec x;\vec y)|$ is polynomial in $|\vec x,\vec y|$.
%     \begin{itemize}
% \item $|f(\vec x;\vec y)| \leq e_f(|\vec x|) +d_f\sum \vec c + \max |\vec y| $ by the Bounding \Cref{lem:boundinglemma}.
%     \item $h(a)(\vec x;\vec y) \in \fptime(a,\vec a)$, by the inductive hypothesis, whenever $|a(\vec x;\vec y)| \leq c + d_f \sum \vec c + \max |\vec y|$, cf.~\Cref{eq:oracle-const-max-bound}.
%     In particular, by the above point, this holds when we set $a = \lambda \vec u \permpref \vec x, \lambda \vec v \permprefeq \vec y . f(\vec u; \vec v)$, by setting $c = e_f(\sumlen{\vec x}-1)$.
% \end{itemize}
\item\label{item:poly-many-pp} Second, note that the set $[\vec x;\vec y] \dfn \{(\vec u,\vec v) \ | \ \vec u \permpref \vec x , \vec v \permprefeq \vec y\} $ has size polynomial in $|\vec x,\vec y|$:
\begin{itemize}
    \item write $\vec x = x_1 , \dots, x_m$ and $\vec y = y_1, \dots, y_n$.
    \item Each $x_i$ and $y_j$ have only linearly many prefixes, and so there are at most $|x_1|\cdot \cdots \cdot |x_m||y_1| \cdot \cdots \cdot |y_n| \leq \sumlen{\vec x,\vec y}^{m+n}$ many choices of prefixes for all the arguments $\vec x, \vec y$. 
    (This is a polynomial since $m$ and $n$ are global constants).
    \item Additionally, there are $m!$ permutations of the arguments $\vec x$ and $n!$ permutations of the arguments $\vec y$.
    Again, since $m$ and $n$ are global constants, we indeed have
    $|[\vec x;\vec y]| = O(\sumlen{\vec x, \vec y}^{m+n})$, which is polynomial in $|\vec x,\vec y|$.
\end{itemize}
\end{enumerate}

We describe a polynomial-time algorithm for computing $f(\vec x;\vec y)$ (over oracles $\vec a$\black{, $\safnor R$}) by a sort of `course-of-values' recursion on the order $\permpref \times \permprefeq $ on $[\vec x;\vec y]$.

First, for convenience, temporarily extend $\permpref \times \permprefeq$ to a total well-order on $[\vec x;\vec y]$, and write $S$ for the associated successor function.
    Note that $S$ can be computed in polynomial-time from $[\vec x;\vec y]$.

Define $F(\vec x,\vec y) \dfn \langle f(\vec u;\vec v)\rangle_{\vec u \permpref \vec x, \vec v \permprefeq \vec y}$, i.e.\ it is the graph of $\lambda \vec u \permpref \vec x, \lambda \vec v \permprefeq \vec y . f(\vec u;\vec v)$ that we shall use as a `lookup table'.
    Note that $|F(\vec x,\vec y)| $ is polynomial in $|\vec x,\vec y|$ by \Cref{item:poly-growth-rate} and \Cref{item:poly-many-pp} above.
    Now, we can write:\footnote{Here, as abuse of notation, we are now simply identifying $F(\vec x;\vec y)$ with $\lambda \vec u \permpref \vec x, \lambda \vec v \permprefeq \vec y . f(\vec x;\vec y)$.}
    \[
    \arraycolsep=2pt
    \begin{array}{c}
        F(S(\vec x,\vec y))  =  \langle f(S(\vec x,\vec y)), F(\vec x,\vec y) \rangle 
         =  \langle h(F(\vec x ,\vec y))(\vec x;\vec y),F(\vec x,\vec y)\rangle
    \end{array}
    \]
    Again by \Cref{item:poly-many-pp} (and since $F$ is polynomially bounded), this recursion terminates in polynomial-time.
    We may now simply calculate $f(\vec x;\vec y)$ as $h(F(\vec x,\vec y))(\vec x;\vec y)$.
    
% The argument for $\nbcpp$ is similar, though we need not be as careful about computing the size of the lookup tables ($F$ above) for recursive calls.
% The key idea is to use the Bounding Lemma (\Cref{eq:elem-const-max-bound}) to bound the safe inputs of recursive calls so that we can adequately store the lookup table for previous values.

% \red{HERE BELOW I AM CONSTRUCTING THE EXAMPLE FOR PFpoly}

% $$
% \footnotesize
% \vlderivation{
% \vliin{\cnd'_\sq}{\bullet}{\red{\sn} \seqar \n}{\vlin{0}{}{\seqar \n}{\vlhy{}}}
%   {
%   \vliin{\cut_{\n}}{}{\red{\sn} \seqar \n}
%       {
%       \vlin{\succ 0}{}{\red{\sn} \seqar \blue{\n}}{\vlin{\cnd'_\sq}{\bullet}{\red{\sn} \seqar \n}{\vlhy{\vdots}}}
%       }{
%       \vliin{\cut_{\n}}{}{\red{\sn}, \blue{\n} \seqar \n }
%       {
%       \vlin{\succ 1}{}{\red{\sn} \seqar \green{\n}}{\vlin{\cnd'_\sq}{\bullet}{\red{\sn} \seqar \n}{\vlhy{\vdots}}}
%       }{
%       \vliin{\cut_{\n}}{}{\red{\sn}, \blue{\n} , \green{\n} \seqar \n}
%       {
%       \vltr{r}{\red{\sn} \seqar \orange{\n}}{\vlhy{}}{\vlhy{}}{\vlhy{}}
%       }{
%       \vliiin{\cnd_\n}{}{\orange{\n} , \blue{\n} , \green{\n} \seqar \n }{\vlin{\id}{}{\blue{\n} \seqar \n }{\vlhy{}}}{\vlin{\id}{}{\blue{\n} \seqar \n }{\vlhy{}}}{\vlin{\id}{}{\green{\n} \seqar \n }{\vlhy{}}}
%       }
%       }
%       }
%   }
% }
% $$
% $$
% \begin{array}{rcl}
%   \denot \apx (0; )   & = & 0 \\
%     \denot \apx (\succ i x; ) & = & \cnd (; r(x), \succ 0( \apx (\succ i x; ), \succ 0( \apx (\succ i x; )), \succ 1( \apx (\succ i x; ))))
% \end{array}
% $$
%%%%%%%%%%%%%%%%%%%%%%%%%%

% \input{app-garbage}

\end{document}